\newtheorem{lemma}{Lemma}[section]
\newtheorem{theorem}{Theorem}[section]
\newtheorem{definition}{Definition}[section]
\newtheorem{remark}{Remark}[section]
\newtheorem{corollary}{Corollary}[section]
\newtheorem{proposition}{Proposition}[section]
\numberwithin{equation}{section}
\title{Scattering for the Wave Equation on de Sitter Space \\ in All Even Spatial Dimensions}
\author{Serban Cicortas\footnote{Princeton University, Department of Mathematics, Fine Hall, Washington Road, Princeton, NJ 08544, USA}}
\begin{document}

\maketitle

\begin{abstract}
    For any $n\geq4$ even, we establish a complete scattering theory for the linear wave equation on the $(n+1)$-dimensional de Sitter space. We prove the existence and uniqueness of scattering states, and asymptotic completeness. Moreover, we construct the scattering map taking asymptotic data at past infinity $\mathcal{I}^-$ to asymptotic data at future infinity $\mathcal{I}^+$. Identifying $\mathcal{I}^-$ and $\mathcal{I}^+$ with $S^n,$ we prove that the scattering map is a Banach space isomorphism on $H^{s+n}(S^n)\times H^{s}(S^n),$ for any $s\geq1.$

    The main analysis is carried out at the level of the model equation obtained by differentiating the linear wave equation $\frac{n}{2}$ times in the time variable. The main result of the paper follows from proving a scattering theory for this equation. In particular, for the model equation we construct a scattering isomorphism from asymptotic data in $H^{s+\frac{1}{2}}(S^n)\times H^s(S^n)\times H^s(S^n)$ to Cauchy initial data in $H^{s+\frac{1}{2}}(S^n)\times H^{s+\frac{1}{2}}(S^n)\times H^{s-\frac{1}{2}}(S^n)$.
\end{abstract}

\textbf{Keywords} Wave Equation, de Sitter Space, Scattering Theory, Hyperbolic PDEs on Manifolds.

\textbf{Mathematics Subject Classification} 58J45, 35L05, 35P25, 35Q75.

\section{Introduction}

For any $n\geq3,$ we consider the $(n+1)$-dimensional de Sitter space $\big(\mathbb{R}\times S^n,g_{dS}\big)$ with metric:
\begin{equation}\label{de Sitter metric}
    g_{dS}=-dT^2+\cosh^2Td\sigma_n^2
\end{equation}
where $d\sigma_n^2$ denotes the standard metric on $S^n.$ We denote the past infinity $\{T\rightarrow-\infty\}$ by $\mathcal{I}^-,$ and future infinity $\{T\rightarrow\infty\}$ by $\mathcal{I}^+.$  Both $\mathcal{I}^-$ and $\mathcal{I}^+$ can be identified with $S^n.$

The metric (\ref{de Sitter metric}) introduced in \cite{deSitter} is the ground state solution to the Einstein vacuum equations with positive cosmological constant $\Lambda=\frac{n(n-1)}{2}:$
\begin{equation}\label{EVE}
    Ric_{\mu\nu}-\frac{1}{2}Rg_{\mu\nu}+\Lambda g_{\mu\nu}=0
\end{equation}

The study of de Sitter space has been of great interest in mathematical general relativity, starting with the works of Friedrich \cite{Friedrich1}-\cite{Friedrich2}, who proved that the $(3+1)$-dimensional de Sitter space is non-linearly stable to small perturbations of the asymptotic data at $\mathcal{I}^-.$ The argument uses the essential fact that in $(3+1)$ dimensions de Sitter space has a smooth conformal compactification. This method was generalized by Anderson \cite{Anderson} to all odd spatial dimensions, once again making use of the smooth conformal compactification. Both works establish a scattering theory in a neighborhood of de Sitter space by proving:

\begin{enumerate}
    \item \textit{Existence and uniqueness of scattering states}: for any suitable asymptotic data at $\mathcal{I}^-$ in a neighborhood of the de Sitter data at $\mathcal{I}^-$, there exists a unique solution of (\ref{EVE}) which is globally close to (\ref{de Sitter metric}).
    \item \textit{Asymptotic completeness}: any regular solution of (\ref{EVE}) which is sufficiently close to de Sitter space induces unique asymptotic data at $\mathcal{I}^-$ and $\mathcal{I}^+$.
    \item \textit{Boundedness of the scattering map}: for the above solution, the map taking the asymptotic data at $\mathcal{I}^-$ to the asymptotic data at $\mathcal{I}^+$ is bounded.
\end{enumerate}
We refer the reader to \cite{DRSR} and \cite{ReedSimon} for a more detailed introduction to scattering theory. 
 
The above argument does not apply in even spatial dimensions or in the presence of general types of matter, because of the lack of smoothness of the conformal compactification. A robust proof of stability in all dimensions in the more general setting of the Einstein equations coupled to a non-linear scalar field was given by  Ringström in \cite{Ringstrom1}. Unlike the above works, \cite{Ringstrom1} only considers Cauchy data on a finite time slice and does not address the problem of scattering.

Our discussion highlights the fact that without the conformal method it is significantly more challenging to prove a scattering theory for de Sitter space. Some of the difficulties that one faces in the even dimensional case are already present for the linear wave equation on a fixed de Sitter background with $n\geq4$ even:
\begin{equation}\label{wave equation de sitter intro}
    \square_{g_{dS}}\Tilde{\phi}=0
\end{equation}

The purpose of this paper is to prove a complete scattering theory for (\ref{wave equation de sitter intro}).

\subsection{Main Result}

In this section, we state the main theorem of the paper. Beforehand, we briefly introduce the formal analysis of (\ref{wave equation de sitter intro}), which provides important insights into the statement of the theorem. 

The formal expansion of $\Tilde{\phi}$ at $\mathcal{I}^-$ is given by:
\begin{equation}\label{expansion for phi intro}
    \Tilde{\phi}(T)=\sum_{k=0}^{n/2-1}\frac{1}{k!}\phi_k\cdot e^{2kT}+\frac{1}{(n/2)!}\mathcal{O}\cdot Te^{nT}+\frac{1}{(n/2)!}\Tilde{h}\cdot e^{nT}+O\big(T^2e^{(n+2)T}\big)
\end{equation}

This is an expansion in $e^T$ and $T,$ which illustrates the lack of smoothness at infinity. The coefficient $\mathcal{O}$ is the analogue of the Fefferman-Graham obstruction tensor \cite{Fefferman-Graham}, and is an essential feature of the even spatial dimension case. On the contrary, in the odd dimensional case the above expansion would only contain powers of $e^T$, allowing for a smooth compactification.

By further studying formal solutions of (\ref{wave equation de sitter intro}) of the form (\ref{expansion for phi intro}), we obtain a series of compatibility relations. These imply that each $\phi_a$ is determined by $\phi_0$ and its derivatives, and to leading order term we have that $\phi_a\sim\Delta^{a}\phi_0+\ldots,$ for all $1\leq a\leq\frac{n}{2}-1$. Here $\Delta$ represents the Laplacian of the standard metric on $S^n.$ Similarly, we also have that $\mathcal{O}\sim\Delta^{\frac{n}{2}}\phi_0+\ldots$ is determined by $\phi_0.$ Moreover, we get that each $O\big(T^2e^{(n+2)T}\big)$ term in (\ref{expansion for phi intro}) is formally determined by $\phi_0$ and $\Tilde{h}$. This suggests that we could consider $\big(\phi_0,\Tilde{h}\big)$ to represent asymptotic data at $\mathcal{I}^-.$ While this choice would parameterize the space of smooth asymptotic data, it does not capture the quantitative properties of the solution. A key aspect that will be clear from our analysis is the need to renormalize $\Tilde{h}$ by setting $\mathfrak{h}=\Tilde{h}-(\log\nabla)\mathcal{O},$ where $(\log\nabla)\mathcal{O}$ is defined using a suitable Fourier multiplier. We refer the reader to Section \ref{scattering in de sitter section} for the precise definition of asymptotic data.

We state the main result of the paper:

\begin{theorem}[Scattering Theory for the Wave Equation]\label{main theorem intro}
    For any $n\geq4$ even integer, we have a complete scattering theory for (\ref{wave equation de sitter intro}):

    \begin{enumerate}
    \item Existence and uniqueness of scattering states: for any $\phi_0,\mathfrak{h}\in C^{\infty}(S^n)$, there exists a unique smooth solution $\Tilde{\phi}:\mathbb{R}\times S^n\rightarrow\mathbb{R}$ of (\ref{wave equation de sitter intro}) with asymptotic data at $\mathcal{I}^-$ given by $\big(\phi_0,\mathfrak{h}\big)$.
    \item Asymptotic completeness: any smooth solution $\Tilde{\phi}:\mathbb{R}\times S^n\rightarrow\mathbb{R}$ of (\ref{wave equation de sitter intro}) induces $\phi_0,\mathfrak{h}\in C^{\infty}(S^n)$ unique asymptotic data at $\mathcal{I}^-$ and $\underline{\phi_0},\underline{\mathfrak{h}}\in C^{\infty}(S^n)$ unique asymptotic data at $\mathcal{I}^+.$
    \item The scattering isomorphism: for the above solution, we define the scattering map $\big(\phi_0,\mathfrak{h}\big)\mapsto\big(\underline{\phi_0},\underline{\mathfrak{h}}\big)$. Identifying $\mathcal{I}^-$ and $\mathcal{I}^+$ with $S^n,$ we have that the scattering map extends as a Banach space isomorphism on $H^{s+n}(S^n)\times H^{s}(S^n)$ for any $s\geq1$.
\end{enumerate}
\end{theorem}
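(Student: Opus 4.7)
The plan is to follow the strategy laid out in the abstract: reduce Theorem \ref{main theorem intro} to a scattering theory for the model equation obtained by differentiating \eqref{wave equation de sitter intro} exactly $\frac{n}{2}$ times in the time variable $T$, and recover statements about $\tilde{\phi}$ by $\frac{n}{2}$-fold integration in $T$, with integration constants fixed using the compatibility relations $\phi_a \sim \Delta^a \phi_0 + \ldots$ extracted from the formal expansion \eqref{expansion for phi intro}.

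Setting $\psi = \partial_T^{n/2}\tilde{\phi}$, a direct computation starting from $\square_{g_{dS}}\tilde{\phi}=0$ produces a linear second-order equation for $\psi$ with smooth coefficients on $\mathbb{R}\times S^n$. The crucial feature of this reduction is that differentiating $\frac{n}{2}$ times tames the non-smooth $T e^{nT}$ branch in \eqref{expansion for phi intro}: the expansion of $\psi$ near $\mathcal{I}^-$ takes an algebraic form amenable to energy estimates, with the $\log$-type ambiguity reduced to the explicit prefactor $\partial_T^{n/2}(Te^{nT})$. Formal analysis then associates to $\tilde{\phi}$ an asymptotic triple $(f_1, f_2, f_3) \in H^{s+1/2}(S^n)\times H^s(S^n)\times H^s(S^n)$ for $\psi$ at $\mathcal{I}^-$, in which $f_3$ encodes the obstruction $\mathcal{O}\sim \Delta^{n/2}\phi_0$, while $f_1, f_2$ are built from $\phi_0$ and the renormalized quantity $\mathfrak{h} = \tilde{h}-(\log\nabla)\mathcal{O}$.

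With this correspondence in hand, the proof of parts (1)--(3) proceeds in three moves. First, using the model-equation scattering isomorphism (the technical core of the paper), construct the unique $\psi$ on $\mathbb{R}\times S^n$ with the prescribed asymptotic triple at $\mathcal{I}^-$. Second, integrate $\psi$ in $T$ exactly $\frac{n}{2}$ times, fixing the $\frac{n}{2}$ integration constants so that the resulting $\tilde{\phi}$ has coefficients $\phi_0, \phi_1, \ldots, \phi_{n/2-1}$ as prescribed in \eqref{expansion for phi intro}, all of which are determined by $\phi_0$ alone via the compatibility relations. Third, verify at $\mathcal{I}^+$ that $\tilde{\phi}$ inherits a well-defined asymptotic data pair $(\underline{\phi_0},\underline{\mathfrak{h}})$ by composing with the model-equation scattering map and reversing the integration. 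Parts (1) and (2) follow from existence and uniqueness in this construction, and part (3) from boundedness of each step together with the bounded invertibility of the model-equation scattering map.

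The main obstacle is quantitative: ensuring that every operation in this reduction is compatible with the Sobolev bookkeeping in the statement. The spaces $H^{s+n}(S^n)\times H^s(S^n)$ at the level of $\tilde{\phi}$ collapse to the half-integer-shifted spaces at the level of $\psi$ because each transition $\phi_0 \to \phi_a \to \mathcal{O}$ costs two derivatives, with a net cost of $n$ derivatives in passing from $\phi_0$ to $\mathcal{O}$. The renormalization $\mathfrak{h} = \tilde{h}-(\log\nabla)\mathcal{O}$ is delicate: the unrenormalized $\tilde{h}$ fails to lie in any Sobolev space of the same order as $\mathcal{O}$ owing to a logarithmic frequency-dependent correction originating in the $Te^{nT}$ term of \eqref{expansion for phi intro}, and only after subtracting the Fourier multiplier $(\log\nabla)\mathcal{O}$ do the two components of the asymptotic data live in compatible Banach spaces. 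Verifying that this renormalization is preserved quantitatively by both the forward and backward scattering procedures, and passes correctly through the $\frac{n}{2}$-fold integration in $T$, is the most technical point; once it is controlled, the Banach isomorphism claim in part (3) for $\tilde{\phi}$ inherits directly from that of the model equation.
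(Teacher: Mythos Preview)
Your high-level strategy is the paper's: differentiate $\tfrac{n}{2}$ times to reach a model equation, invoke its scattering theory, then integrate back with the $\tfrac{n}{2}$ constants fixed by the compatibility relations \eqref{compatibility a}--\eqref{compatibility O}. The Sobolev bookkeeping you describe (the $n$-derivative gap between $\phi_0$ and $\mathcal{O}$, the half-integer shift at the model level, the necessity of the renormalization $\mathfrak{h}=\tilde h-(\log\nabla)\mathcal{O}$) is also correct and matches Section~\ref{minkowski space scattering section}.

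Where you diverge from the paper is in the choice of ``time variable''. You set $\psi=\partial_T^{n/2}\tilde\phi$, but the paper does \emph{not} differentiate in $T$. It first passes through the Minkowski embedding of Section~\ref{change of coordinates subsection}: de Sitter is realized as the quotient of $\{u<0,\,v>0\}\subset\mathbb{R}^{n+2}$ by scaling, with $\mathcal{I}^-\leftrightarrow\{v=0\}$ and $\mathcal{I}^+\leftrightarrow\{u=0\}$, and $\tilde\phi$ corresponds to a self-similar solution $\phi$ of the flat wave equation. After restricting to $u=-1$ (where $v=e^{2T}$), the paper sets $\alpha=\partial_v^{n/2}\phi$ and $\chi=\partial_v^{n/2-1}\phi$, obtaining the system \eqref{main equation intro}--\eqref{definition of chi intro}. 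This is what makes the reduction work cleanly: under $\partial_v^{n/2}$, the terms $\phi_k v^k$ for $k<\tfrac{n}{2}$ vanish and $v^{n/2}\log v$ collapses to $\log v$, giving the Bessel-type expansion $\alpha\sim\tfrac12\mathcal{O}\log v+h$ at the \emph{finite} boundary $v=0$. By contrast, $\partial_T$ is an eigen-operator for every term in \eqref{expansion for phi intro}: $\partial_T^{n/2}(e^{2kT})=(2k)^{n/2}e^{2kT}$ survives for all $k\ge1$, and $\partial_T^{n/2}(Te^{nT})$ still contains a $Te^{nT}$ term, so your $\psi$ retains the full polyhomogeneous tail rather than a two-term singular expansion. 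Your claim that $T$-differentiation ``tames the non-smooth $Te^{nT}$ branch'' is therefore not correct as stated, and the resulting equation for $\psi$ is not obviously the model system of Theorem~\ref{scattering theorem for main equation intro}. The Minkowski step is not cosmetic: it is the compactification that converts the asymptotic problem at $T\to-\infty$ into a singular Cauchy problem at $v=0$ with precisely the structure the multiplier estimates of Section~\ref{main equation estimates section} exploit.
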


The same argument also applies in the case of the generalized de Sitter space, where we replace the standard sphere $\big(S^n,g_{S^n}\big)$ by any compact Riemannian manifold $\big(M^n,g_{M}\big)$ which satisfies the equation $Ric(g_M)=(n-1)g_M.$ Additionally, our proof uses robust methods which are generalized to the Einstein equations in upcoming work, in order to prove the nonlinear scattering theory of asymptotically de Sitter spaces in all even spatial dimensions (see already Remark \ref{remark about future work}).

We remark that the scattering problem for the Klein-Gordon equation on asymptotically de Sitter-like spaces was previously addressed in \cite{microlocal}. Subject to a certain relation between the Klein-Gordon mass and the spatial dimension, which in particular does not apply to (\ref{wave equation de sitter intro}), \cite{microlocal} provides a detailed description of the scattering map as a Fourier integral operator. In the case of (\ref{wave equation de sitter intro}), the results of \cite{microlocal} prove that the scattering map is an isomorphism on $C^{\infty}(S^n)\times C^{\infty}(S^n)$. On the other hand, in the present paper we take a different approach to construct the scattering map as a Banach space isomorphism on $H^{s+n}(S^n)\times H^{s}(S^n)$. We also remark that the correspondence between Cauchy data on a finite time slice and asymptotic data at infinity for equation (\ref{wave equation de sitter intro}) was studied in the more general context of linear wave equations on cosmological metric backgrounds by Ringström in \cite{Ringstrom2}. While this work deals with far more complex examples, applied to our situation it only controls $H^s$-type Sobolev norms with an $\epsilon$-loss of derivatives in each direction.

\begin{remark}
    As stated in Theorem \ref{main theorem intro}, we first construct the scattering map in the smooth case. Moreover, we prove the following quantitative estimate for some constant $C>0$:
    \[\big\|\underline{\phi_0}\big\|_{H^{s+n}(S^n)}+\big\|\underline{\mathfrak{h}}\big\|_{H^{s}(S^n)}\leq C\big\|\phi_0\big\|_{H^{s+n}(S^n)}+C\big\|\mathfrak{h}\big\|_{H^{s}(S^n)}\]
    By density, we extend the scattering map to asymptotic data in $H^{s+n}(S^n)\times H^{s}(S^n)$ and obtain a Banach space isomorphism. Using standard density arguments, one can also extend the existence of scattering states, uniqueness of scattering states, and the asymptotic completeness statements to solutions of (\ref{wave equation de sitter intro}) with finite regularity, obtained as a limit of smooth solutions.
\end{remark}

\subsection{Idea of the Proof}

We briefly outline some of the most important ingredients of our proof:
\subsubsection*{Self-similar Solutions of the Wave Equation in Minkowski Space}

The first point is that de Sitter space $\mathbb{R}\times S^n$ can be embedded as a hyperboloid in Minkowski space $\mathbb{R}^{n+2}$, see \cite{HawkingEllis}. Moreover, this hyperboloid is obtained as the quotient of the region $\{u<0,\ v>0\}\subset\mathbb{R}^{n+2}$ by the action of the scaling vector field $S$. As a result, the study of (\ref{wave equation de sitter intro}) is equivalent to studying self-similar solutions of:
\begin{equation}\label{wave equation minkowski intro}
    \square\phi=0
\end{equation}
in the region $\{u<0,\ v>0\}\subset\mathbb{R}^{n+2}$. 

The advantage of this perspective is that we can identify $\mathcal{I}^-=\{T=-\infty\}$ to $\{v=0\}\subset\mathbb{R}^{n+2}$, and similarly $\mathcal{I}^+=\{T=\infty\}$ to $\{u=0\}\subset\mathbb{R}^{n+2}.$ The goal is to prove a scattering theory with asymptotic data at $\{v=0\}$ and $\{u=0\}$. We can interpret this as a compactification which is useful even if the solution is not smooth up to the null cone, unlike the conformal compactification introduced previously. This allows us to reduce a global problem with data at infinity to a finite problem with singular data.

Another advantage is that self-similar vacuum spacetimes are well-studied in the region where $S$ is spacelike in \cite{selfsimilarvacuum}. The paper develops a theory to explain the notion of asymptotic data on the incoming null cone $\{v=0\},$ and it makes rigorous the Fefferman-Graham expansions of \cite{Fefferman-Graham}. In our simplified setting of studying (\ref{wave equation de sitter intro}), this suggests an approach  to define asymptotic data at $\mathcal{I}^-$ and make the expansion (\ref{expansion for phi intro}) rigorous. Moreover, the similarities with \cite{selfsimilarvacuum} justify our notation for $\mathcal{O}$ and $h$. Finally, we point out that the Fefferman-Graham expansion at  $\mathcal{I}^-$ also appears in \cite{Anderson} in the odd dimensional case.

\subsubsection*{The Main Model Equation}

The most important part of the analysis is not carried out at the level of (\ref{wave equation minkowski intro}), but only once we differentiate this $\frac{n}{2}$ times in order to obtain the main model equation.  The scattering theory for (\ref{wave equation minkowski intro}), and implicitly the proof of Theorem \ref{main theorem intro}, will follow from the scattering theory for the main model equation.

To further explain this aspect, we notice that under the above correspondence we have that $\phi$ satisfies the formal expansion on $\{u=-1\}:$
\begin{equation}\label{expansion for phi on u=-1 intro}
    \phi(v)=\sum_{k=0}^{n/2-1}\frac{1}{k!}\phi_k\cdot v^k+\frac{1}{2(n/2)!}\mathcal{O}\cdot v^{\frac{n}{2}}\log v+\frac{1}{(n/2)!}\Tilde{h}\cdot v^{\frac{n}{2}}+O\big(v^{\frac{n}{2}+1}|\log v|^2\big)
\end{equation}
The difficulty in this is that the freely prescribed data is at orders $0$ and $\frac{n}{2}.$ We define $\alpha=\partial_v^{\frac{n}{2}}\phi,$ which satisfies the formal expansion on $\{u=-1\}:$
\begin{equation}\label{expansion for alpha on u=-1 intro}
    \alpha(v)=\frac{1}{2}\mathcal{O}\log v+h+O\big(v|\log v|^2\big)
\end{equation}
where we renormalized $\Tilde{h}$ by a linear factor of $\mathcal{O}.$ We also set $\chi=\partial_v^{\frac{n}{2}-1}\phi$ and we introduce the new time variable $\tau=\sqrt{v}.$ We obtain that $(\alpha,\chi)$ satisfy the main model equation:
\begin{equation}\label{main equation intro}
    \partial_{\tau}^2\alpha+\frac{1}{{\tau}}\partial_{\tau}\alpha+4q'\alpha-\frac{4}{({\tau}^2+1)^2}\cdot\Delta\alpha=f_1(\tau)\tau\partial_{\tau}\alpha+f_2(\tau)\tau^2\alpha+f_3(\tau)\chi
\end{equation}
\begin{equation}\label{definition of chi intro}
    \partial_{\tau}\chi(\tau)=2\tau\alpha(\tau)
\end{equation}
where $|f_1|+|f_2|+|f_3|=O(1)$, $q'\geq1,$ and $\Delta$ represents the Laplacian of the standard metric on $S^n$. The asymptotic data for the system is given by $\big(\chi(0),\mathcal{O},h\big).$

We use the system (\ref{main equation intro})-(\ref{definition of chi intro}) to model the equation for $\partial_v^{\frac{n}{2}}\phi$ along $u=-1$ for $v\in [0,1]$, and similarly the equation for $\partial_u^{\frac{n}{2}}\phi$  along $v=1$ for $u\in [-1,0]$. We then recover the properties of $\phi$ by integration and we use the compatibility relation $\mathcal{O}\sim\Delta^{\frac{n}{2}}\phi_0+\cdots$ to obtain the top order estimates for $\phi.$ This proves the desired scattering statement for self-similar solutions of $(\ref{wave equation minkowski intro})$.

\subsubsection*{Scattering Theory for the Main Model Equation}

Based on the argument outlined above, the essential step in proving Theorem \ref{main theorem intro} is the following scattering result for the main model equation (\ref{main equation intro})-(\ref{definition of chi intro}):

\begin{theorem}[Scattering Theory for the Model Equation]\label{scattering theorem for main equation intro} 
    We have a complete scattering theory for (\ref{main equation intro})-(\ref{definition of chi intro}):

    \begin{enumerate}
    \item Existence and uniqueness of scattering states: there exists a constant $C>0$ such that for any smooth functions $\chi(0),\mathcal{O},h\in C^{\infty}(S^n)$, there exists a unique smooth solution  $\big(\alpha,\chi\big)$ of (\ref{main equation intro})-(\ref{definition of chi intro}) with asymptotic data at $\{\tau=0\}$ given by $\big(\chi(0),\mathcal{O},h\big)$, which satisfies the estimate:
    \begin{equation}\label{forward estimate intro}
         \bigg(\big\|\alpha\big\|_{H^{s+1/2}}+ \big\|\partial_{\tau}\alpha\big\|_{{H^{s-1/2}}}+\big\|\chi\big\|_{H^{s+1/2}}\bigg)\bigg|_{\tau=1}\leq C\bigg(\big\|\mathfrak{h}\big\|_{{H^{s}}}+\big\|\mathcal{O}\big\|_{{H^{s}}}+\big\|\chi(0)\big\|_{H^{s+1/2}}\bigg),
    \end{equation}
    where $\mathfrak{h}:=h-(\log\nabla)\mathcal{O}.$
    \item Asymptotic completeness: there exists a constant $C>0$ such that any smooth solution $\big(\alpha,\chi\big)$ of (\ref{main equation intro})-(\ref{definition of chi intro}) with initial data at $\{\tau=1\}$ given by $\chi(1),\alpha(1),\partial_{\tau}\alpha(1)\in C^{\infty}(S^n)$ induces smooth asymptotic data at $\{\tau=0\}$ given by $\chi(0),\mathcal{O},$ and $h$, which satisfies the estimate:
    \begin{equation}\label{backward estimate intro}
         \big\|\mathfrak{h}\big\|_{{H^{s}}}+\big\|\mathcal{O}\big\|_{{H^{s}}}+ \big\|\chi(0)\big\|_{H^{s+1/2}}\leq C\bigg(\big\|\alpha\big\|_{H^{s+1/2}}+\big\|\partial_{\tau}\alpha\big\|_{{H^{s-1/2}}}+\big\|\chi\big\|_{H^{s+1/2}}\bigg)\bigg|_{\tau=1}
    \end{equation}
    \item The scattering isomorphism: we define the scattering map $\big(\chi(0),\mathcal{O},\mathfrak{h}\big)\mapsto\big(\chi(1),\alpha(1),\partial_{\tau}\alpha(1)\big)$. This extends as a Banach space isomorphism from $H^{s+\frac{1}{2}}(S^n)\times H^s(S^n)\times H^s(S^n)$ to $H^{s+\frac{1}{2}}(S^n)\times H^{s+\frac{1}{2}}(S^n)\times H^{s-\frac{1}{2}}(S^n)$ for any $s\geq1$.
    \end{enumerate}
\end{theorem}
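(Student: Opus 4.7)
The plan is to decompose the system \eqref{main equation intro}--\eqref{definition of chi intro} spectrally on $S^n$ via spherical harmonics, which reduces it to a one-parameter family of coupled ODEs in $\tau$ indexed by the eigenvalues $\lambda^2$ of $-\Delta$. For each such $\lambda$, the principal part of \eqref{main equation intro} is the regular-singular ODE
\[
\partial_\tau^2\alpha+\tau^{-1}\partial_\tau\alpha+4q'\alpha+\tfrac{4\lambda^2}{(\tau^2+1)^2}\alpha=\text{(perturbative)},
\]
whose indicial equation at $\tau=0$ has a double root; this accounts for the logarithmic branch in the expansion $\alpha(\tau)=\mathcal O\log\tau+h+O(\tau^2|\log\tau|^2)$ deduced from \eqref{expansion for alpha on u=-1 intro}. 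Two regimes govern the analysis: the low-frequency regime $\lambda\tau\lesssim 1$, where $\mathcal O$ and $h$ play the role of ``constants of integration'' in a purely ODE-like dynamics, and the high-frequency regime $\lambda\tau\gtrsim 1$, where the equation is well-approximated by the wave equation on $S^n$ and standard $L^2$-based energy estimates provide sharp control without any loss of derivatives.

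For part 1 (existence of scattering states), I would proceed by a contraction/iteration argument on a short interval $[0,\tau_0]$: fix the profile $\alpha_{\mathrm{lead}}(\tau)=\mathcal O\log\tau+h$ together with $\chi_{\mathrm{lead}}(\tau)=\chi(0)+\int_0^\tau 2s\,\alpha_{\mathrm{lead}}(s)\,ds$, which captures the exact singular behavior at $\tau=0$, and solve for a remainder that is $O(\tau^2|\log\tau|^2)$ and amenable to a Picard scheme in a weighted norm. On $[\tau_0,1]$ the equation is regular, so commuting with fractional powers of $-\Delta$ on $S^n$ and running energy estimates propagates $H^s$ regularity forward and yields \eqref{forward estimate intro}. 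For part 2 (asymptotic completeness), I would evolve backward from Cauchy data at $\tau=1$, establish monotonicity-type energy bounds down to $\tau=0$, and show convergence of appropriate $\tau$-weighted quantities that recover $\mathcal O$ and $h$ as limits, together with $\chi(0)=\chi(1)-\int_0^1 2s\,\alpha(s)\,ds$ from \eqref{definition of chi intro}; rerunning the energy estimate in reverse produces \eqref{backward estimate intro}. Part 3 follows formally: the scattering map and its inverse are defined and bounded on smooth data, and \eqref{forward estimate intro}--\eqref{backward estimate intro} exhibit them as mutually inverse bounded linear maps, so density extends them to a Banach space isomorphism at the stated regularity.

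The main obstacle, and the technical heart of the argument, is achieving \eqref{forward estimate intro}--\eqref{backward estimate intro} with no loss of derivatives; this is where the renormalization $\mathfrak h=h-(\log\nabla)\mathcal O$ is indispensable. Substituting the leading ansatz $\alpha_{\mathrm{lead}}=\mathcal O\log\tau+h$ into \eqref{main equation intro} produces at frequency $\lambda$ a source of the schematic form $\tfrac{4\lambda^2}{(\tau^2+1)^2}\bigl(\mathcal O\log\tau+h\bigr)$, and integrating against the natural multiplier shows that the effective value of $h$ at frequency $\lambda$ is shifted by a term $\sim(\log\lambda)\,\mathcal O$ before the $\lambda$-independent balance of the estimate can appear. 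Defining $(\log\nabla)\mathcal O$ as the corresponding Fourier multiplier on $S^n$ and replacing $h$ by $\mathfrak h+(\log\nabla)\mathcal O$ exactly cancels this logarithmic contribution and produces the uniform-in-frequency bound required to close \eqref{forward estimate intro}; without the renormalization one would unavoidably incur a logarithmic loss at high frequencies and fail to obtain a bounded scattering map at the claimed Sobolev level.
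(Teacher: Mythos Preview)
Your outline is essentially the same strategy as the paper's: spectral localization, a frequency-dependent split into a low regime $\lambda\tau\lesssim 1$ and a high regime $\lambda\tau\gtrsim 1$, an iteration around the leading profile $\mathcal O\log\tau+h$ for existence, backward energy estimates to recover $(\mathcal O,h,\chi(0))$, and the renormalization $\mathfrak h=h-(\log\nabla)\mathcal O$ to kill the $\log\lambda$ shift. Two points where the paper is more concrete than your sketch are worth noting.

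First, for the forward estimate the paper does not work directly with $\alpha$ but splits it, at each dyadic frequency $m=2^l$, into a \emph{regular} piece $\alpha_J$ with data $(\tfrac12\chi(0),0,\mathfrak h)$ and a \emph{singular} piece $\alpha_Y$ with data $(\tfrac12\chi(0),\mathcal O,\mathcal O\log m)$. These are treated with different multipliers in the low regime (plain $\partial_t\alpha_J$ versus $\partial_t(\alpha_Y/\log t)$), which is how one avoids the $\log$ from contaminating the estimate for the regular part. Your single ansatz $\alpha_{\mathrm{lead}}=\mathcal O\log\tau+h$ will work, but you should expect to need this kind of splitting (or an equivalent weighted norm) to close.

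Second, your sentence ``on $[\tau_0,1]$ the equation is regular, so commuting with fractional powers of $-\Delta$ and running energy estimates yields \eqref{forward estimate intro}'' undersells the main point. A fixed $\tau_0$ and standard energy estimates only give $H^s\to H^s$; the half-derivative \emph{gain} in \eqref{forward estimate intro}--\eqref{backward estimate intro} comes specifically from the high-frequency multiplier $\sqrt{t}$ (with $t=m\tau$), which yields Bessel-type decay $\|\alpha\|\sim t^{-1/2}$, together with the frequency-dependent transition time $t\sim 1$, i.e.\ $\tau\sim m^{-1}$. Running from $\tau=m^{-1}$ to $\tau=1$ then buys a factor $m^{-1/2}$, which is exactly the $+\tfrac12$ shift in Sobolev exponent after summing over dyadic shells. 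For the backward direction the paper uses the multipliers $t\partial_t\alpha$ and $\overline\alpha=\alpha-t\log t\,\partial_t\alpha$ in the low regime; the latter is what converges to $\mathfrak h$ as $t\to 0$ and is how the renormalization is implemented at the level of the energy, rather than only at the level of the asymptotic data.
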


As in the case of Theorem \ref{main theorem intro},the scattering map is defined in the smooth case at first and extended by density using the quantitative estimates (\ref{forward estimate intro}) and (\ref{backward estimate intro}). One remarkable aspect of the above statement is that at nonzero times the solution is bounded in an improved Sobolev space compared to the asymptotic data. We point out that a similar phenomenon is present in the more general context of linear wave equations on cosmological metric
backgrounds in \cite{Ringstrom2}. Restricted to our setting, the paper obtains an almost $\frac{1}{2}$ improvement of regularity in terms of $H^s$-type Sobolev norms, but has an $\epsilon$-loss of derivatives in each direction (which is natural since the quantity corresponding to $h$ is not renormalized).

\begin{remark}
    We point out that equation (\ref{main equation intro}) also models the wave equation on the FLRW spacetime:
    \[-dt^2+t^{\frac{2}{3}}\big(dx_1^2+dx_2^2+dx_3^2\big)\]
    where we set $\tau=t^{\frac{2}{3}}.$ The proof of Theorem \ref{scattering theorem for main equation intro} implies the existence of a scattering isomorphism between Cauchy data at $\tau=1$ and asymptotic initial data at $\tau=0.$ This result was significantly extended to the case of the wave equation and the linearized Einstein-scalar field system on Kasner spacetimes in \cite{Li}.
\end{remark}

\begin{remark}\label{remark about future work}
    The methods used in the proof of Theorem \ref{scattering theorem for main equation intro} can be generalized to prove the scattering of asymptotically de Sitter solutions to the Einstein vacuum equations with a positive cosmological constant in all even spatial dimensions $n\geq4$, to appear in upcoming work. According to \cite{Fefferman-Graham}, \cite{selfsimilarvacuum}, any $(n+1)$-dimensional solution of (\ref{EVE}) corresponds to an $(n+2)$-dimensional straight self-similar vacuum spacetime. In the gauge of \cite{selfsimilarvacuum}, we consider the system of Bianchi equations satisfied by the curvature components of the $(n+2)$-dimensional spacetime along an outgoing null hypersurface. This system generalizes the equation (\ref{main equation intro}) by replacing the round metric on $S^n$ with a time dependent metric. The argument for obtaining energy estimates outlined below is robust and can be adapted to the more general setting by using a geometric definition of the  Littlewood-Paley projections. 
\end{remark}

\subsubsection*{Estimates from $\mathcal{I}^-$ to a Finite Time Hypersurface}

We briefly explain how to obtain the estimate (\ref{forward estimate intro}). In the case of the wave equation (\ref{wave equation de sitter intro}), this gives an estimate on $\frac{n}{2}$ time derivatives of $\Tilde{\phi}$ at finite times in terms of the asymptotic data at $\mathcal{I}^-$. Denoting by $\{\varphi_i\}$ the eigenfunctions of the Laplacian on $S^n$ with eigenvalues $\lambda_i$, we have the frequency decomposition $\alpha=\sum_i\langle\alpha,\varphi_{i}\rangle\varphi_{i}.$ To simplify the discussion, we assume that the solution is supported on frequencies in the interval $[2^l,2^{l+1}),$ so $\langle\alpha,\varphi_{i}\rangle=0$ for all $\lambda_i\notin[2^l,2^{l+1}).$ Also, we introduce the new time variable $t=2^l\tau.$

We decompose the solution into the regular and singular components which satisfy the expansions:
\[\alpha_J(t)=h-\mathcal{O}\log(2^l)+O\big(t^2|\log(t)|^2\big),\]
\[\alpha_Y(t)=\mathcal{O}\log(t)+O\big({t}^2|\log(t)|^2\big).\]
This decomposition highlights the need to define $\mathfrak{h}:=h-(\log\nabla)\mathcal{O}.$ We choose this notation since $\big(\alpha_J,\alpha_Y\big)$ will be shown to satisfy similar bounds as the first and second Bessel functions $J_0,\ Y_0$, as defined in \cite{Bessel}. 

The time interval $t\in[0,1]$ represents the low frequency regime for both components of the solution, with dominant behavior given by the first term in the above expansions. We capture this by energy estimates using multipliers suitable for each component. The time interval $t\in[1,2^{l+1}]$ represents the high frequency regime for the solution, with leading behavior given by $1/\sqrt{t},$ which is again captured using energy estimates. We point out that the high frequency regime behavior, together with the frequency dependent time of transition between the low and high regime, is responsible for the gain of regularity seen in (\ref{forward estimate intro}). This argument is robust, by the use of energy estimates, multipliers, and the time rescaling properties of the equation.

\subsubsection*{Estimates from a Finite Time Hypersurface to $\mathcal{I}^+$}

We outline the proof of the estimate (\ref{backward estimate intro}). In the case of the wave equation (\ref{wave equation de sitter intro}), this implies an estimate on the asymptotic data at $\mathcal{I}^+$ in terms of $\frac{n}{2}$ time derivatives of $\Tilde{\phi}$ at a finite time hypersurface. We assume as before that our solution is supported on frequencies in the interval $[2^l,2^{l+1})$ and we introduce the time variable $t=2^l\tau.$ We study the high frequency regime $t\in[1,2^{l+1}]$ and the low frequency regime $t\in[0,1]$ using different energy estimates than in the previous case, in order to have good bulk terms. We remark that we use the decomposition into frequency strips for constructing multipliers.

A similar strategy of considering two separate regimes also appears in \cite{Ringstrom2}, which studies the ODEs obtained by projecting the solution on each eigenfunction. However, we can construct the desired scattering isomorphism in our situation  since we anticipate the need to renormalize $h$ to $\mathfrak{h}$ before determining the asymptotics.

\subsection{Outline of the Paper}
We outline the remainder of the paper. In Section \ref{set up section}, we introduce the necessary framework by proving the correspondence between solutions of the wave equation on de Sitter space and self-similar solutions of the wave equation in Minkowski space. We also compute the reduction to the main model system (\ref{main equation intro})-(\ref{definition of chi intro}). In Section \ref{main equation estimates section}, we prove the scattering theory result in Theorem \ref{scattering theorem for main equation intro} for the main model system (\ref{main equation intro})-(\ref{definition of chi intro}). In Section \ref{minkowski space scattering section}, we prove the scattering theory result in Theorem \ref{scattering map in Minkowski theorem} for self-similar solutions of the wave equation in Minkowski space, which is based on Theorem \ref{scattering theorem for main equation intro} and the compatibility relations. In Section \ref{scattering in de sitter section}, we complete the proof of Theorem \ref{main theorem intro}, as a consequence of Theorem \ref{scattering map in Minkowski theorem} and the correspondence in Section \ref{set up section}.

\subsection{Acknowledgements}
The author would like to acknowledge Igor Rodnianski for his valuable advice and guidance in the process of writing this paper. The author would also like to thank Mihalis Dafermos and Warren Li for the very helpful discussions and suggestions. 

\section{Set-up}\label{set up section}

In this section we provide the framework needed in our proof. We introduce our coordinate conventions and the correspondence between solutions of the wave equation on de Sitter space and self-similar solutions of the wave equation in Minkowski space. Finally, we derive the main model system of equations (\ref{main equation intro})-(\ref{definition of chi intro}).

\subsection{Coordinate Systems and Self-similarity}\label{change of coordinates subsection}

In the previous section we introduced the de Sitter metric in standard coordinates on $\mathbb{R}\times S^n$:
\[g_{dS}=-dT^2+\cosh^2Td\sigma_n^2\]
where $d\sigma_n^2$ denotes the standard metric on $S^n.$ We want to write the de Sitter space as a quotient of a region in Minkowksi space by the action of the scaling vector field $S$.

We recall that the Minkowski metric in standard coordinates on $\mathbb{R}^{n+2}$ is given by:
\[m=-dt^2+dr^2+r^2d\sigma_n^2\]
Moreover, this metric is self-similar, with the scaling vector field:
\[S=t\partial_t+r\partial_r\]

We introduce the standard double null coordinates:
\[u=\frac{t-r}{2},\ v=\frac{t+r}{2}\]
With respect to the double null coordinates, the Minkowski metric and the scaling vector field are:
\[m=-2\big(du\otimes dv+dv\otimes du\big)+r^2d\sigma_n^2\]
\[S=u\partial_u+v\partial_v\]

In the region $\{u<0,\ v>0\}\subset\mathbb{R}^{n+2}$ we define the self-similar coordinates:
\[x=2\sqrt{-uv},\ T=\frac{1}{2}\log\frac{v}{-u}\]
In self-similar coordinates we have that:
\[m=dx^2+x^2\big(-dT^2+\cosh^2Td\sigma_n^2\big)\]
\[S=x\partial_x\]

The above formula shows that de Sitter space is the quotient of $\{u<0,\ v>0\}\subset\mathbb{R}^{n+2}$ by the action of the scaling vector field $S$. We define the embedding map $\iota:\mathbb{R}\times S^n\rightarrow(0,\infty)\times\mathbb{R}\times S^n$ by $\iota(T,\omega)=(1,T,\omega)$. We also define the projection map $\pi:(0,\infty)\times\mathbb{R}\times S^n\rightarrow\mathbb{R}\times S^n$ by $\pi(x,T,\omega)=(T,\omega)$. We proved that:
\begin{lemma}\label{correspondence lemma}
    $\Tilde{\phi}:\mathbb{R}\times S^n\rightarrow\mathbb{R}$ is a solution of $\square_{dS}\Tilde{\phi}=0$ if and only if $\phi:(0,\infty)\times\mathbb{R}\times S^n\rightarrow\mathbb{R}$ given by $\phi=\Tilde{\phi}\circ\pi$ is a self-similar solution of $\square\phi=0.$ If this holds, we also have $\Tilde{\phi}=\phi\circ\iota.$
\end{lemma}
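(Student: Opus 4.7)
The plan is to compute $\square_m$ directly in the self-similar coordinate system $(x, T, \omega)$, exploiting the warped product structure $m = dx^2 + x^2 g_{dS}$ derived in the preceding paragraphs, and then observe that the self-similarity condition $S\phi = 0$ collapses the formula to $\square_{g_{dS}}$.

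First, in the coordinates $(x, T, \omega)$ the inverse metric is block diagonal, with $m^{xx}=1$ and $m^{\mu\nu} = x^{-2} g_{dS}^{\mu\nu}$ for indices $\mu,\nu$ along the $\mathbb{R}\times S^n$ factor, and the volume density factors as $\sqrt{|m|} = x^{n+1}\sqrt{|g_{dS}|}$. Substituting these into the general formula
\[
\square_m \phi \;=\; \frac{1}{\sqrt{|m|}}\,\partial_\mu\!\left(\sqrt{|m|}\, m^{\mu\nu}\partial_\nu \phi\right)
\]
and separating out the $x$-derivative term yields, after a short computation,
\[
\square_m \phi \;=\; \partial_x^2 \phi \;+\; \frac{n+1}{x}\partial_x\phi \;+\; \frac{1}{x^2}\,\square_{g_{dS}} \phi,
\]
where in the last term $\square_{g_{dS}}$ acts on $\phi$ regarded as a function on $\mathbb{R}\times S^n$ for each fixed $x$.

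Next, I use the self-similarity. By definition, $\phi$ is self-similar iff $S\phi = x\,\partial_x\phi = 0$, and since we work in the region $x>0$ this is equivalent to $\partial_x\phi \equiv 0$; in particular $\partial_x^2\phi \equiv 0$ as well. For such $\phi$, the previous display reduces to
\[
\square_m \phi \;=\; \frac{1}{x^2}\,\square_{g_{dS}} \tilde\phi,
\]
where $\tilde\phi = \phi\circ\iota$ is the induced function on $\mathbb{R}\times S^n$ (well defined precisely because $\phi$ is $x$-independent). Since $x>0$, this gives the equivalence $\square_m\phi = 0 \Longleftrightarrow \square_{g_{dS}}\tilde\phi = 0$. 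Conversely, given $\tilde\phi$, setting $\phi = \tilde\phi\circ\pi$ produces an $x$-independent (hence self-similar) function on $\{u<0,\,v>0\}$ satisfying $\tilde\phi = \phi\circ\iota$ by construction, which closes both directions of the biconditional.

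There is no real obstacle here beyond bookkeeping: the computation is routine once one adopts the self-similar coordinates, and the Lorentzian signature enters only through $|\det m|$ and so does not alter the warped-product derivation. The only point worth double-checking is the factor $n+1$ in the coefficient of $\partial_x\phi$, which follows from $\sqrt{|m|}=x^{n+1}\sqrt{|g_{dS}|}$ (the $\mathbb{R}\times S^n$ factor being $(n+1)$-dimensional), and this is precisely what makes the self-similar reduction clean.
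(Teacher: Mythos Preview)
Your proof is correct and follows essentially the same route as the paper: the paper derives the warped-product formula $m = dx^2 + x^2 g_{dS}$ with $S = x\partial_x$ and then simply asserts the lemma as an immediate consequence, whereas you have written out the explicit $\square_m$ computation that justifies the passage. There is nothing to correct; your version is a fleshed-out form of the paper's one-line argument.
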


We conclude that studying the linear wave equation on de Sitter space $\mathbb{R}\times S^n$ is equivalent to studying self-similar solutions of the wave equation on the $\{u<0,\ v>0\}$ region of Minkowski space $\mathbb{R}^{n+2}$. Moreover, we can identify $\mathcal{I}^-=\{T=-\infty\}$ to $\{v=0\}\subset\mathbb{R}^{n+2}$, and similarly $\mathcal{I}^+=\{T=\infty\}$ to $\{u=0\}\subset\mathbb{R}^{n+2}.$

\subsection{Derivation of the Main Model Equation}\label{deriving model problem}

For any even integer $n\geq 4$, we have the wave equation in double null coordinates on $\mathbb{R}^{n+2}:$
\begin{equation}\label{wave equation phi}
    \partial_u\partial_v\phi-\frac{n/2}{v-u}\partial_v\phi+\frac{n/2}{v-u}\partial_u\phi-(v-u)^{-2}\Delta\phi=0
\end{equation}
where $\Delta$ represents the Laplacian of the standard metric on $S^n.$

As explained in the introduction, we want to derive an equation for $\partial_v^{\frac{n}{2}}\phi$ on $u=-1.$ We use the self-similarity assumption $S\phi=0$ in order to rewrite the above wave equation on $u=-1:$
\begin{equation}\label{wave equation phi u=-1}
    v\partial_v^2\phi+\bigg(1-\frac{n}{2}\bigg)\partial_v\phi+\frac{nv}{v+1}\partial_v\phi-\frac{1}{(v+1)^2}\Delta\phi=0
\end{equation}
We multiply by $(v+1)^2$ and differentiate with respect to $v$:
\begin{equation}\label{wave equation dv3 phi}
    v(v+1)^2\partial_v^3\phi+\bigg(2-\frac{n}{2}\bigg)(v+1)^2\partial_v^2\phi+(n+2)(v+1)v\partial_v^2\phi+(nv+2v+2)\partial_v\phi-\Delta\partial_v\phi=0
\end{equation}
By induction we obtain that for all $2\leq a\leq\frac{n}{2},$ there exist constants $p_a,q_a,p_a',q_a',q_a''\geq 1$:
\begin{align}\label{wave equation dva phi}
    v(v+1)^2\partial_v^{a+2}\phi &+\bigg(a+1-\frac{n}{2}\bigg)(v+1)^2\partial_v^{a+1}\phi+\big[n(v+1)+p_av+q_a\big]v\partial_v^{a+1}\phi+
    \\ \nonumber &+(p'_av+q'_a)\partial_v^a\phi+q_a''\partial_v^{a-1}\phi-\Delta\partial_v^a\phi=0
\end{align}

In particular, in the case $a=\frac{n}{2}$ we obtain the desired equation for $\partial_v^{\frac{n}{2}}\phi$:
\[v\partial_v^{\frac{n}{2}+2}\phi+\partial_v^{\frac{n}{2}+1}\phi+\frac{pv+q}{(v+1)^2}v\partial_v^{\frac{n}{2}+1}\phi+\frac{p'v+q'}{(v+1)^2}\partial_v^\frac{n}{2}\phi+\frac{q''}{(v+1)^2}\partial_v^{\frac{n}{2}-1}\phi-\frac{1}{(v+1)^2}\Delta\partial_v^\frac{n}{2}\phi=0\]
for some constants $p,q,p',q',q''$ with $q'\geq1.$ 

We introduce the notation $\alpha=\partial_v^{\frac{n}{2}}\phi,\ \chi=\partial_v^{\frac{n}{2}-1}\phi$. Thus, we have deduced the following equation:
\begin{equation}\label{alpha}
    v\partial_v^2\alpha+\partial_v\alpha+\frac{pv+q}{(v+1)^2}v\partial_v\alpha+\frac{p'v+q'}{(v+1)^2}\alpha+\frac{q''}{(v+1)^2}\chi-\frac{1}{(v+1)^2}\Delta\alpha=0
\end{equation}
Finally, under the change of variables $\tau=\sqrt{v}$ we obtain:
\[\partial_{\tau}^2\alpha+\frac{1}{{\tau}}\partial_{\tau}\alpha+4q'\alpha-\frac{4}{({\tau}^2+1)^2}\cdot\Delta\alpha= f_1(\tau)\tau\partial_{\tau}\alpha+f_2(\tau)\tau^2\alpha+f_3(\tau)\chi\]
\[\partial_{\tau}\chi(\tau)=2\tau\alpha(\tau)\]
where $|f_1|+|f_2|+|f_3|=O(1)$ can be computed explicitly, but their exact value is irrelevant. This completes the derivation of the main model system of equations (\ref{main equation intro})-(\ref{definition of chi intro}).

\section{Estimates for the Main Model Equation}\label{main equation estimates section}

The essential part of the analysis needed for the scattering theory for the wave equation on de Sitter space in Theorem \ref{main theorem intro} is carried out at the level of (\ref{main equation intro})-(\ref{definition of chi intro}). In this section, we study this model system of equations:
\begin{equation}\label{main equation}
    \partial_{\tau}^2\alpha+\frac{1}{{\tau}}\partial_{\tau}\alpha+4q'\alpha-\frac{4}{({\tau}^2+1)^2}\cdot\Delta\alpha=f_1(\tau)\tau\partial_{\tau}\alpha+f_2(\tau)\tau^2\alpha+f_3(\tau)\chi
\end{equation}
\begin{equation}\label{definition of chi}
    \partial_{\tau}\chi(\tau)=2\tau\alpha(\tau)
\end{equation}
where $|f_1|+|f_2|+|f_3|=O(1),$ $q'\geq1$, and $\Delta$ represents the Laplacian of the standard metric on $S^n$.

We first introduce the notion of a smooth solution of the above system:

\begin{definition}
    Let $\chi(0),\mathcal{O},h\in C^{\infty}(S^n).$ We say that $\big(\alpha,\chi\big)$ is a smooth solution of (\ref{main equation})-(\ref{definition of chi}) with asymptotic initial data given by $\chi(0),\mathcal{O},$ and $h$ if: 
    \[\alpha-\mathcal{O}\log({\tau})-h\in C^1_{\tau}([0,\infty))C^{\infty}(S^n),\ \chi\in C^1_{\tau}([0,\infty))C^{\infty}(S^n)\] and $\alpha$ satisfies the expansions:
    \begin{equation}\label{main expansion}
        \alpha({\tau})=\mathcal{O}\log({\tau})+h+ O\big({\tau}^2|\log({\tau})|^2\big),\ \partial_{\tau}\alpha({\tau})=\frac{\mathcal{O}}{\tau}+O\big({\tau}|\log({\tau})|^2\big)\text{ in }C^{\infty}(S^n)
    \end{equation}
    
    Given $\alpha,\chi\in C^{\infty}\big((0,\infty)\times S^n\big)$ solving (\ref{main equation})-(\ref{definition of chi}), we say that $\big(\chi(0),\mathcal{O},h\big)$ determine the asymptotic data of the solution if the above conditions hold.
\end{definition}

The main result of the section is the proof of Theorem \ref{scattering theorem for main equation intro}. This consists of proving energy estimates from $\mathcal{I}^-$ to finite times in Section \ref{forward direction subsection}, the existence and uniqueness of scattering states in Section \ref{existence to ivp subsection}, energy estimates from finite times to $\mathcal{I}^+$ in Section \ref{backward direction subsection}, and asymptotic completeness in Section \ref{asymptotic expansion subsection}. As a result, we can construct the scattering map from asymptotic data at $\tau=0$ to initial data at $\tau=1$, and we obtain it is an isomorphism.

\subsection{Estimates from $\mathcal{I}^-$ to a Finite Time Hypersurface}\label{forward direction subsection}

In this section we assume the existence of smooth solutions of (\ref{main equation})-(\ref{definition of chi}) with prescribed asymptotic initial data, and prove estimates on the solutions at nonzero times $\tau\in(0,1]$ in terms of the asymptotic data. In the context of (\ref{wave equation de sitter intro}), these correspond to estimates on $\frac{n}{2}$ time derivatives of $\Tilde{\phi}$ at finite times in terms of the asymptotic data at $\mathcal{I}^-$. The main result in Theorem \ref{improved regularity theorem} proves the estimate (\ref{forward estimate intro}) of Theorem \ref{scattering theorem for main equation intro}. As a consequence of our estimates, we also obtain the uniqueness of solutions with given smooth asymptotic initial data (\textit{uniqueness of scattering states}). 

Let $\chi(0),\mathcal{O},h$ be smooth functions. For any parameters $m_J,m_Y\geq1,$ we consider $\big(\alpha_J,\chi_J\big)$ and $\big(\alpha_Y,\chi_Y\big)$ to be smooth solutions of (\ref{main equation})-(\ref{definition of chi}) with asymptotic initial data given by $\big(\chi_J(0)=\frac{1}{2}\chi(0),0,\mathfrak{h}\big)$ and $\big(\chi_Y(0)=\frac{1}{2}\chi(0),\mathcal{O},\mathcal{O}\log(m_Y)\big),$ where $\mathfrak{h}=h-\mathcal{O}\log(m_Y).$ In particular, the solutions satisfy the expansions:
\[\alpha_J({\tau})=\mathfrak{h}+O\big({\tau}^2|\log({\tau})|^2\big),\]
\[\alpha_Y({\tau})=\mathcal{O}\log(m_Y{\tau})+O\big({\tau}^2|\log({\tau})|^2\big).\]
Using the fact that (\ref{main equation}) and (\ref{definition of chi}) are linear, we obtain that $\alpha=\alpha_J+\alpha_Y,\ \chi=\chi_J+\chi_Y$ also solve (\ref{main equation})-(\ref{definition of chi}), with asymptotic initial data given by $\big(\chi(0),\mathcal{O},h\big).$ We refer to $\alpha_J$ as the regular component of $\alpha$, and to $\alpha_Y$ as the singular component of $\alpha$. We recall that our notation suggests the fact that $\alpha_J$ and $\alpha_Y$ satisfy similar bounds as the first and second Bessel functions.

When proving estimates for the solution, we treat separately the low frequency regime $\tau\in[0,(2m)^{-1}]$ with dominant behavior given by the first term in the above expansions, and the high frequency regime $\tau\in[(2m)^{-1},1]$ with dominant behavior bounded by $1/\sqrt{m\tau}.$ We remark that the transition time depends on our choice of the parameter $m.$

We also remark that in order to prove estimates for the two components of the solution, we use different multipliers adapted to their asymptotic expansion at $\tau=0$. We start by proving a low frequency regime estimate for the regular component of the solution:

\begin{proposition}\label{low frequency alpha J proposition} For any $\tau\leq(m_J)^{-1}\leq1$, we have that $\alpha_J$ and $\chi_J$ satisfy the estimates:
    \begin{equation}\label{alpha J in low frequency 1}
    \big\|\alpha_J\big\|^2_{H^1}+\big\|\partial_{\tau}\alpha_J\big\|^2_{L^2} \lesssim\big\|\mathfrak{h}\big\|^2_{H^1}+\frac{1}{m_J^2}\big\|\chi(0)\big\|^2_{L^2}
    \end{equation}
    \begin{equation}\label{alpha J in low frequency 2}
        \big\|\alpha_J\big\|^2_{L^2}\lesssim \big\|\mathfrak{h}\big\|^2_{L^2}+\tau^2\big\|\nabla\mathfrak{h}\big\|^2_{L^2}+\frac{\tau^2}{m_J^2}\big\|\chi(0)\big\|^2_{L^2}
    \end{equation}
    \begin{equation}\label{chi J in low frequency}
        \big\|\chi_J\big\|^2_{L^2}\lesssim \big\|\chi(0)\big\|^2_{L^2}+\tau^4\big\|\mathfrak{h}\big\|^2_{H^1}
    \end{equation}
    where the implicit constant in the above inequalities is independent of $m_J, \tau,$ and the asymptotic data.
\end{proposition}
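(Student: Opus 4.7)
The plan is to carry out a weighted $L^2$ energy estimate for $\alpha_J$ using the multiplier $\partial_\tau \alpha_J$, adapted to the regularity of $\alpha_J$ at $\tau = 0$. Since by construction the prescribed $\mathcal{O}_J = 0$, the expansion~(\ref{main expansion}) gives $\alpha_J(\tau) \to \mathfrak{h}$ and $\partial_\tau \alpha_J(\tau) \to 0$ as $\tau \to 0^+$, so the natural energy
\[
E_1(\tau) := \tfrac{1}{2}\|\partial_\tau \alpha_J\|_{L^2}^2 + 2q'\|\alpha_J\|_{L^2}^2 + \tfrac{2}{(\tau^2+1)^2}\|\nabla \alpha_J\|_{L^2}^2
\]
is finite at the initial time, with $\lim_{\tau \to 0^+} E_1(\tau) \lesssim \|\mathfrak{h}\|_{H^1}^2$. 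Multiplying~(\ref{main equation}) by $\partial_\tau \alpha_J$, integrating over $S^n$, and integrating by parts on the Laplacian produces the differential identity
\[
\partial_\tau E_1 + \tfrac{1}{\tau}\|\partial_\tau \alpha_J\|_{L^2}^2 + \tfrac{8\tau}{(\tau^2+1)^3}\|\nabla \alpha_J\|_{L^2}^2 = \int_{S^n} \partial_\tau \alpha_J \cdot \bigl(f_1 \tau \partial_\tau \alpha_J + f_2 \tau^2 \alpha_J + f_3 \chi_J\bigr),
\]
where the positive damping $\tfrac{1}{\tau}\|\partial_\tau \alpha_J\|_{L^2}^2$ is the key structural tool for handling the coupling.

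The $f_1, f_2$ contributions on the right are bounded by $C\tau \|\partial_\tau \alpha_J\|^2 + C\tau^2 E_1 \lesssim E_1$ on $\tau \in [0,1]$, and are easily closed by Gronwall. The main obstacle is the $f_3 \chi_J$ coupling, which has no small factor of its own yet must produce the $1/m_J^2$ improvement on $\|\chi(0)\|_{L^2}^2$ in~(\ref{alpha J in low frequency 1}). I would treat it via Cauchy--Schwarz weighted against the dissipation,
\[
\Bigl|\int_{S^n} \partial_\tau \alpha_J \cdot f_3 \chi_J\Bigr| \leq \tfrac{\varepsilon}{\tau}\|\partial_\tau \alpha_J\|_{L^2}^2 + \tfrac{C}{\varepsilon}\, \tau\,\|\chi_J\|_{L^2}^2,
\]
absorbing the first term into the LHS for $\varepsilon$ small enough. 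Then I use~(\ref{definition of chi}) to write $\chi_J(s) = \tfrac{1}{2}\chi(0) + 2\int_0^s r\,\alpha_J(r)\,dr$ and apply Minkowski to obtain $\|\chi_J(s)\|_{L^2}^2 \lesssim \|\chi(0)\|_{L^2}^2 + s^4\sup_{[0,s]} E_1$; time-integrating and using $\tau \leq m_J^{-1}$ gives
\[
\int_0^\tau s\,\|\chi_J\|_{L^2}^2\,ds \lesssim \tau^2\|\chi(0)\|_{L^2}^2 + \tau^6\sup E_1 \leq \tfrac{1}{m_J^2}\|\chi(0)\|_{L^2}^2 + \tfrac{1}{m_J^6}\sup E_1.
\]
The target factor $1/m_J^2$ thus emerges precisely from the length of the integration interval, while the $\tfrac{1}{m_J^6}\sup E_1$ error is absorbed via Gronwall on the bounded range $[0, 1/m_J] \subseteq [0,1]$ (for $m_J$ beyond an absolute threshold), with uniform constants in $m_J \geq 1$; for $m_J$ in a bounded range the crude estimate suffices after trading $\|\chi(0)\|^2 \leq m_J^2\cdot m_J^{-2}\|\chi(0)\|^2$.

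The remaining estimates follow by direct integration. For~(\ref{alpha J in low frequency 2}), I use $\alpha_J(\tau) = \mathfrak{h} + \int_0^\tau \partial_s \alpha_J\,ds$ and Cauchy--Schwarz against the dissipation,
\[
\|\alpha_J - \mathfrak{h}\|_{L^2}^2 \leq \tfrac{\tau^2}{2}\int_0^\tau \tfrac{1}{s}\|\partial_s \alpha_J\|_{L^2}^2\,ds \lesssim \tau^2\bigl(\|\mathfrak{h}\|_{H^1}^2 + m_J^{-2}\|\chi(0)\|_{L^2}^2\bigr),
\]
after which splitting $\tau^2\|\mathfrak{h}\|_{H^1}^2 = \tau^2\|\mathfrak{h}\|_{L^2}^2 + \tau^2\|\nabla\mathfrak{h}\|_{L^2}^2$ and using $\tau^2\|\mathfrak{h}\|_{L^2}^2 \leq \|\mathfrak{h}\|_{L^2}^2$ yields the stated form. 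For~(\ref{chi J in low frequency}), I integrate~(\ref{definition of chi}) and apply Cauchy--Schwarz together with~(\ref{alpha J in low frequency 2}):
\[
\|\chi_J - \tfrac{1}{2}\chi(0)\|_{L^2}^2 \leq \tfrac{4\tau^3}{3}\int_0^\tau \|\alpha_J\|_{L^2}^2\,ds \lesssim \tau^4\bigl(\|\mathfrak{h}\|_{H^1}^2 + m_J^{-2}\|\chi(0)\|_{L^2}^2\bigr),
\]
and for $\tau \leq m_J^{-1}$ the last term is bounded by $\|\chi(0)\|_{L^2}^2$ and merges into the $\chi_J(0)$ contribution.
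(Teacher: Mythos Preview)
Your argument is correct and follows essentially the same strategy as the paper: an energy estimate with the multiplier $\partial_\tau\alpha_J$, exploiting the good dissipation term $\tfrac{1}{\tau}\|\partial_\tau\alpha_J\|^2$, followed by integration for~(\ref{alpha J in low frequency 2}) and~(\ref{chi J in low frequency}). The only presentational difference is that the paper first introduces the rescaled time $t=m_J\tau\in[0,1]$, so that the factors $1/m_J^2$ appear directly as coefficients in the transformed equation (in front of the zeroth-order, Laplacian, and $\chi_J$ terms); the $\chi_J$-coupling then already carries a $1/m_J^2$ and no weighted Cauchy--Schwarz against the dissipation is needed. You instead stay in the $\tau$ variable and recover the same $1/m_J^2$ from the length of the integration interval via the splitting $|\langle\partial_\tau\alpha_J,f_3\chi_J\rangle|\le\tfrac{\varepsilon}{\tau}\|\partial_\tau\alpha_J\|^2+C\tau\|\chi_J\|^2$. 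Both routes are equivalent; the rescaling is somewhat cleaner and is reused systematically in the subsequent propositions of the paper.
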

\begin{proof}
    We introduce the new time variable $t=m_J\tau\in[0,1]$. Equation (\ref{main equation}) can be written as:
    \[\partial_{t}^2\alpha_J+\frac{1}{t}\partial_{t}\alpha_J+4q'\frac{\alpha_J}{m_J^2}-\frac{4}{(t^2/m_J^2+1)^2}\cdot\Delta\frac{\alpha_J}{m_J^2}=\frac{f_1'(t)}{m_J^2}t\partial_{t}\alpha_J+\frac{f_2'(t)}{m_J^4}t^2\alpha_J+\frac{f_3'(t)}{m_J^2}\chi_J\]
    where $f'_1,f'_2,f'_3$ are bounded functions of $t.$ We also notice that from (\ref{definition of chi}) we have:
    \[\chi_J(t)=\chi(0)+\frac{1}{m_J^2}\int_0^t2t'\alpha_J(t')dt'\]
    In the following inequalities, the implicit constant will be independent of $m_J, t,$ and the asymptotic data. Multiplying the equation by $\partial_t\alpha_J$ and integrating by parts, we obtain the standard energy estimate:
    \[\big\|\partial_{t}\alpha_J\big\|^2_{L^2}(t)+\int_0^t\frac{1}{t'}\big\|\partial_{t}\alpha_J\big\|^2_{L^2}(t')dt'+\frac{1}{m_J^2}\big\|\alpha_J\big\|^2_{H^1}(t)+\frac{1}{m_J^4}\int_0^tt'\big\|\nabla\alpha_J\big\|^2_{L^2}(t')dt'\lesssim\]\[\lesssim\frac{1}{m_J^2}\big\|\mathfrak{h}\big\|^2_{H^1}+\int_0^t\frac{t'}{m_J^2}\big\|\partial_{t}\alpha_J\big\|^2_{L^2}+\int_0^t\int_S\frac{1}{m_J^2}\bigg(|\chi_J|+\frac{t'^2}{m_J^2}|\alpha_J|\bigg)\cdot|\partial_t\alpha_J|dt'\]
    We notice that for $t\in\big[0,1\big],$ we can apply Gronwall to $\partial_t\alpha$ to obtain:
    \[\big\|\partial_{t}\alpha_J\big\|^2_{L^2}+\frac{1}{m_J^2}\big\|\alpha_J\big\|^2_{H^1}\lesssim\frac{1}{m_J^2}\big\|\mathfrak{h}\big\|^2_{H^1}+\int_0^t\frac{1}{m_J^4}\bigg(\big\|\chi_J\big\|^2_{L^2}+\frac{1}{m_J^4}\big\|\alpha_J\big\|^2_{L^2}\bigg)\]
    We also have the bound:
    \[\big\|\chi_J\big\|^2_{L^2}\lesssim\big\|\chi(0)\big\|^2_{L^2}+\frac{1}{m_J^4}\int_0^t\big\|\alpha_J\big\|^2_{L^2}dt'\]
    We combine our previous two estimates to obtain:
    \[\big\|\partial_{t}\alpha_J\big\|^2_{L^2}+\frac{1}{m_J^2}\cdot\big\|\alpha_J\big\|^2_{H^1}\lesssim\frac{1}{m_J^2}\big\|\mathfrak{h}\big\|^2_{H^1}+\frac{1}{m_J^4}\big\|\chi(0)\big\|^2_{L^2}+\frac{1}{m_J^8}\int_0^t\big\|\alpha_J\big\|^2_{L^2}dt'\]
    We apply Gronwall to obtain (\ref{alpha J in low frequency 1}). We use this in the inequality:
    \[\big\|\alpha_J\big\|_{L^2}\lesssim \big\|\mathfrak{h}\big\|_{L^2}+\int_0^{\tau}\big\|\partial_{\tau}\alpha_J\big\|_{L^2}d\tau'\]
    in order to prove (\ref{alpha J in low frequency 2}). Finally, these bounds together with (\ref{definition of chi}) imply (\ref{chi J in low frequency}).
\end{proof}

We prove a similar low frequency regime estimate for singular component of the solution:

\begin{proposition}\label{low frequency alpha Y proposition} For any $\tau\leq(2m_Y)^{-1}$, we have that $\alpha_Y$ satisfies the estimate:
    \begin{equation}\label{alpha Y in low frequency 1}
        \bigg\|\frac{\alpha_Y}{\log(m_Y\tau)}\bigg\|^2_{H^1}+\bigg\|\partial_{\tau}\frac{\alpha_Y}{\log(m_Y\tau)}\bigg\|^2_{L^2}\lesssim \big\|\mathcal{O}\big\|^2_{H^1}+\frac{1}{m_Y^2}\big\|\chi(0)\big\|^2_{L^2}
    \end{equation}
    \begin{equation}\label{alpha Y in low frequency 2}
        \bigg\|\frac{\alpha_Y}{\log(m_Y\tau)}\bigg\|^2_{L^2}\lesssim \big\|\mathcal{O}\big\|^2_{L^2}+\tau^2\big\|\nabla\mathcal{O}\big\|^2_{L^2}+\frac{\tau^2}{m_Y^2}\big\|\chi(0)\big\|^2_{L^2}
    \end{equation}
    \begin{equation}\label{chi Y in low frequency}
        \big\|\chi_Y\big\|^2_{L^2}\lesssim \big\|\chi(0)\big\|^2_{L^2}+\tau^2\big\|\mathcal{O}\big\|^2_{H^1}
    \end{equation}
    where the implicit constant in the above inequalities is independent of $m_Y, \tau,$ and the asymptotic data.
\end{proposition}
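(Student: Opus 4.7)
\emph{Plan.} My approach is to mimic the proof of Proposition \ref{low frequency alpha J proposition} after extracting the logarithmic singularity of $\alpha_Y$ at $\tau=0$. Set $\widetilde{\alpha}_Y := \alpha_Y - \mathcal{O}\log(m_Y\tau)$, which by the prescribed expansion satisfies $\widetilde{\alpha}_Y = O(\tau^2|\log\tau|^2)$ and $\partial_\tau\widetilde{\alpha}_Y = O(\tau|\log\tau|^2)$ as $\tau\to 0$. Plugging this decomposition into (\ref{main equation}) and using the identity $\partial_\tau^2[\log(m_Y\tau)] + (1/\tau)\partial_\tau[\log(m_Y\tau)] = 0$, one finds that $\widetilde{\alpha}_Y$ satisfies the same equation as $\alpha_Y$ does with an additional inhomogeneous source
$$\mathcal{F} = f_1\mathcal{O} + (f_2\tau^2 - 4q')\mathcal{O}\log(m_Y\tau) + \frac{4\log(m_Y\tau)}{(\tau^2+1)^2}\Delta\mathcal{O}.$$

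Next, after rescaling time via $t = m_Y\tau \in [0,1/2]$, I would replay the multiplier argument of Proposition \ref{low frequency alpha J proposition}: test the equation for $\widetilde{\alpha}_Y$ against $\partial_t\widetilde{\alpha}_Y$, integrate by parts over $S^n\times [0,t]$, and close via Gronwall. The contributions from $\mathcal{F}$ of the form $\mathcal{O}$ and $\log t \cdot \mathcal{O}$ are controlled by Cauchy-Schwarz together with $\int_0^{1/2}\log^2 t\,dt < \infty$, giving a source contribution $\lesssim \|\mathcal{O}\|_{L^2}^2/m_Y^2$. The delicate term is $\log t \cdot \Delta\mathcal{O}/m_Y^2$, which naively would cost a derivative and lead to a bound involving $\|\mathcal{O}\|_{H^2}$. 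To avoid this, I integrate by parts in space to transfer the Laplacian onto $\nabla\partial_t\widetilde{\alpha}_Y$, then by parts in time to move the $t'$-derivative onto the weight $\log t'$. The boundary term at $t'=0$ vanishes since $\|\nabla\widetilde{\alpha}_Y\|_{L^2}$ decays faster than $1/|\log t'|$ grows, and the bulk contribution $(1/t')\nabla\mathcal{O}\cdot\nabla\widetilde{\alpha}_Y$ remains integrable against the vanishing of $\|\nabla\widetilde{\alpha}_Y\|_{L^2}$. This yields the analogue of (\ref{alpha J in low frequency 1}) for $\widetilde{\alpha}_Y$:
$$\|\partial_t\widetilde{\alpha}_Y\|_{L^2}^2 + \frac{1}{m_Y^2}\|\widetilde{\alpha}_Y\|_{H^1}^2 \lesssim \frac{1}{m_Y^2}\|\mathcal{O}\|_{H^1}^2 + \frac{1}{m_Y^4}\|\chi(0)\|_{L^2}^2.$$

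Finally, I translate these into the stated bounds. Writing $\alpha_Y/\log(m_Y\tau) = \mathcal{O} + \widetilde{\alpha}_Y/\log(m_Y\tau)$ and using $|\log(m_Y\tau)| \geq \log 2$ on $[0,(2m_Y)^{-1}]$, the estimate (\ref{alpha Y in low frequency 1}) follows from the bound above, after a short calculation for $\partial_\tau[\widetilde{\alpha}_Y/\log(m_Y\tau)] = \partial_\tau\widetilde{\alpha}_Y/\log(m_Y\tau) - \widetilde{\alpha}_Y/(\tau\log^2(m_Y\tau))$ in which the second term is controlled by a Hardy-type inequality exploiting $\widetilde{\alpha}_Y(0) = 0$. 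The bound (\ref{alpha Y in low frequency 2}) follows by integrating this derivative from $\tau = 0$, and (\ref{chi Y in low frequency}) is obtained by inserting these estimates into (\ref{definition of chi}) and noting that the logarithmic factor in $\|\alpha_Y\|_{L^2}$ is tamed by the $\tau'$-weight in the integrand. The main obstacle is the Laplacian source term $\log(m_Y\tau)\Delta\mathcal{O}$: the double integration by parts described above, together with the vanishing of $\widetilde{\alpha}_Y$ at $\tau=0$, is precisely what prevents a loss of a derivative in $\mathcal{O}$.
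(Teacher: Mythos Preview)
Your approach --- subtracting the leading singular part $\mathcal{O}\log(m_Y\tau)$ and running the energy estimate on the remainder $\widetilde{\alpha}_Y$ --- is different from the paper's, and there is a genuine gap in the treatment of the $\Delta\mathcal{O}$ source. After your double integration by parts, the dominant bulk contribution is (in the rescaled variable $t=m_Y\tau$)
\[
\frac{4}{m_Y^2}\int_0^t \frac{1}{t'}\,\langle\nabla\mathcal{O},\nabla\widetilde{\alpha}_Y(t')\rangle\,dt'.
\]
You assert this ``remains integrable against the vanishing of $\|\nabla\widetilde{\alpha}_Y\|_{L^2}$'', but that vanishing is only qualitative, coming from the formal expansion; you have no quantitative rate in terms of $\|\mathcal{O}\|_{H^1}$ and $\|\chi(0)\|_{L^2}$ until the estimate itself is established. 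Any Cauchy--Schwarz splitting places the non-integrable weight $1/t'$ on $\|\nabla\mathcal{O}\|_{L^2}^2$, and the good bulk terms available in the energy (namely $\int_0^t\frac{1}{t'}\|\partial_t\widetilde{\alpha}_Y\|_{L^2}^2$ and $\frac{1}{m_Y^4}\int_0^t t'\|\nabla\widetilde{\alpha}_Y\|_{L^2}^2$) carry the wrong weights to absorb it. The argument does not close.

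The paper sidesteps this obstruction by \emph{dividing} the equation by $\log t$ rather than subtracting $\mathcal{O}\log t$. Setting $\beta=\alpha_Y/\log t$, one has $\beta(0)=\mathcal{O}$, $\partial_t\beta(0)=0$, and a direct computation gives
\[
\partial_t^2\beta + \frac{1}{t}\Big(1+\frac{2}{\log t}\Big)\partial_t\beta + \frac{4q'}{m_Y^2}\beta - \frac{4}{m_Y^2(t^2/m_Y^2+1)^2}\Delta\beta = \text{(bounded RHS in }\beta,\partial_t\beta,\chi_Y\text{)}.
\]
Because $\log t$ is independent of the spatial variables, the Laplacian acts directly on $\beta$ and no $\Delta\mathcal{O}$ source term ever appears. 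The only cost is the perturbed damping coefficient $1+2/\log t$, which satisfies $1+2/\log t\gtrsim 1$ for $t\le 1/10$ and is handled on $[1/10,1/2]$ by an ordinary Gronwall step. The standard multiplier $\partial_t\beta$ then yields (\ref{alpha Y in low frequency 1}) exactly as in Proposition~\ref{low frequency alpha J proposition}, and (\ref{alpha Y in low frequency 2})--(\ref{chi Y in low frequency}) follow by integration.
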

\begin{proof}
    We introduce the new time variable $t=m_Y\tau\in[0,1/2]$. We notice that the expansion for $\alpha_Y$ implies:
    \[\frac{\alpha_Y}{\log t}\bigg|_{t=0}=\mathcal{O},\ \partial_t\bigg(\frac{\alpha_Y}{\log t}\bigg)\bigg|_{t=0}=0\]
    As before, equation (\ref{main equation}) can be written as:
    \[\partial_{t}^2\alpha_Y+\frac{1}{t}\partial_{t}\alpha_Y+4q'\frac{\alpha_Y}{m_Y^2}-\frac{4}{(t^2/m_Y^2+1)^2}\cdot\Delta\frac{\alpha_Y}{m_Y^2}=\frac{f_1'(t)}{m_Y^2}t\partial_{t}\alpha_Y+\frac{f_2'(t)}{m_Y^4}t^2\alpha_Y+\frac{f_3'(t)}{m_Y^2}\chi_Y\]
    where $f'_1,f'_2,f'_3$ are bounded functions of $t.$ We multiply by $\frac{1}{\log t}$ to get:
    \[\partial_{t}^2\bigg(\frac{\alpha_Y}{\log t}\bigg)+\frac{1}{t}\bigg(1+\frac{2}{\log t}\bigg)\partial_{t}\bigg(\frac{\alpha_Y}{\log t}\bigg)+\frac{4q'}{m_Y^2}\cdot\frac{\alpha_Y}{\log t}-\frac{4}{(t^2/m_Y^2+1)^2}\cdot\frac{1}{m_Y^2}\Delta\frac{\alpha_Y}{\log t}=\]\[=\frac{f_1''(t)}{m_Y^2}t\partial_{t}\bigg(\frac{\alpha_Y}{\log t}\bigg)+\frac{f_2''(t)}{m_Y^2}\cdot\frac{\alpha_Y}{\log t}+\frac{f_3''(t)}{m_Y^2}\chi_Y\]
    where $f''_1,f''_2,f''_3$ are bounded functions of $t.$ We also have from (\ref{definition of chi}):
    \[\chi_Y(t)=\chi(0)+\frac{1}{m_Y^2}\int_0^t2t'\alpha_Y(t')dt'\]
    which implies the bound:
    \[\big\|\chi_Y\big\|^2_{L^2}\lesssim\big\|\chi(0)\big\|^2_{L^2}+\frac{1}{m_Y^4}\int_0^t\bigg\|\frac{\alpha_Y}{\log t}\bigg\|^2_{L^2}dt'\]
    In the following inequalities, the implicit constant will be independent of $m_Y, t,$ and the asymptotic data. Using the equation for $\frac{\alpha_Y}{\log t},$ we obtain the standard energy estimate:
    \[\bigg\|\partial_{t}\frac{\alpha_Y}{\log t}\bigg\|^2_{L^2}(t)+\int_0^t\frac{1}{t'}\bigg\|\partial_{t}\frac{\alpha_Y}{\log t}\bigg\|^2_{L^2}(t')dt'+\frac{1}{m_Y^2}\bigg\|\frac{\alpha_Y}{\log t}\bigg\|^2_{H^1}(t)+\frac{1}{m_Y^4}\int_0^tt'\bigg\|\frac{\nabla\alpha_Y}{\log t}\bigg\|^2_{L^2}(t')dt'\lesssim\]\[\lesssim\frac{1}{m_Y^2}\big\|\mathcal{O}\big\|^2_{H^1}+\int_0^t\bigg(\frac{t'}{m_Y^2}+\frac{\textbf{1}_{[1/10,1/2]}}{t'|\log t'|}\bigg)\bigg\|\partial_{t}\frac{\alpha_Y}{\log t}\bigg\|^2_{L^2}+\int_0^t\int_S\frac{1}{m_Y^2}\bigg(|\chi_Y|+\bigg|\frac{\alpha_Y}{\log t}\bigg|\bigg)\cdot\bigg|\partial_t\frac{\alpha_Y}{\log t}\bigg|\]
    We point out that the error term $\frac{1}{t'|\log t'|}\big\|\partial_{t}\frac{\alpha_Y}{\log t}\big\|^2_{L^2}\cdot\textbf{1}_{[1/10,1/2]}$ appears on the RHS because for $t\in[0,1/10]$ we have $1+2/\log t\gtrsim1.$ Since $t\in\big[0,\frac{1}{2}\big],$ we apply Gronwall to $\partial_t\frac{\alpha_Y}{\log t}$ to obtain:
    \[\bigg\|\partial_{t}\frac{\alpha_Y}{\log t}\bigg\|^2_{L^2}+\frac{1}{m_Y^2}\cdot\bigg\|\frac{\alpha_Y}{\log t}\bigg\|^2_{H^1}(t)\lesssim\frac{1}{m_Y^2}\big\|\mathcal{O}\big\|^2_{H^1}+\frac{1}{m_Y^4}\int_0^t\bigg(\big\|\chi_Y\big\|^2_{L^2}+\bigg\|\frac{\alpha_Y}{\log t}\bigg\|^2_{L^2}\bigg)\lesssim\]
    \[\lesssim\frac{1}{m_Y^2}\big\|\mathcal{O}\big\|^2_{H^1}+\frac{1}{m_Y^4}\big\|\chi(0)\big\|^2_{L^2}+\frac{1}{m_Y^4}\int_0^t\bigg\|\frac{\alpha_Y}{\log t}\bigg\|^2_{L^2}(t')dt'\]
    We apply Gronwall to obtain (\ref{alpha Y in low frequency 1}). We use this in the inequality:
    \[\bigg\|\frac{\alpha_Y}{\log(m_Y\tau)}\bigg\|_{L^2}\lesssim \big\|\mathcal{O}\big\|_{L^2}+\int_0^{\tau}\bigg\|\partial_{\tau}\frac{\alpha_Y}{\log(m_Y\tau)}\bigg\|_{L^2}d\tau'\]
    in order to prove (\ref{alpha Y in low frequency 2}). Finally, these bounds together with (\ref{definition of chi}) imply (\ref{chi Y in low frequency}).
\end{proof}

Next, we prove a high frequency regime estimate which applies both to the regular component and to the singular component of the solution:

\begin{proposition}\label{high frequency forward estimate} For any parameter $m\geq 1,$ and any $\tau\in\big[(2m)^{-1},1\big]$ we have that $\alpha$ satisfies the estimate:
    \[\frac{1}{\tau}\big\|\alpha\big\|^2_{L^2}+\tau\big\|\nabla\alpha\big\|^2_{L^2}+\tau\big\|\partial_{\tau}\alpha\big\|^2_{L^2}\lesssim \frac{1}{m}\cdot\bigg(m^2\big\|\alpha\big\|^2_{L^2}+\big\|\nabla\alpha\big\|^2_{L^2}+\big\|\partial_{\tau}\alpha\big\|^2_{L^2}+m\big\|\chi\big\|^2_{L^2}\bigg)\bigg|_{\tau=(2m)^{-1}}\]
    where the implicit constant in the above inequality is independent of $m$ and $\tau.$
\end{proposition}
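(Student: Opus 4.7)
The structure of the inequality---with conformal weights $(1/\tau,\tau,\tau)$ on the left and overall prefactor $1/m$ on the right---suggests renormalizing (\ref{main equation}) into an undamped wave equation. My plan is to set $\beta := \sqrt{\tau}\,\alpha$, which simultaneously absorbs the $\frac{1}{\tau}\partial_\tau\alpha$ damping and is consistent with the expected high-frequency (Bessel-type) asymptotic $\alpha \sim 1/\sqrt{\tau}$. A direct calculation transforms (\ref{main equation}) into
\[
\partial_\tau^2\beta + \frac{1}{4\tau^2}\beta + 4q'\beta - \frac{4}{(\tau^2+1)^2}\Delta\beta \;=\; \tau^{1/2}\bigl[f_1(\tau)\tau\partial_\tau\alpha + f_2(\tau)\tau^2\alpha + f_3(\tau)\chi\bigr],
\]
with the new singular potential $\frac{1}{4\tau^2}\beta$ entering with a favorable sign.

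Next I would apply the multiplier $\partial_\tau\beta$ and integrate by parts in $\tau$ and over $S^n$. Two identities drive the estimate: $\partial_\tau\beta\cdot\frac{\beta}{4\tau^2} = \partial_\tau\bigl(\frac{\beta^2}{8\tau^2}\bigr) + \frac{\beta^2}{4\tau^3}$ (positive bulk), and the spatial integration by parts against $\frac{4}{(\tau^2+1)^2}\Delta\beta$ produces the boundary $\int\frac{2|\nabla\beta|^2}{(\tau^2+1)^2}$ together with the bulk $\int\frac{8\tau|\nabla\beta|^2}{(\tau^2+1)^3}$, again positive because $\frac{2}{(\tau^2+1)^2}$ is decreasing. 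Defining
\[
\widetilde{E}(\tau) := \int_{S^n}\Bigl[\tfrac{1}{2}(\partial_\tau\beta)^2 + \tfrac{\beta^2}{8\tau^2} + 2q'\beta^2 + \tfrac{2|\nabla\beta|^2}{(\tau^2+1)^2}\Bigr],
\]
this produces the identity $\partial_\tau\widetilde{E}+(\text{non-negative bulk}) = \int_{S^n}\tau^{1/2}(\text{forcing})\cdot\partial_\tau\beta$.

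To control the forcing I would substitute $\alpha=\tau^{-1/2}\beta$ and $\partial_\tau\alpha=\tau^{-1/2}\partial_\tau\beta-\tfrac{1}{2}\tau^{-3/2}\beta$ and apply Cauchy--Schwarz; using $\tau\le 1$, every resulting term is dominated by $\widetilde{E}(\tau)+\|\chi(\tau)\|_{L^2}^2$. A parallel estimate for $\partial_\tau\|\chi\|_{L^2}^2$ follows from (\ref{definition of chi}) and $\alpha=\tau^{-1/2}\beta$. Gronwall's inequality on $[\tau_0,\tau]$, with $\tau_0:=(2m)^{-1}$, then yields
\[
\widetilde{E}(\tau)+\|\chi(\tau)\|_{L^2}^2 \;\lesssim\; \widetilde{E}(\tau_0)+\|\chi(\tau_0)\|_{L^2}^2
\]
with constant independent of $m$ and $\tau$.

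It remains to match both sides with the claim. At $\tau_0=(2m)^{-1}$, the dominant term $\frac{\beta^2}{8\tau_0^2}=\frac{\alpha^2}{8\tau_0}\sim m\,\alpha^2$ supplies the $\frac{1}{m}\cdot m^2\|\alpha\|_{L^2}^2$ contribution, the terms $\frac{1}{2}(\partial_\tau\beta)^2$ and $\frac{2|\nabla\beta|^2}{(\tau_0^2+1)^2}$ give the $\frac{1}{m}(\|\partial_\tau\alpha\|_{L^2}^2+\|\nabla\alpha\|_{L^2}^2)$ piece, and $\|\chi(\tau_0)\|_{L^2}^2$ matches the final term. For $\tau\in[\tau_0,1]$, the identity $\sqrt{\tau}\partial_\tau\alpha=\partial_\tau\beta-\frac{\beta}{2\tau}$ yields $\tau\|\partial_\tau\alpha\|_{L^2}^2\lesssim \widetilde{E}(\tau)$, while $\frac{1}{\tau}\|\alpha\|_{L^2}^2\lesssim \widetilde{E}(\tau)$ and $\tau\|\nabla\alpha\|_{L^2}^2\lesssim \widetilde{E}(\tau)$ (using $\tau\le 1$) complete the match with the left-hand side. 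The main subtlety is finding this renormalization in the first place: once $\beta=\sqrt{\tau}\alpha$ is chosen, the $\frac{1}{4\tau^2}$ potential acquires the favorable sign, the boundary weights align with the conformal $(1/\tau,\tau,\tau)$ scaling, and the remaining analysis reduces to a standard energy/Gronwall argument.
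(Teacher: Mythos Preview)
Your proposal is correct and follows essentially the same approach as the paper: the paper first rescales time to $t=m\tau$ and then multiplies by $\sqrt{t}$, obtaining an equation for $\alpha\sqrt{t}=\sqrt{m}\,\beta$ that is identical (up to the constant factor $\sqrt{m}$) to your equation for $\beta=\sqrt{\tau}\,\alpha$, and then runs the same energy/Gronwall argument. The only cosmetic differences are that the paper carries out Gronwall in the $t$ variable on $[1/2,m]$ with integrand $t'/m^2$, and bounds $\chi$ via the integral formula rather than including $\|\chi\|_{L^2}^2$ in the Gronwall quantity; both variants yield the same $m$-independent constant.
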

\begin{proof}
    We introduce the new time variable $t=m\tau\in[1/2,m]$. As before, equation (\ref{main equation}) can be written as:
    \[\partial_{t}^2\alpha+\frac{1}{t}\partial_{t}\alpha+4q'\frac{\alpha}{m^2}-\frac{4}{(t^2/m^2+1)^2}\cdot\Delta\frac{\alpha}{m^2}=\frac{f_1'(t)}{m^2}t\partial_{t}\alpha+\frac{f_2'(t)}{m^4}t^2\alpha+\frac{f_3'(t)}{m^2}\chi\]
    where $f'_1,f'_2,f'_3$ are bounded functions of $t.$ We multiply by $\sqrt{t}$ to get:
    \[\partial_{t}^2\big(\alpha\sqrt{t}\big)+\frac{1}{4t^2}\alpha\sqrt{t}+\frac{4q'}{m^2}\alpha\sqrt{t}-\frac{4}{(t^2/m^2+1)^2}\cdot\Delta\frac{\alpha\sqrt{t}}{m^2}=\frac{f_1''(t)}{m^2}t\partial_{t}\big(\alpha\sqrt{t}\big)+\frac{f_2''(t)}{m^2}\alpha\sqrt{t}+\frac{f_3''(t)}{m^2}\chi\sqrt{t}\]
    where $f''_1,f''_2,f''_3$ are bounded functions of $t.$ We also have from (\ref{definition of chi}):
    \[\chi(t)=\chi(1/2)+\frac{1}{m^2}\int_{1/2}^t2t'\alpha(t')dt'\]
    which implies the bound for $t\in[1/2,m]$:
    \[\big\|\chi\big\|^2_{L^2}\lesssim\big\|\chi(1/2)\big\|^2_{L^2}+\frac{1}{m^2}\int_{1/2}^t\big\|\alpha\sqrt{t'}\big\|^2_{L^2}dt'\]
    In the following inequalities, the implicit constant will be independent of $m$ and $t.$ Using the equation for $\alpha\sqrt{t},$ we obtain the standard energy estimate:
    \[\big\|\partial_{t}\big(\alpha\sqrt{t}\big)\big\|^2_{L^2}(t)+\frac{1}{t^2}\big\|\alpha\sqrt{t}\big\|^2_{L^2}(t)+\int_{1/2}^t\frac{1}{t'^3}\big\|\alpha\sqrt{t'}\big\|^2_{L^2}dt'+\frac{1}{m^2}\big\|\nabla\alpha\sqrt{t}\big\|^2_{L^2}(t)+\frac{1}{m^4}\int_{1/2}^tt'\big\|\nabla\alpha\sqrt{t'}\big\|^2_{L^2}dt'\lesssim\]\[\lesssim\bigg(\big\|\partial_{t}\alpha\big\|^2_{L^2}+\big\|\alpha\big\|^2_{L^2}+\frac{1}{m^2}\big\|\nabla\alpha\big\|^2_{L^2}\bigg)\bigg|_{t=\frac{1}{2}}+\int_{1/2}^t\frac{t'}{m^2}\big\|\partial_{t}\big(\alpha\sqrt{t'}\big)\big\|^2_{L^2}+\int_{1/2}^t\int_S\frac{\sqrt{t'}}{m^2}\big(|\chi|+|\alpha|\big)\cdot\big|\partial_{t}\big(\alpha\sqrt{t'}\big)\big|\]
    We remark that we can bound:
    \[\int_{1/2}^t\int_S\frac{\sqrt{t'}}{m^2}\big(|\chi|+|\alpha|\big)\cdot\big|\partial_{t}\big(\alpha\sqrt{t'}\big)\big|\lesssim\int_{1/2}^t\frac{t'}{m^2}\big\|\partial_{t}\big(\alpha\sqrt{t'}\big)\big\|^2_{L^2}+\frac{1}{m^2}\int_{1/2}^t\big\|\alpha\big\|^2_{L^2}+\big\|\chi\big\|^2_{L^2}\lesssim\]\[\lesssim\frac{1}{m}\big\|\chi(1/2)\big\|^2_{L^2}+\int_{1/2}^t\frac{t'}{m^2}\big\|\partial_{t}\big(\alpha\sqrt{t'}\big)\big\|^2_{L^2}dt'+\frac{1}{m^2}\int_{1/2}^t\big\|\alpha\big\|^2_{L^2}dt'\]
    Combining the previous two inequalities we have:
    \[\big\|\partial_{t}\big(\alpha\sqrt{t}\big)\big\|^2_{L^2}(t)+\frac{1}{t^2}\big\|\alpha\sqrt{t}\big\|^2_{L^2}(t)+\frac{1}{m^2}\big\|\nabla\alpha\sqrt{t}\big\|^2_{L^2}(t)\lesssim\]\[\lesssim\bigg(\big\|\partial_{t}\alpha\big\|^2_{L^2}+\big\|\alpha\big\|^2_{L^2}+\frac{1}{m^2}\big\|\nabla\alpha\big\|^2_{L^2}+\frac{1}{m}\big\|\chi\big\|^2_{L^2}\bigg)\bigg|_{t=\frac{1}{2}}+\int_{1/2}^t\frac{t'}{m^2}\bigg(\big\|\partial_{t}\big(\alpha\sqrt{t'}\big)\big\|^2_{L^2}+\frac{1}{t'^2}\big\|\alpha\sqrt{t'}\big\|^2_{L^2}\bigg)\]
    Since $t\leq m,$ we can apply Gronwall to obtain:
     \[\big\|\partial_{t}\big(\alpha\sqrt{t}\big)\big\|^2_{L^2}(t)+\frac{1}{t^2}\big\|\alpha\sqrt{t}\big\|^2_{L^2}(t)+\frac{1}{m^2}\big\|\nabla\alpha\sqrt{t}\big\|^2_{L^2}(t)\lesssim\bigg(\big\|\partial_{t}\alpha\big\|^2_{L^2}+\big\|\alpha\big\|^2_{L^2}+\frac{1}{m^2}\big\|\nabla\alpha\big\|^2_{L^2}+\frac{1}{m}\big\|\chi\big\|^2_{L^2}\bigg)\bigg|_{t=\frac{1}{2}}\]
     Finally, we can rewrite this as:
     \[\frac{1}{\tau}\big\|\alpha\big\|^2_{L^2}+\tau\big\|\nabla\alpha\big\|^2_{L^2}+\tau\big\|\partial_{\tau}\alpha\big\|^2_{L^2}\lesssim \frac{1}{m}\cdot\bigg(m^2\big\|\alpha\big\|^2_{L^2}+\big\|\nabla\alpha\big\|^2_{L^2}+\big\|\partial_{\tau}\alpha\big\|^2_{L^2}+m\big\|\chi\big\|^2_{L^2}\bigg)\bigg|_{\tau=(2m)^{-1}}\]
\end{proof}

As previously explained, we combine the low frequency regime and the high frequency regime estimates for the regular component of the solution, and we obtain the following result:

\begin{corollary}\label{forward estimate alpha J}
There exists an implicit constant independent of $m_J$ and the asymptotic data, such that $\alpha_J$ satisfies the estimate:
    \[\bigg(\big\|\alpha_J\big\|^2_{H^1}+\big\|\partial_{\tau}\alpha_J\big\|^2_{L^2}\bigg)\bigg|_{\tau=1}\lesssim m_J\big\|\mathfrak{h}\big\|^2_{L^2}+\frac{1}{m_J}\big\|\nabla\mathfrak{h}\big\|^2_{L^2}+\big\|\chi(0)\big\|^2_{L^2}\]
\end{corollary}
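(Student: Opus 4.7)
The plan is to chain the low frequency estimates of Proposition \ref{low frequency alpha J proposition} with the high frequency estimate of Proposition \ref{high frequency forward estimate}, using the transition time $\tau=(2m_J)^{-1}$. Since $m_J\geq 1$, we have $(2m_J)^{-1}\leq (m_J)^{-1}\leq 1$, so both ranges of validity match up at this transition point: Proposition \ref{low frequency alpha J proposition} applies at $\tau=(2m_J)^{-1}$, while Proposition \ref{high frequency forward estimate} (with parameter $m=m_J$) propagates data from $\tau=(2m_J)^{-1}$ up to $\tau=1$.

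First, I would apply Proposition \ref{high frequency forward estimate} with $m=m_J$ evaluated at $\tau=1$, which yields
\[
\bigl(\|\alpha_J\|_{H^1}^2+\|\partial_\tau\alpha_J\|_{L^2}^2\bigr)\bigl|_{\tau=1}\lesssim \Bigl(m_J\|\alpha_J\|_{L^2}^2+\tfrac{1}{m_J}\|\nabla\alpha_J\|_{L^2}^2+\tfrac{1}{m_J}\|\partial_\tau\alpha_J\|_{L^2}^2+\|\chi_J\|_{L^2}^2\Bigr)\Bigl|_{\tau=(2m_J)^{-1}}.
\]
Next, I would substitute the three low-frequency bounds at $\tau=(2m_J)^{-1}$: the estimate \eqref{alpha J in low frequency 2} gives $m_J\|\alpha_J\|_{L^2}^2\lesssim m_J\|\mathfrak{h}\|_{L^2}^2+\tfrac{1}{m_J}\|\nabla\mathfrak{h}\|_{L^2}^2+\tfrac{1}{m_J^3}\|\chi(0)\|_{L^2}^2$; the estimate \eqref{alpha J in low frequency 1} controls both $\tfrac{1}{m_J}\|\nabla\alpha_J\|_{L^2}^2$ and $\tfrac{1}{m_J}\|\partial_\tau\alpha_J\|_{L^2}^2$ by $\tfrac{1}{m_J}\|\mathfrak{h}\|_{H^1}^2+\tfrac{1}{m_J^3}\|\chi(0)\|_{L^2}^2$; and the estimate \eqref{chi J in low frequency} gives $\|\chi_J\|_{L^2}^2\lesssim \|\chi(0)\|_{L^2}^2+\tfrac{1}{m_J^4}\|\mathfrak{h}\|_{H^1}^2$. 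Summing these, the $m_J\|\mathfrak{h}\|_{L^2}^2$ piece dominates all the $\|\mathfrak{h}\|_{L^2}^2$ contributions, the $\tfrac{1}{m_J}\|\nabla\mathfrak{h}\|_{L^2}^2$ piece dominates all the $\|\nabla\mathfrak{h}\|_{L^2}^2$ contributions (using $m_J\geq 1$, so $\tfrac{1}{m_J^4}\leq\tfrac{1}{m_J}$), and the $\|\chi(0)\|_{L^2}^2$ piece dominates all the $\chi(0)$ contributions.

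The only subtle point is the appearance of the factor $m_J$ in front of $\|\mathfrak{h}\|_{L^2}^2$ in the final estimate, which is forced by the $m_J\|\alpha_J\|_{L^2}^2$ term coming from the high-frequency estimate: this is precisely the manifestation of the fact that energy at $\tau=1$ sees an $m_J^{1/2}$ growth relative to the pointwise $L^2$ size of $\mathfrak{h}$ at $\mathcal{I}^-$, reflecting the scaling mismatch between the $L^2$ norm of the asymptotic datum and the $L^2$ norm of the Cauchy datum at a finite time hypersurface. Nothing in this argument is hard once the two propositions are in place; the work is essentially bookkeeping, ensuring that every product of inverse powers of $m_J$ with the asymptotic data is absorbed into the three stated right-hand side terms.
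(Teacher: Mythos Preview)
Your proposal is correct and follows exactly the same approach as the paper: apply Proposition~\ref{high frequency forward estimate} with $m=m_J$ at $\tau=1$, then feed in the three low-frequency bounds \eqref{alpha J in low frequency 1}, \eqref{alpha J in low frequency 2}, \eqref{chi J in low frequency} evaluated at the transition time $\tau=(2m_J)^{-1}$, and collect terms. The bookkeeping you describe matches the paper's proof line by line.
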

\begin{proof}
    We apply Proposition \ref{high frequency forward estimate} to $\alpha_J,\chi_J,$ and $m_J:$
    \[\bigg(\big\|\alpha_J\big\|^2_{H^1}+\big\|\partial_{\tau}\alpha_J\big\|^2_{L^2}\bigg)\bigg|_{\tau=1}\lesssim \frac{1}{m_J}\bigg(m_J^2\big\|\alpha_J\big\|^2_{L^2}+\big\|\nabla\alpha_J\big\|^2_{L^2}+\big\|\partial_{\tau}\alpha_J\big\|^2_{L^2}+m_J\big\|\chi_J\big\|^2_{L^2}\bigg)\bigg|_{\tau=(2m_J)^{-1}}\]
    Next, we use (\ref{alpha J in low frequency 1}),(\ref{alpha J in low frequency 2}), and (\ref{chi J in low frequency}) to get: 
    \[\big\|\alpha_J\big\|^2_{H^1}\big((2m_J)^{-1}\big)+\big\|\partial_{\tau}\alpha_J\big\|^2_{L^2}\big((2m_J)^{-1}\big)\lesssim \big\|\mathfrak{h}\big\|^2_{H^1}+\frac{1}{m_J^2}\big\|\chi(0)\big\|^2_{L^2}\]
    \[\big\|\alpha_J\big\|^2_{L^2}\big((2m_J)^{-1}\big)\lesssim \big\|\mathfrak{h}\big\|^2_{L^2}+\frac{1}{m_J^2}\big\|\nabla\mathfrak{h}\big\|^2_{L^2}+\frac{1}{m_J^4}\big\|\chi(0)\big\|^2_{L^2}\]
    \[\big\|\chi_J\big\|^2_{L^2}\big((2m_J)^{-1}\big)\lesssim \big\|\chi(0)\big\|^2_{L^2}+\frac{1}{m_J^4}\big\|\mathfrak{h}\big\|^2_{H^1}\]
    Combining all these inequalities we obtain the desired estimate.
\end{proof}

Similarly, we use the low frequency regime and the high frequency regime estimates for the singular component of the solution in order to obtain:

\begin{corollary}\label{forward estimate alpha Y}
There exists an implicit constant independent of $m_Y$ and the asymptotic data, such that $\alpha_Y$ satisfies the estimate:
    \[\bigg(\big\|\alpha_Y\big\|^2_{H^1}+\big\|\partial_{\tau}\alpha_Y\big\|^2_{L^2}\bigg)\bigg|_{\tau=1}\lesssim m_Y\big\|\mathcal{O}\big\|^2_{L^2}+\frac{1}{m_Y}\big\|\nabla\mathcal{O}\big\|^2_{L^2}+\big\|\chi(0)\big\|^2_{L^2}\]
\end{corollary}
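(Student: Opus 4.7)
The plan is to follow exactly the strategy used in Corollary \ref{forward estimate alpha J}, applying the high frequency regime bound (Proposition \ref{high frequency forward estimate}) to the pair $(\alpha_Y,\chi_Y)$ with the parameter $m=m_Y$, and then controlling the right hand side at $\tau=(2m_Y)^{-1}$ using the low frequency estimates (\ref{alpha Y in low frequency 1}), (\ref{alpha Y in low frequency 2}), and (\ref{chi Y in low frequency}) of Proposition \ref{low frequency alpha Y proposition}. First I would write
\[
\bigl(\|\alpha_Y\|_{H^1}^2+\|\partial_\tau\alpha_Y\|_{L^2}^2\bigr)\bigr|_{\tau=1}
\lesssim \tfrac{1}{m_Y}\bigl(m_Y^2\|\alpha_Y\|_{L^2}^2+\|\nabla\alpha_Y\|_{L^2}^2+\|\partial_\tau\alpha_Y\|_{L^2}^2+m_Y\|\chi_Y\|_{L^2}^2\bigr)\bigr|_{\tau=(2m_Y)^{-1}},
\]
and then replace each of the four terms on the right using Proposition \ref{low frequency alpha Y proposition}.

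The new ingredient relative to the regular case is that Proposition \ref{low frequency alpha Y proposition} controls $\alpha_Y/\log(m_Y\tau)$ rather than $\alpha_Y$ itself. Observe that at the transition time $\tau=(2m_Y)^{-1}$ we have $\log(m_Y\tau)=-\log 2=O(1)$, so for $\alpha_Y$ and $\nabla\alpha_Y$ the conversion is immediate: (\ref{alpha Y in low frequency 2}) yields $\|\alpha_Y\|_{L^2}^2\lesssim \|\mathcal{O}\|_{L^2}^2+m_Y^{-2}\|\nabla\mathcal{O}\|_{L^2}^2+m_Y^{-4}\|\chi(0)\|_{L^2}^2$, and (\ref{alpha Y in low frequency 1}) gives the analogous bound for $\|\nabla\alpha_Y\|_{L^2}^2$. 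Similarly (\ref{chi Y in low frequency}) at $\tau=(2m_Y)^{-1}$ yields $\|\chi_Y\|_{L^2}^2\lesssim \|\chi(0)\|_{L^2}^2+m_Y^{-2}\|\mathcal{O}\|_{H^1}^2$.

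The one step that requires care is the $\|\partial_\tau\alpha_Y\|_{L^2}^2$ term. Writing $\alpha_Y=(\alpha_Y/\log(m_Y\tau))\cdot\log(m_Y\tau)$ and differentiating, the product rule produces the term $(\alpha_Y/\log(m_Y\tau))\cdot\tau^{-1}$, which at $\tau=(2m_Y)^{-1}$ carries a factor $2m_Y$. Consequently
\[
\|\partial_\tau\alpha_Y\|_{L^2}^2\bigr|_{\tau=(2m_Y)^{-1}}
\lesssim \bigl\|\partial_\tau(\alpha_Y/\log(m_Y\tau))\bigr\|_{L^2}^2+m_Y^2\bigl\|\alpha_Y/\log(m_Y\tau)\bigr\|_{L^2}^2,
\]
and using (\ref{alpha Y in low frequency 1})-(\ref{alpha Y in low frequency 2}) this is bounded by $m_Y^2\|\mathcal{O}\|_{L^2}^2+\|\nabla\mathcal{O}\|_{L^2}^2+m_Y^{-2}\|\chi(0)\|_{L^2}^2$.

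Finally, summing the four contributions and using the overall $1/m_Y$ prefactor from Proposition \ref{high frequency forward estimate}, the dominant $m_Y^2\|\mathcal{O}\|_{L^2}^2$ contributions (from the $m_Y^2\|\alpha_Y\|_{L^2}^2$ term and from the product-rule term in $\|\partial_\tau\alpha_Y\|_{L^2}^2$) produce exactly the $m_Y\|\mathcal{O}\|_{L^2}^2$ on the right hand side of the claimed estimate; the $\|\nabla\alpha_Y\|_{L^2}^2$ and the remaining subleading pieces contribute $m_Y^{-1}\|\nabla\mathcal{O}\|_{L^2}^2$; and the $m_Y\|\chi_Y\|_{L^2}^2$ contribution gives $\|\chi(0)\|_{L^2}^2$ (plus negligible $m_Y^{-2}\|\mathcal{O}\|_{H^1}^2$ absorbed into the other terms). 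The only real subtlety is recognizing that the balance $m_Y\|\mathcal{O}\|_{L^2}^2$ in the statement is precisely the saturation of the product-rule term at the transition time, which is why the estimate is stated with $m_Y$ rather than $m_Y^{1/2}$ or $m_Y^2$.
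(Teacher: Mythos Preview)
Your proof is correct and follows essentially the same approach as the paper: apply Proposition~\ref{high frequency forward estimate} to $(\alpha_Y,\chi_Y,m_Y)$, then control the right-hand side at $\tau=(2m_Y)^{-1}$ using Proposition~\ref{low frequency alpha Y proposition}, handling the $\|\partial_\tau\alpha_Y\|_{L^2}^2$ term via the product rule exactly as you describe. The paper's proof is identical in structure, writing the product-rule step as $\|\partial_\tau\alpha_Y\|_{L^2}^2\lesssim m_Y^2\|\alpha_Y\|_{L^2}^2+\|\partial_\tau(\alpha_Y/\log(m_Y\tau))\|_{L^2}^2$ at $\tau=(2m_Y)^{-1}$.
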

\begin{proof}
    We apply Proposition \ref{high frequency forward estimate} to $\alpha_Y,\chi_Y,$ and $m_Y:$
    \[\bigg(\big\|\alpha_Y\big\|^2_{H^1}+\big\|\partial_{\tau}\alpha_Y\big\|^2_{L^2}\bigg)\bigg|_{\tau=1}\lesssim \frac{1}{m_Y}\bigg(m_Y^2\big\|\alpha_Y\big\|^2_{L^2}+\big\|\nabla\alpha_Y\big\|^2_{L^2}+\big\|\partial_{\tau}\alpha_Y\big\|^2_{L^2}+m_Y\big\|\chi_Y\big\|^2_{L^2}\bigg)\bigg|_{\tau=(2m_Y)^{-1}}\]
    We notice that:
    \[\big\|\partial_{\tau}\alpha_Y\big\|^2_{L^2}\big((2m_Y)^{-1}\big)\lesssim m_Y^2\big\|\alpha_Y\big\|^2_{L^2}\big((2m_Y)^{-1}\big)+\bigg\|\partial_{\tau}\frac{\alpha_Y}{\log(m_Y\tau)}\bigg\|^2_{L^2}\big((2m_Y)^{-1}\big)\]
    Next, we use (\ref{alpha Y in low frequency 1}),(\ref{alpha Y in low frequency 2}), and (\ref{chi Y in low frequency}) to get: 
    \[\big\|\alpha_Y\big\|^2_{H^1}\big((2m_Y)^{-1}\big)+\bigg\|\partial_{\tau}\frac{\alpha_Y}{\log(m_Y\tau)}\bigg\|^2_{L^2}\big((2m_Y)^{-1}\big)\lesssim \big\|\mathcal{O}\big\|^2_{H^1}+\frac{1}{m_Y^2}\big\|\chi(0)\big\|^2_{L^2}\]
    \[\big\|\alpha_Y\big\|^2_{L^2}\big((2m_Y)^{-1}\big)\lesssim \big\|\mathcal{O}\big\|^2_{L^2}+\frac{1}{m_Y^2}\big\|\nabla\mathcal{O}\big\|^2_{L^2}+\frac{1}{m_Y^4}\big\|\chi(0)\big\|^2_{L^2}\]
    \[\big\|\chi_Y\big\|^2_{L^2}\big((2m_Y)^{-1}\big)\lesssim \big\|\chi(0)\big\|^2_{L^2}+\frac{1}{m_Y^2}\big\|\mathcal{O}\big\|^2_{H^1}\]
    Combining all these inequalities we obtain the desired estimate.
\end{proof}

An immediate consequence of our previous estimates is the uniqueness to the initial value problem (\ref{main equation})-(\ref{definition of chi}) with given smooth asymptotic initial data:

\begin{proposition}[Uniqueness of scattering states]\label{uniqueness of smooth solutions}
    Let $\big(\alpha,\chi\big)$ and $\big(\alpha',\chi'\big)$ be smooth solutions of the system (\ref{main equation})-(\ref{definition of chi}), with the same asymptotic initial data given by $\chi(0),\mathcal{O},h\in C^{\infty}(S^n).$ Then $\alpha\equiv\alpha'$ and $\chi\equiv\chi'.$
\end{proposition}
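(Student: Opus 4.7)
By linearity of the system (\ref{main equation})-(\ref{definition of chi}), the plan is to work with the difference $\widetilde{\alpha}=\alpha-\alpha'$ and $\widetilde{\chi}=\chi-\chi'$, which is again a smooth solution of the same system and whose asymptotic data at $\tau=0$ is given by $(0,0,0)$. The goal is then to show that any such solution with vanishing asymptotic data is identically zero. Note that since the singular coefficient $\mathcal{O}$ vanishes, in the decomposition used throughout Section \ref{forward direction subsection} we may view $\widetilde{\alpha}$ as a purely regular component (i.e.\ $\widetilde{\alpha}_Y\equiv 0$ with the trivial choice of $\alpha_Y$, and $\widetilde{\alpha}_J=\widetilde{\alpha}$), so only the estimates for the $J$-component apply.

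First, I would invoke Proposition \ref{low frequency alpha J proposition} applied to $(\widetilde{\alpha},\widetilde{\chi})$ with data $\mathfrak{h}=0$, $\chi(0)=0$. Since the right-hand sides of (\ref{alpha J in low frequency 1}), (\ref{alpha J in low frequency 2}) and (\ref{chi J in low frequency}) all vanish, we conclude that $\widetilde{\alpha}(\tau)\equiv 0$ and $\widetilde{\chi}(\tau)\equiv 0$ for every $\tau\in(0,(m_J)^{-1}]$, for any choice of $m_J\geq 1$. Choosing for instance $m_J=1$, this already gives $\widetilde{\alpha}\equiv 0$ and $\widetilde{\chi}\equiv 0$ on the full interval $\tau\in(0,1]$. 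In particular, all Cauchy data of $(\widetilde{\alpha},\widetilde{\chi})$ at $\tau=1$ — namely $\widetilde{\chi}(1)$, $\widetilde{\alpha}(1)$, and $\partial_\tau\widetilde{\alpha}(1)$ — vanish identically on $S^n$.

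For the remaining range $\tau\geq 1$, the coefficients of the system (\ref{main equation})-(\ref{definition of chi}) are smooth and the singular factor $1/\tau$ is bounded; standard forward well-posedness for the Cauchy problem associated with (\ref{main equation})-(\ref{definition of chi}) on the compact manifold $S^n$ then gives $\widetilde{\alpha}\equiv 0$ and $\widetilde{\chi}\equiv 0$ for all $\tau\geq 1$. Concretely, one obtains this by a straightforward energy estimate: multiplying (\ref{main equation}) applied to $\widetilde{\alpha}$ by $\partial_\tau\widetilde{\alpha}$, integrating by parts on $S^n$, coupling with (\ref{definition of chi}), and closing via Gronwall on any compact subinterval of $[1,\infty)$ yields the vanishing of $\widetilde{\alpha}$ and $\widetilde{\chi}$ from the vanishing of Cauchy data at $\tau=1$.

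The only mild subtlety — which is not really an obstacle, given what has already been proven — is ensuring that the low-frequency estimates are strong enough to conclude vanishing on the whole interval $(0,1]$ in one step; this is resolved precisely by the fact that Proposition \ref{low frequency alpha J proposition} is quantitative and homogeneous in the data, so zero data forces zero solution on its range of applicability. Combining this with forward Cauchy uniqueness past $\tau=1$ closes the argument and yields $\alpha\equiv\alpha'$, $\chi\equiv\chi'$.
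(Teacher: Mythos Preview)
Your proof is correct and follows essentially the same approach as the paper: form the difference, observe it has vanishing asymptotic data, apply one of the forward energy estimates to force vanishing, and then invoke standard hyperbolic uniqueness away from the singular endpoint. The only cosmetic difference is that you cite Proposition~\ref{low frequency alpha J proposition} (the $J$-component estimate, applicable since $\mathcal{O}=0$) with $m_J=1$ to kill the solution on all of $(0,1]$, whereas the paper cites Corollary~\ref{forward estimate alpha Y} to obtain vanishing at $\tau=1$ and then appeals to hyperbolic uniqueness in both directions; since the difference has $\mathcal{O}=\mathfrak{h}=\chi(0)=0$, either estimate applies and both routes close immediately.
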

\begin{proof}
    We define $\alpha_0=\alpha-\alpha',\ \chi_0=\chi-\chi'.$ By linearity,  $\big(\alpha_0,\chi_0\big)$ solves the system with vanishing asymptotic initial data. In particular, $\alpha_0$ satisfies the expansion:
    \[\alpha_0(\tau)=O(\tau^2|\log\tau|^2),\ \partial_{\tau}\alpha_0(\tau)=O(\tau|\log\tau|^2),\]
    Thus, we can apply the estimate in Corollary \ref{forward estimate alpha Y} to get $\alpha_0(1)=\partial_{\tau}\alpha_0(1)=\chi_0(1)=0.$ The standard uniqueness result for hyperbolic equations implies that $\alpha_0$ vanishes identically.
\end{proof}

We are now in the position to prove the main result of this section, estimate (\ref{forward estimate intro}). In particular, we prove that at $\tau=1$ the solution gains spatial regularity compared to the asymptotic initial data. To capture this from our previous estimates, we first need to introduce the Littlewood-Paley frequency decomposition, according to Appendix \ref{Littlewood-Paley decomposition}. Thus, for any $f\in L^2(S^n)$ we can write:
\[f=f_0+\sum_{l=1}^{\infty}f_l,\ f_0=P_{\leq1}f,\ f_l=P_{(2^{l-1},2^{l}]}f\]

We prove the main estimate of this section:

\begin{theorem}\label{improved regularity theorem} Let any $s\geq1.$ Let $\big(\alpha,\chi\big)$ be a smooth solution of (\ref{main equation})-(\ref{definition of chi}) with asymptotic initial data given by $\big(\chi(0),\mathcal{O},h\big)$. We define:
\[\mathfrak{h}:=h-(\log\nabla)\mathcal{O}:=h-\sum_{l=1}^{\infty}l\log2\cdot\mathcal{O}_l\]
Then, the solution $\big(\alpha,\chi\big)$ satisfies the following estimates:
\[\bigg(\big\|\alpha\big\|_{H^{s+1/2}}+\big\|\partial_{\tau}\alpha\big\|_{{H^{s-1/2}}}\bigg)\bigg|_{\tau=1}\lesssim \big\|\mathfrak{h}\big\|_{{H^{s}}}+\big\|\mathcal{O}\big\|_{{H^{s}}}+\big\|\chi(0)\big\|_{H^{s-1/2}}\]
\[\big\|\chi(1)\big\|_{H^{s+1/2}}\lesssim \big\|\mathfrak{h}\big\|_{{H^{s}}}+\big\|\mathcal{O}\big\|_{{H^{s}}}+\big\|\chi(0)\big\|_{H^{s+1/2}}\]
\end{theorem}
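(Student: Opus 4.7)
The plan is to prove the theorem block-by-block via a Littlewood--Paley decomposition on $S^n$. Because the Laplacian $\Delta$ commutes with each frequency projection $P_{(2^{l-1},2^l]}$ and the system (\ref{main equation})--(\ref{definition of chi}) is linear with coefficients depending only on $\tau$, the solution $(\alpha^{(l)},\chi^{(l)})$ launched by the projected data $(\chi(0)_l,\mathcal{O}_l,h_l)$ stays frequency-localized at scale $\sim 2^l$, and $(\alpha,\chi)=\sum_l(\alpha^{(l)},\chi^{(l)})$. For each $l\geq 1$ I split $\alpha^{(l)}=\alpha^{(l)}_J+\alpha^{(l)}_Y$ as in Section \ref{forward direction subsection} with the frequency-matched choice $m_J=m_Y=2^l$. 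The definition $\mathfrak{h}:=h-(\log\nabla)\mathcal{O}$ is engineered so that $\mathfrak{h}_l=h_l-(l\log 2)\mathcal{O}_l$ and the splitting carries $(\chi(0)_l/2,0,\mathfrak{h}_l)$ to $\alpha^{(l)}_J$ and $(\chi(0)_l/2,\mathcal{O}_l,(l\log 2)\mathcal{O}_l)$ to $\alpha^{(l)}_Y$, which sum to the correct data at frequency $l$. The $l=0$ block is treated identically with $m_J=m_Y=1$.

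Applying Corollaries \ref{forward estimate alpha J} and \ref{forward estimate alpha Y} with $m=2^l$ and using the Bernstein identity $\|\nabla f_l\|_{L^2}\sim 2^l\|f_l\|_{L^2}$, the matching terms $m\|\cdot\|_{L^2}^2$ and $m^{-1}\|\nabla\cdot\|_{L^2}^2$ both collapse to $\sim 2^l\|\cdot\|_{L^2}^2$, yielding
\[\big\|\alpha^{(l)}(1)\big\|_{H^1}^2+\big\|\partial_\tau\alpha^{(l)}(1)\big\|_{L^2}^2\lesssim 2^l\big\|\mathfrak{h}_l\big\|_{L^2}^2+2^l\big\|\mathcal{O}_l\big\|_{L^2}^2+\big\|\chi(0)_l\big\|_{L^2}^2.\]
Since $\alpha^{(l)}(1)$ is itself frequency-localized at scale $\sim 2^l$, a second use of Bernstein converts $H^1\mapsto H^{s+1/2}$ and $L^2\mapsto H^{s-1/2}$ at the cost of a factor $2^{(2s-1)l}$, giving
\[\big\|\alpha^{(l)}(1)\big\|_{H^{s+1/2}}^2+\big\|\partial_\tau\alpha^{(l)}(1)\big\|_{H^{s-1/2}}^2\lesssim \big\|\mathfrak{h}_l\big\|_{H^s}^2+\big\|\mathcal{O}_l\big\|_{H^s}^2+\big\|\chi(0)_l\big\|_{H^{s-1/2}}^2,\]
and summing over $l$ proves the first estimate.

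For $\chi(1)$ I integrate (\ref{definition of chi}) on each block: $\chi^{(l)}(1)=\chi^{(l)}(0)+\int_0^1 2\tau\,\alpha^{(l)}(\tau)d\tau$, splitting the time integral at $\tau=2^{-l-1}$. On $[0,2^{-l-1}]$ the low-frequency estimates (\ref{alpha J in low frequency 2}) and (\ref{alpha Y in low frequency 2}) with $m=2^l$ give $\|\alpha^{(l)}(\tau)\|_{L^2}\lesssim(1+|\log(2^l\tau)|)(\|\mathfrak{h}_l\|_{L^2}+\|\mathcal{O}_l\|_{L^2}+\|\chi(0)_l\|_{L^2})$, and $\int_0^{2^{-l-1}}2\tau(1+|\log 2^l\tau|)d\tau\lesssim 2^{-2l}$. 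On $[2^{-l-1},1]$ I extract from Proposition \ref{high frequency forward estimate} the $\tau\|\nabla\alpha\|_{L^2}^2$ term and use Bernstein to upgrade it into the sharper bound $\|\alpha^{(l)}(\tau)\|_{L^2}^2\lesssim \tau^{-1}2^{-l}(\|\mathfrak{h}_l\|_{L^2}^2+\|\mathcal{O}_l\|_{L^2}^2)+\tau^{-1}2^{-2l}\|\chi(0)_l\|_{L^2}^2$. Then $\int_{2^{-l-1}}^1 2\tau\|\alpha^{(l)}(\tau)\|_{L^2}d\tau\lesssim 2^{-l/2}(\|\mathfrak{h}_l\|_{L^2}+\|\mathcal{O}_l\|_{L^2})+2^{-l}\|\chi(0)_l\|_{L^2}$, so squaring and combining yields $\|\chi^{(l)}(1)\|_{L^2}^2\lesssim \|\chi(0)_l\|_{L^2}^2+2^{-l}(\|\mathfrak{h}_l\|_{L^2}^2+\|\mathcal{O}_l\|_{L^2}^2)$. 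Multiplying by $2^{(2s+1)l}$ and summing completes the proof.

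The main obstacle is the $\chi(1)$ estimate. One cannot afford to lose the $1/2$-derivative gain under the time integral, so it is essential to use the Bernstein-improved bound on $\|\alpha^{(l)}(\tau)\|_{L^2}$ extracted from the $\tau\|\nabla\alpha^{(l)}\|_{L^2}^2$ term in Proposition \ref{high frequency forward estimate}, rather than the cruder bound from $\tau^{-1}\|\alpha^{(l)}\|_{L^2}^2$; the latter produces a factor $2^l\|\mathfrak{h}_l\|_{L^2}^2$ which does not dyadically sum against $\|\mathfrak{h}\|_{H^s}^2$. The remaining ingredients — Littlewood--Paley localization, Bernstein, and the frequency-matched $m=2^l$ built into the definition of $\mathfrak{h}$ — are routine.
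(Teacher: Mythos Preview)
Your proposal is correct and follows essentially the same route as the paper's proof: Littlewood--Paley decompose, split each block into $J$ and $Y$ pieces with the frequency-matched choice $m_J=m_Y=2^l$ (which is exactly what makes the definition of $\mathfrak{h}$ work), apply Corollaries \ref{forward estimate alpha J}--\ref{forward estimate alpha Y}, and sum with the right dyadic weights. For the $\chi(1)$ estimate the paper likewise splits the time integral at $\tau=2^{-l-1}$, uses Propositions \ref{low frequency alpha J proposition}--\ref{low frequency alpha Y proposition} on the short interval and Proposition \ref{high frequency forward estimate} on the long one, and your observation that one must extract the bound from the $\tau\|\nabla\alpha\|_{L^2}^2$ term (equivalently, estimate $\tau\|\alpha_l\|_{H^{s+1/2}}^2$ rather than $\tau^{-1}\|\alpha_l\|_{H^{s+1/2}}^2$) to avoid losing half a derivative is exactly the point.
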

\begin{proof}
    We notice that we can write $\alpha=\sum_{l=0}^{\infty}\alpha_l,\ \chi=\sum_{l=0}^{\infty}\chi_l,$ and each $\big(\alpha_l,\chi_l\big)$ is the solution of (\ref{main equation})-(\ref{definition of chi}) with asymptotic initial data given by $\big(\chi_l(0),\mathcal{O}_l,h_l\big).$  We consider $\big((\alpha_J)_l,(\chi_J)_l\big)$ to be the solution of (\ref{main equation})-(\ref{definition of chi}) with asymptotic initial data given by $\big(\frac{1}{2}\chi_l(0),0,\mathfrak{h}_l\big)$ and $\big((\alpha_Y)_l,(\chi_Y)_l\big)$ to be the solution of (\ref{main equation})-(\ref{definition of chi}) with asymptotic initial data given by $\big(\frac{1}{2}\chi_l(0),\mathcal{O}_l,l\log2\cdot\mathcal{O}_l\big)$. Since $h_l=\mathfrak{h}_l+l\log2\cdot\mathcal{O}_l,$ the uniqueness result for smooth solutions proved above implies that $\alpha_l=(\alpha_J)_l+(\alpha_Y)_l$ and $\chi_l=(\chi_J)_l+(\chi_Y)_l$. 
    
    The first estimate follows by applying the estimates in Corollary \ref{forward estimate alpha J} and Corollary \ref{forward estimate alpha Y} for each component, with $m_Y=m_J=2^l.$ We get that for all $l\geq0:$
    \[\bigg(\big\|\alpha_l\big\|^2_{H^1}+\big\|\partial_{\tau}\alpha_l\big\|^2_{L^2}\bigg)\bigg|_{\tau=1}\lesssim 2^l\Big(\big\|\mathcal{O}_l\big\|^2_{L^2}+\big\|\mathfrak{h}_l\big\|^2_{L^2}\Big)+\frac{1}{2^l}\Big(\big\|\nabla\mathcal{O}_l\big\|^2_{L^2}+\big\|\nabla\mathfrak{h}_l\big\|^2_{L^2}\Big)+\big\|\chi_l(0)\big\|^2_{L^2}\]
    We multiply this relation by $2^{2ls-l}$ and sum for all $l\geq0.$ Using the definition of Sobolev spaces in Appendix \ref{Littlewood-Paley decomposition}, we obtain the first estimate.

    To prove the second estimate, we notice that for any fixed $l\geq0$ we have:
    \[\big\|\chi_l(1)\big\|_{{H^{s+1/2}}}^2\lesssim\big\|\chi_l(0)\big\|^2_{{H^{s+1/2}}}+\int_0^{2^{-l-1}}\tau^2\big\|\alpha_l\big\|_{H^{s+1/2}}^2d\tau+\int^1_{2^{-l-1}}\tau^2\big\|\alpha_l\big\|_{H^{s+1/2}}^2d\tau\lesssim\]\[\lesssim\big\|\chi_l(0)\big\|^2_{{H^{s+1/2}}}+\bigg(\big\|\mathfrak{h}_l\big\|^2_{{H^{s+1/2}}}+\big\|\mathcal{O}_l\big\|^2_{{H^{s+1/2}}}+\big\|\chi_l(0)\big\|^2_{H^{s-1/2}}\bigg)\cdot\int_0^{2^{-l-1}}\tau^2|\log\tau|^2d\tau+\]\[+\bigg(\big\|\mathfrak{h}_l\big\|^2_{{H^{s}}}+\big\|\mathcal{O}_l\big\|^2_{{H^{s}}}+\big\|\chi_l(0)\big\|^2_{H^{s-1/2}}\bigg)\cdot\int^1_{2^{-l-1}}\tau d\tau\lesssim\big\|\mathfrak{h}_l\big\|^2_{{H^{s}}}+\big\|\mathcal{O}_l\big\|^2_{{H^{s}}}+\big\|\chi_l(0)\big\|^2_{H^{s+1/2}}\]
    where we used Proposition \ref{low frequency alpha J proposition}, Proposition \ref{low frequency alpha Y proposition}, and Proposition \ref{high frequency forward estimate}. We sum for all $l\geq0$ to complete the proof of the second estimate.
\end{proof} 

\subsection{Existence and Uniqueness of Scattering States}\label{existence to ivp subsection}

In this section we prove the existence and uniqueness of solutions of (\ref{main equation})-(\ref{definition of chi}) with smooth asymptotic initial data in Theorem \ref{existence to the initial value problem theorem}. Together with the estimates proved in Theorem \ref{improved regularity theorem}, this completes the proof of the first statement in Theorem \ref{scattering theorem for main equation intro}, establishing the \textit{existence and uniqueness of scattering states}.

\begin{theorem}[Existence and uniqueness of scattering states]\label{existence to the initial value problem theorem}
    For any asymptotic initial data $\chi(0),\mathcal{O},h\in C^{\infty}(S^n)$, there exists a unique smooth solution $\big(\alpha,\chi\big)$ of (\ref{main equation})-(\ref{definition of chi}) with this initial data.
\end{theorem}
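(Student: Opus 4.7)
Uniqueness is already handled by Proposition \ref{uniqueness of smooth solutions} (applied to the difference of two candidate solutions), so what remains is to construct, for every $(\chi(0),\mathcal{O},h)\in C^{\infty}(S^n)^3$, a smooth solution of (\ref{main equation})--(\ref{definition of chi}) realizing this asymptotic data. The plan is to perform the construction frequency by frequency using the Littlewood-Paley decomposition already used in Section \ref{forward direction subsection}, and then assemble a smooth solution by summation, with the convergence controlled by Theorem \ref{improved regularity theorem} and its higher regularity analogues.

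Fix a dyadic strip indexed by $l\geq 0$ and let $(\chi_l(0),\mathcal{O}_l,h_l)$ be the projected data. Since $\Delta$ admits a complete orthonormal basis of eigenfunctions $\varphi_i$ with eigenvalues $\lambda_i$, I further project each strip onto its modes: writing $\alpha(\tau,\omega)=\sum_i A_i(\tau)\varphi_i(\omega)$ and $\chi(\tau,\omega)=\sum_i X_i(\tau)\varphi_i(\omega)$, the model system decouples into the ODE family
\begin{equation*}
    A_i''+\frac{1}{\tau}A_i'+\bigg(4q'+\frac{4\lambda_i}{(\tau^2+1)^2}\bigg)A_i=f_1(\tau)\tau A_i'+f_2(\tau)\tau^2 A_i+f_3(\tau)X_i,\qquad X_i'(\tau)=2\tau A_i(\tau).
\end{equation*}
This is a linear second order equation with a regular singular point at $\tau=0$ whose indicial equation $s^2=0$ has a double root, so standard Frobenius theory supplies two linearly independent solutions on $(0,\infty)$ of the form
\[A_i^{J}(\tau)=1+O(\tau^2|\log\tau|^2),\qquad A_i^{Y}(\tau)=\log\tau+O(\tau^2|\log\tau|^2),\]
smooth for $\tau>0$. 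Together with the integral formula $X_i(\tau)=X_i(0)+\int_0^\tau 2\tau' A_i(\tau')\,d\tau'$, these select a unique mode solution whose asymptotic triple matches the eigenfunction coefficients of $(\chi_l(0),\mathcal{O}_l,h_l)$. Summing over $i$ within the strip produces a smooth solution $(\alpha^{(l)},\chi^{(l)})$ realizing the strip data.

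The candidate full solution is $\alpha=\sum_l \alpha^{(l)}$, $\chi=\sum_l \chi^{(l)}$. At each fixed $\tau\in(0,1]$, convergence in $H^s(S^n)$ for every $s$ follows from Theorem \ref{improved regularity theorem} (with the loss-free energies of Propositions \ref{low frequency alpha J proposition}--\ref{low frequency alpha Y proposition} controlling the low-frequency regime), because smoothness of the data makes the frequency sums absolutely convergent in every Sobolev norm. This gives $\alpha(\tau,\cdot),\chi(\tau,\cdot)\in C^{\infty}(S^n)$ at every $\tau>0$, and joint smoothness in $(\tau,\omega)$ on $(0,\infty)\times S^n$ then follows by commuting the equation with spatial derivatives and using it to express $\partial_\tau^2\alpha$ in terms of objects already known to be smooth. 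The main obstacle I anticipate is the final step: verifying that the remainders $O(\tau^2|\log\tau|^2)$ in (\ref{main expansion}) pass uniformly to the infinite sum as $\tau\to 0$, which amounts to a quantitative Frobenius estimate with polynomial (not exponential) dependence on $\lambda_i$. This is natural for the present equation and can alternatively be obtained directly from Proposition \ref{low frequency alpha J proposition} and Proposition \ref{low frequency alpha Y proposition} applied with the frequency-adapted parameters $m_J=m_Y=2^l$, where the low-frequency bounds already encode the required uniform control on the error terms in (\ref{main expansion}).
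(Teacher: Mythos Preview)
Your strategy matches the paper's: uniqueness via Proposition \ref{uniqueness of smooth solutions}, construction strip by strip, then summation with bounds from Section \ref{forward direction subsection}. The per-strip step differs only cosmetically: you invoke Frobenius theory for the projected ODEs, whereas the paper runs a Picard iteration based on the variation-of-parameters formula for $\partial_t^2+\tfrac1t\partial_t$ (its equations (\ref{iteration scheme})--(\ref{duhamel type equation alpha l})). Both produce the same object.

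The gap is in your last paragraph. Propositions \ref{low frequency alpha J proposition} and \ref{low frequency alpha Y proposition} do \emph{not} encode the remainder estimate in (\ref{main expansion}). They bound $\|\alpha_l\|$ and $\|\partial_\tau\alpha_l\|$ uniformly, which upon integration gives only $\|\alpha_l-\mathcal{O}_l\log\tau-h_l\|=O(\tau)$, not the $O(\tau^2|\log\tau|^2)$ required by the definition of smooth solution (and likewise $O(1)$ rather than $O(\tau|\log\tau|^2)$ for $\partial_\tau\alpha_l-\mathcal{O}_l/\tau$). To upgrade, you have to feed those energy bounds back into the Duhamel representation
\[
\alpha_l(t)=\mathcal{O}_l\log t+\mathfrak{h}_l+\log t\int_0^t t'F\,dt'-\int_0^t t'\log t'\,F\,dt',
\]
where $F$ collects all remaining terms of the equation, and estimate $\|F\|_{H^s}\lesssim(1+|\log t|)(\ldots)$ using the a priori bounds. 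This is exactly the step the paper carries out explicitly, and it is what produces the quantitative remainder estimate with a fixed finite derivative loss (the paper gets $H^{s+3}$ on the data), which in turn is what makes the sum over $l$ converge in $C^1_\tau([0,1])C^\infty(S^n)$. Your Frobenius argument contains this implicitly for a single mode, but you have not explained how the $O$-constants depend on $\lambda_i$; the Duhamel-plus-energy argument is precisely the mechanism that yields polynomial dependence, and it is not a consequence of Propositions \ref{low frequency alpha J proposition}--\ref{low frequency alpha Y proposition} alone.
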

\begin{proof}
    We point out that we already proved uniqueness in Proposition \ref{uniqueness of smooth solutions}. We first prove the existence result for each component in the frequency decomposition using an iteration argument. For any $l,$ we introduce the time variable $t=2^l\tau$. The main equation can be written as:
    \[\partial_{t}^2\alpha_l+\frac{1}{t}\partial_{t}\alpha_l=-4q'\frac{\alpha_l}{2^{2l}}+\frac{4}{(t^2/2^{2l}+1)^2}\cdot\Delta\frac{\alpha_l}{2^{2l}}+\frac{f_1'(t)}{2^{2l}}t\partial_{t}\alpha_l+\frac{f_2'(t)}{2^{4l}}t^2\alpha_l+\frac{f_3'(t)}{2^{2l}}\cdot\bigg(\chi_l(0)+\frac{1}{2^{2l}}\int_0^t2t'\alpha_l(t')dt'\bigg)\]
    where $f'_1,f'_2,f'_3$ are bounded functions of $t.$ Additionally, we have that $\chi_l$ satisfies the equation:
    \begin{equation}\label{equation for chi l}
        \chi_l(t)=\chi_l(0)+\frac{1}{2^{2l}}\int_0^t2t'\alpha_l(t')dt'
    \end{equation}
    We can write the equation for $\alpha_l$ as:
    \begin{equation}\label{equation alpha l like on ODE and inhom term}
        \partial_{t}^2\alpha_l+\frac{1}{t}\partial_{t}\alpha_l=F\bigg(\alpha_l,\partial_{t}\alpha_l,\Delta\alpha_l,\int_0^tt'\alpha_l(t')dt',\chi_l(0)\bigg)
    \end{equation}
    We prove local existence for this equation using the following iteration scheme:
    \[\alpha_l^0=\mathcal{O}_l\log t+\mathfrak{h}_l\]
    \begin{align}\label{iteration scheme}
        \alpha_l^{n+1}=\mathcal{O}_l\log t+\mathfrak{h}_l&+\log t\int_0^tt'F\bigg(\alpha_l^n,\partial_{t}\alpha_l^n,\Delta\alpha_l^n,\int_0^{t'}t''\alpha_l^n(t'')dt'',\chi_l(0)\bigg)dt'-\\ \nonumber &-\int_0^tt'\log t'F\bigg(\alpha_l^n,\partial_{t}\alpha_l^n,\Delta\alpha_l^n,\int_0^{t'}t''\alpha_l^n(t'')dt'',\chi_l(0)\bigg)dt'
    \end{align}
    We define the renormalized sequence:
    \[\widetilde{\alpha_l^{n}}:=\alpha_l^{n}-\mathcal{O}_l\log t-\mathfrak{h}_l\]
    Note that $\widetilde{\alpha_l^{n}}\in C^1_t([0,\infty))C^{\infty}(S^n)$. For any $s\geq1$ we have the bound for all $t\leq 2^l$:
    \[\bigg\|F\bigg(\alpha_l^n,\partial_{t}\alpha_l^n,\Delta\alpha_l^n,\int_0^{t'}t''\alpha_l^n(t'')dt'',\chi_l(0)\bigg)-F\bigg(\alpha_l^{n-1},\partial_{t}\alpha_l^{n-1},\Delta\alpha_l^{n-1},\int_0^{t'}t''\alpha_l^{n-1}(t'')dt'',\chi_l(0)\bigg)\bigg\|_{H^s}\lesssim\]\[\lesssim\sup_{t'\in[0,t]}\big\|\widetilde{\alpha_l^{n}}-\widetilde{\alpha_l^{n-1}}\big\|_{H^s}+t\big\|\partial_t\widetilde{\alpha_l^{n}}-\partial_t\widetilde{\alpha_l^{n-1}}\big\|_{H^s}\]
    Using this bound, we have for $t\in[0,1]$ and any $s\geq1:$
    \[\big\|\widetilde{\alpha_l^{n+1}}-\widetilde{\alpha_l^{n}}\big\|_{H^s}(t)+\big\|\partial_t\widetilde{\alpha_l^{n+1}}-\partial_t\widetilde{\alpha_l^{n}}\big\|_{H^s}(t)\lesssim \big(t^2|\log(t)|+t\big)\bigg[\sup_{t'\in[0,t]}\big\|\widetilde{\alpha_l^{n}}-\widetilde{\alpha_l^{n-1}}\big\|_{H^s}+\sup_{t'\in[0,t]}\big\|\partial_t\widetilde{\alpha_l^{n}}-\partial_t\widetilde{\alpha_l^{n-1}}\big\|_{H^s}\bigg]\]
    The constant in the above inequality is independent of $l,t,$ and $n.$ Thus, there exists $\epsilon>0$ such that for $t\in[0,\epsilon]$, the sequence $\big\{(\widetilde{\alpha_l^{n}},\partial_t\widetilde{\alpha_l^{n}})\big\}$ is a contraction in $H^s$, so there exists $\widetilde{\alpha_l}\in C^1_t\big([0,\epsilon];H^s(S^n)\big)$ such that $\widetilde{\alpha_l^{n}}\rightarrow\widetilde{\alpha_l}.$ We define $\alpha_l=\widetilde{\alpha_l}+\mathcal{O}_l\log t+\mathfrak{h}_l.$ Taking the limit in (\ref{iteration scheme}), we obtain that $\alpha_l$ satisfies the equation for $t\in[0,\epsilon]$:
    \begin{align}\label{duhamel type equation alpha l}
        \alpha_l=\mathcal{O}_l\log t+\mathfrak{h}_l&+\log t\int_0^tt'F\bigg(\alpha_l,\partial_{t}\alpha_l,\Delta\alpha_l,\int_0^{t'}t''\alpha_l(t'')dt'',\chi_l(0)\bigg)dt'-\\ \nonumber &-\int_0^tt'\log t'F\bigg(\alpha_l,\partial_{t}\alpha_l,\Delta\alpha_l,\int_0^{t'}t''\alpha_l(t'')dt'',\chi_l(0)\bigg)dt'
    \end{align}
    This implies that $\alpha_l$ satisfies (\ref{equation alpha l like on ODE and inhom term}) for $t\in[0,\epsilon].$ Since (\ref{equation alpha l like on ODE and inhom term}) is a linear hyperbolic equation, we obtain that $\alpha_l$ extends as a $C^1_t\big((0,\infty);H^s(S^n)\big)$ solution. This holds for any $s\geq1,$ so $\alpha_l$ is a smooth solution on $(0,\infty)\times S^n.$ A direct computation shows that the RHS of (\ref{duhamel type equation alpha l}) also defines a smooth solution on $(0,\infty)\times S^n.$ By uniqueness, we obtain that $\alpha_l$ satisfies (\ref{duhamel type equation alpha l})  for $t\in[0,\infty).$ This also implies that $\widetilde{\alpha_l}\in C^1_t([0,\infty))C^{\infty}(S^n)$. Moreover, (\ref{equation for chi l}) implies that $\chi_l\in C^1_t([0,\infty))C^{\infty}(S^n).$ We notice that for $t\in(0,2^l]$ we have the bound:
    \[\bigg\|F\bigg(\alpha_l,\partial_{t}\alpha_l,\Delta\alpha_l, \int_0^{t}t'\alpha_l(t'),\chi_l(0)\bigg)\bigg\|_{H^s}\lesssim\big(1+|\log t|\big)\cdot\|\mathcal{O}_l\|_{H^s}+\|\mathfrak{h}_l\|_{H^s}+\|\chi_l(0)\|_{H^s}+\]\[+\sup_{t'\in[0,t]}\big(\|\widetilde{\alpha_l}\|_{H^s}+\|t'\partial_t\widetilde{\alpha_l}\|_{H^s}\big)\]
    Using this bound in (\ref{duhamel type equation alpha l}), we obtain that $\alpha_l$ satisfies the expansions: 
    \[\alpha_l(t)=\mathcal{O}_l\log(t)+\mathfrak{h}_l+O\big({t}^2|\log({t})|^2\big),\ \partial_{t}\alpha_l({t})=\frac{\mathcal{O}}{t}+O\big({t}|\log({t})|^2\big)\text{ in }C^{\infty}(S^n)\]
    In particular, this proves that $\big(\alpha_l,\chi_l\big)$ defines a smooth solution of (\ref{main equation})-(\ref{definition of chi}) with asymptotic initial data $\big(\chi_l(0),\mathcal{O}_l,h_l\big).$ Thus, we proved so far the existence of solutions with asymptotic initial data supported in a frequency strip.
    
    In order to sum up all components in the frequency decomposition, we need to make the above estimates quantitative. According to the first part of the proof, we can now use the estimates from the previous section for solutions supported in a frequency strip. By Proposition \ref{low frequency alpha J proposition} and Proposition \ref{low frequency alpha Y proposition}, we obtain that for any $s\geq1$ and any $t\in[0,1/2]:$
    \[\big\|\alpha_l\big\|_{H^s}+\big\|t\partial_t\alpha_l\big\|_{H^s}\lesssim|\log(t)|\cdot\big(\|\mathcal{O}_l\|_{H^s}+\|\mathfrak{h}_l\|_{H^s}+\|\chi_l(0)\|_{H^s}\big)\]
    As a result, we have for any $s\geq1$ and any $t\in[0,1/2]:$
    \[\bigg\|F\bigg(\alpha_l,\partial_{t}\alpha_l,\Delta\alpha_l, \int_0^{t}t'\alpha_l(t')dt',\chi_l(0)\bigg)\bigg\|_{H^s}\lesssim|\log t|\cdot\big(\|\mathcal{O}_l\|_{H^s}+\|\mathfrak{h}_l\|_{H^s}+\|\chi_l(0)\|_{H^s}\big)\]
    Using the estimates in Proposition \ref{high frequency forward estimate} (as in the proofs of Corollary \ref{forward estimate alpha J} and Corollary \ref{forward estimate alpha Y}), we have for any $s\geq1$ and any $t\in[1/2,2^l]:$
    \[\big\|\alpha_l\big\|_{H^s}+\big\|t\partial_t\alpha_l\big\|_{H^s}\lesssim\sqrt{t}\cdot\big(\|\mathcal{O}_l\|_{H^s}+\|\mathfrak{h}_l\|_{H^s}+\|\chi_l(0)\|_{H^s}\big)\]
    This implies that for any $s\geq1$ and any $t\in[1/2,2^l]:$
    \[\bigg\|F\bigg(\alpha_l,\partial_{t}\alpha_l,\Delta\alpha_l, \int_0^{t}t'\alpha_l(t')dt',\chi_l(0)\bigg)\bigg\|_{H^s}\lesssim2^{l/2}\cdot\big(\|\mathcal{O}_l\|_{H^s}+\|\mathfrak{h}_l\|_{H^s}+\|\chi_l(0)\|_{H^s}\big)\]
    Using these bounds on $F$ in (\ref{duhamel type equation alpha l}) and (\ref{equation for chi l}), we get for any $s\geq1$ and any $t\in[0,2^l]:$
    \[\big\|\alpha_l-\mathcal{O}_l\log t-\mathfrak{h}_l\big\|_{H^s}\lesssim t^2\big(1+|\log t|^2\big)2^{l/2}\cdot\big(\|\mathcal{O}_l\|_{H^s}+\|\mathfrak{h}_l\|_{H^s}+\|\chi_l(0)\|_{H^s}\big)\]
    \[\big\|\partial_t\alpha_l-\mathcal{O}_l/t\big\|_{H^s}\lesssim t\big(1+|\log t|^2\big)2^{l/2}\cdot\big(\|\mathcal{O}_l\|_{H^s}+\|\mathfrak{h}_l\|_{H^s}+\|\chi_l(0)\|_{H^s}\big)\]
    \[\big\|\chi_l-\chi_l(0)\big\|_{H^s}\lesssim t^2\big(1+|\log t|^2\big)\cdot\big(\|\mathcal{O}_l\|_{H^s}+\|\mathfrak{h}_l\|_{H^s}+\|\chi_l(0)\|_{H^s}\big)\]
    \[\big\|\partial_t\chi_l\big\|_{H^s}\lesssim t\big(1+|\log t|^2\big)\cdot\big(\|\mathcal{O}_l\|_{H^s}+\|\mathfrak{h}_l\|_{H^s}+\|\chi_l(0)\|_{H^s}\big)\]
    In terms of the coordinate $\tau,$ we proved that for any $s\geq1$ and any $\tau\in[0,1]:$
    \[\big\|\alpha_l-\mathcal{O}_l\log(2^l\tau)-\mathfrak{h}_l\big\|_{H^s}\lesssim \tau^2\big(1+|\log\tau|^2\big)\cdot\big(\|\mathcal{O}_l\|_{H^{s+3}}+\|\mathfrak{h}_l\|_{H^{s+3}}+\|\chi_l(0)\|_{H^{s+3}}\big)\]
    \[\bigg\|\partial_{\tau}\alpha_l-\frac{\mathcal{O}_l}{\tau}\bigg\|_{H^s}\lesssim \tau\big(1+|\log\tau|^2\big)\cdot\big(\|\mathcal{O}_l\|_{H^{s+3}}+\|\mathfrak{h}_l\|_{H^{s+3}}+\|\chi_l(0)\|_{H^{s+3}}\big)\]
    \[\big\|\chi_l-\chi_l(0)\big\|_{H^s}\lesssim \tau^2\big(1+|\log\tau|^2\big)\cdot\big(\|\mathcal{O}_l\|_{H^{s+3}}+\|\mathfrak{h}_l\|_{H^{s+3}}+\|\chi_l(0)\|_{H^{s+3}}\big)\]
    \[\big\|\partial_{\tau}\chi_l\big\|_{H^s}\lesssim \tau\big(1+|\log\tau|^2\big)\cdot\big(\|\mathcal{O}_l\|_{H^{s+3}}+\|\mathfrak{h}_l\|_{H^{s+3}}+\|\chi_l(0)\|_{H^{s+3}}\big)\]
    The constants in the above inequalities are independent of $l,$ so we can define $\alpha=\sum_{l=0}^{\infty}\alpha_l,\ \chi=\sum_{l=0}^{\infty}\chi_l$ and we proved that for any $s\geq1$ and any $\tau\in[0,1]:$
    \[\big\|\alpha({\tau})-\mathcal{O}\log({\tau})-h\big\|_{H^s}\lesssim \tau^2\big(1+|\log\tau|^2\big)\cdot\big(\|\mathcal{O}\|_{H^{s+3}}+\|\mathfrak{h}\|_{H^{s+3}}+\|\chi(0)\|_{H^{s+3}}\big)\]
    \[\bigg\|\partial_{\tau}\alpha-\frac{\mathcal{O}}{\tau}\bigg\|_{H^s}\lesssim \tau\big(1+|\log\tau|^2\big)\cdot\big(\|\mathcal{O}\|_{H^{s+3}}+\|\mathfrak{h}\|_{H^{s+3}}+\|\chi(0)\|_{H^{s+3}}\big)\]
    together with similar estimates for $\chi.$ We get that $\alpha-\mathcal{O}\log({\tau})-h\in C^1_{\tau}([0,1])C^{\infty}(S),\ \chi\in C^1_{\tau}([0,1])C^{\infty}(S)$ and that the expansions (\ref{main expansion}) hold. Moreover, $(\alpha,\chi)$ solves the linear system (\ref{main equation})-(\ref{definition of chi}), so by propagation of regularity with initial data at $\tau=1/2$ we also have $\alpha,\chi\in C^{\infty}\big((0,\infty)\times S\big).$ In conclusion, $\big(\alpha,\chi\big)$ defines a smooth solution of (\ref{main equation})-(\ref{definition of chi}) with asymptotic initial data $\big(\chi(0),\mathcal{O},h\big).$
\end{proof}

\subsection{Estimates from a Finite Time Hypersurface to $\mathcal{I}^+$}\label{backward direction subsection}

In this section we assume the existence of smooth solutions of (\ref{main equation})-(\ref{definition of chi}) which satisfy the standard asymptotic expansion at $\tau=0,$ and prove estimates on the asymptotic data in terms of the initial data at time $\tau=1$. In the context of (\ref{wave equation de sitter intro}), these correspond to estimates on the asymptotic data at $\mathcal{I}^-$ in terms of $\frac{n}{2}$ time derivatives of $\Tilde{\phi}$ at a finite time hypersurface. Since (\ref{wave equation de sitter intro}) is time reversible, we also obtain estimates on the asymptotic data at $\mathcal{I}^+$ in terms of $\frac{n}{2}$ time derivatives of $\Tilde{\phi}$ at a finite time hypersurface. The main result in Theorem \ref{main theorem backwards direction} proves estimate (\ref{backward estimate intro}) of Theorem \ref{scattering theorem for main equation intro}.

As in Section \ref{forward direction subsection}, we treat separately the low frequency regime and the high frequency regime. In this case, the transition time is determined by the frequency strip that our solution is localized in. We begin with the following high frequency regime estimate, which is similar to Proposition \ref{high frequency forward estimate}:

\begin{proposition}\label{high frequency backward estimate}
    Let any $s\geq1.$ For any $\tau_0\in(0,1],$ we set $\alpha_H=P_{\geq 1/4\tau_0}\alpha.$ Then for any $\tau\in[\tau_0,1]$:
    \[\frac{1}{\tau}\big\|\alpha_H\big\|^2_{H^s}+\tau\big\|\nabla\alpha_H\big\|^2_{H^s}+\tau\big\|\partial_{\tau}\alpha_H\big\|^2_{H^s}\lesssim \bigg(\big\|\alpha_H\big\|^2_{H^{s+1}}+\big\|\partial_{\tau}\alpha_H\big\|^2_{H^s}+\big\|\chi_H\big\|^2_{H^s}\bigg)\bigg|_{\tau=1}\]
    where the implicit constant is independent of $\tau_0$.
\end{proposition}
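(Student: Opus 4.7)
The plan is to adapt the multiplier argument of Proposition \ref{high frequency forward estimate} and run the energy estimate backward in time, from $\tau = 1$ down to $\tau = \tau_0$. Since the coefficients in (\ref{main equation}) depend only on $\tau$, the projection $P_{\geq 1/(4\tau_0)}$ and the operator $(1-\Delta)^{s/2}$ both commute with the equation, so $\alpha_H, \chi_H$ satisfy the same system and it suffices to treat the $s=0$ case. Substituting $\tilde\alpha := \sqrt{\tau}\,\alpha_H$ removes the friction and produces an equation of the form $\partial_\tau^2\tilde\alpha + V(\tau)\tilde\alpha - W(\tau)\Delta\tilde\alpha = \mathcal{R}(\tau)$, with $V(\tau) = \tfrac{1}{4\tau^2} + 4q'$, $W(\tau) = \tfrac{4}{(\tau^2+1)^2}$, and source $\mathcal{R}$ coming from the $f_1, f_2, f_3$ contributions. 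The natural energy
\[
E(\tau) := \|\partial_\tau\tilde\alpha\|_{L^2}^2 + V(\tau)\|\tilde\alpha\|_{L^2}^2 + W(\tau)\|\nabla\tilde\alpha\|_{L^2}^2
\]
is uniformly comparable on $(0,1]$ to the quantity $\tfrac{1}{\tau}\|\alpha_H\|^2 + \tau\|\nabla\alpha_H\|^2 + \tau\|\partial_\tau\alpha_H\|^2$ we wish to bound.

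Multiplying the transformed equation by $2\partial_\tau\tilde\alpha$ and integrating by parts in the Laplacian yields the clean identity $\tfrac{dE}{d\tau} = V'(\tau)\|\tilde\alpha\|^2 + W'(\tau)\|\nabla\tilde\alpha\|^2 + 2\int_{S^n} \partial_\tau\tilde\alpha\cdot\mathcal{R}$, with $V', W' < 0$. Integrating backward from $\tau$ to $1$ gives
\[
E(\tau) = E(1) + \int_\tau^1 \bigl[|V'|\,\|\tilde\alpha\|^2 + |W'|\,\|\nabla\tilde\alpha\|^2\bigr]\,d\tau' - 2\int_\tau^1\int_{S^n}\partial_\tau\tilde\alpha\cdot\mathcal{R}\,d\tau'.
\]
The seemingly dangerous bulk $|V'|\|\tilde\alpha\|^2 = \tfrac{1}{2\tau^3}\|\tilde\alpha\|^2$ is where the frequency localisation becomes essential: Bernstein's inequality applied to $\alpha_H$ gives $\|\alpha_H\|^2 \lesssim \tau_0^2\|\nabla\alpha_H\|^2$, hence $\|\tilde\alpha\|^2 \lesssim \tau_0^2\|\nabla\tilde\alpha\|^2$ and $|V'|\|\tilde\alpha\|^2 \lesssim \tfrac{\tau_0^2}{\tau^3}\, E(\tau)$. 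Since $\int_\tau^1 \tfrac{\tau_0^2}{\tau'^3}\,d\tau' \leq \tfrac{1}{2}$ uniformly for $\tau \in [\tau_0,1]$, this contribution produces a bounded Gronwall exponent; the $|W'|$ bulk is handled directly using $|W'|\|\nabla\tilde\alpha\|^2 \lesssim \tfrac{\tau}{\tau^2+1}\,E$, which integrates to a bounded quantity on $[0,1]$.

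For the source term $\mathcal{R}$, the contributions from $f_1\tau\,\partial_\tau\alpha_H\sqrt{\tau}$ and $f_2\tau^2\tilde\alpha$ paired with $\partial_\tau\tilde\alpha$ are absorbed through Cauchy-Schwarz as $O(1)\cdot E$, while the $f_3\sqrt{\tau}\,\chi_H$ term is controlled via (\ref{definition of chi}): the identity $\chi_H(\tau) = \chi_H(1) - \int_\tau^1 2s\,\alpha_H(s)\,ds$ implies $\|\chi_H(\tau)\|^2 \lesssim \|\chi_H(1)\|^2 + \int_\tau^1 s\,E(s)\,ds$. Collecting everything, $E$ satisfies an inequality of the form $E(\tau) \leq C\bigl[E(1) + \|\chi_H(1)\|^2\bigr] + C\int_\tau^1 \bigl[\tfrac{\tau_0^2}{\tau'^3} + 1\bigr]E(\tau')\,d\tau'$, and backward Gronwall on $[\tau_0, 1]$ yields $E(\tau) \lesssim E(1) + \|\chi_H(1)\|^2$ with constants independent of $\tau_0$. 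The principal obstacle is precisely this need to exploit Bernstein's inequality to convert the $1/\tau^3$ blow-up of $|V'|$ into an integrable quantity: without the frequency cutoff $P_{\geq 1/(4\tau_0)}$, the Gronwall exponent would diverge as $\tau_0 \to 0$, so the high-frequency localisation is what makes the uniform estimate possible.
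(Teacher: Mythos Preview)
Your argument is correct and follows essentially the same strategy as the paper: multiply by $\sqrt{\tau}$ to remove the friction term, run the energy estimate backward from $\tau=1$, and use the Poincar\'{e}/Bernstein inequality for the high-frequency projection to tame the otherwise non-integrable bulk term $\tfrac{1}{\tau^3}\|\tilde\alpha\|^2$. The only cosmetic differences are that the paper first rescales time to $t=m\tau$ with $m=1/\tau_0$ before multiplying by $\sqrt{t}$, and that it obtains the $H^s$ statement by decomposing $\alpha_H$ into dyadic shells and summing weighted $L^2$ estimates, whereas you commute $(1-\Delta)^{s/2}$ through the equation to reduce to $s=0$; both routes are equivalent here since the coefficients are functions of $\tau$ alone.
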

\begin{proof}
    We set $m=\frac{1}{\tau_0}$, and we introduce the new time variable $t=m\tau\in[1,m]$. As before, equation (\ref{main equation}) can be written as:
    \[\partial_{t}^2\big(\alpha_H\sqrt{t}\big)+\frac{1}{4t^2}\alpha_H\sqrt{t}+\frac{4q'}{m^2}\alpha_H\sqrt{t}-\frac{4}{(t^2/m^2+1)^2}\cdot\Delta\frac{\alpha_H\sqrt{t}}{m^2}=\frac{f_1'(t)}{m^2}t\partial_{t}\big(\alpha_H\sqrt{t}\big)+\frac{f_2'(t)}{m^2}\alpha_H\sqrt{t}+\frac{f_3'(t)}{m^2}\chi_H\sqrt{t}\]
    where $f'_1,f'_2,f'_3$ are bounded functions of $t.$ We also have from (\ref{definition of chi}):
     \[\chi_H(t)=\chi_H(m)-\frac{1}{m^2}\int_t^m2t'\alpha_H(t')dt'\]
    which implies the bound for $t\in[1,m]$:
    \[\big\|\chi_H\big\|^2_{L^2}\lesssim\big\|\chi(m)\big\|^2_{L^2}+\frac{1}{m^2}\int_{t}^m\big\|\alpha_H\sqrt{t'}\big\|^2_{L^2}dt'\]
    In the following inequalities, the implicit constant will be independent of $m$ and $t$. We have the standard energy estimate:
    \[\big\|\partial_{t}\big(\alpha_H\sqrt{t}\big)\big\|^2_{L^2}+\frac{1}{t^2}\big\|\alpha_H\sqrt{t}\big\|^2_{L^2}+\frac{1}{m^2}\big\|\nabla\alpha_H\sqrt{t}\big\|^2_{L^2}\lesssim\]\[\lesssim\bigg(t\big\|\partial_{t}\alpha_H\big\|^2_{L^2}+\frac{1}{t}\big\|\alpha_H\big\|^2_{L^2}+\frac{1}{m}\big\|\nabla\alpha_H\big\|^2_{L^2}\bigg)\bigg|_{t=m}+\int_{t}^m\frac{1}{t'^3}\big\|\alpha_H\sqrt{t'}\big\|^2_{L^2}dt'+\]\[+\frac{1}{m^4}\int_{t}^mt'\big\|\nabla\alpha_H\sqrt{t'}\big\|^2_{L^2}dt'+\int_{t}^m\frac{t'}{m^2}\big\|\partial_{t}\big(\alpha_H\sqrt{t'}\big)\big\|^2_{L^2}+\int_{t}^m\int_S\frac{\sqrt{t'}}{m^2}\big(|\chi_H|+|\alpha_H|\big)\cdot\big|\partial_{t}\big(\alpha_H\sqrt{t'}\big)\big|\]
    We remark that the bulk terms have an unfavorable sign, which forces us to restrict to high frequencies. We notice that we can bound:
    \[\int_{t}^m\int_S\frac{\sqrt{t'}}{m^2}\big(|\chi_H|+|\alpha_H|\big)\cdot\big|\partial_{t}\big(\alpha_H\sqrt{t'}\big)\big|\lesssim\int_{t}^m\frac{t'}{m^2}\big\|\partial_{t}\big(\alpha_H\sqrt{t'}\big)\big\|^2_{L^2}+\frac{1}{m^2}\int_{t}^m\big\|\alpha_H\big\|^2_{L^2}+\big\|\chi_H\big\|^2_{L^2}\lesssim\]\[\lesssim\frac{1}{m}\big\|\chi_H(m)\big\|^2_{L^2}+\int_{t}^m\frac{t'}{m^2}\big\|\partial_{t}\big(\alpha_H\sqrt{t'}\big)\big\|^2_{L^2}dt'+\frac{1}{m^2}\int_{t}^m\big\|\alpha_H\big\|^2_{L^2}dt'\]
    Combining the previous two inequalities we have for all $t\in[1,m]$:
    \[\big\|\partial_{t}\big(\alpha_H\sqrt{t}\big)\big\|^2_{L^2}+\frac{1}{t}\big\|\alpha_H\big\|^2_{L^2}+\frac{t}{m^2}\big\|\nabla\alpha_H\big\|^2_{L^2}\lesssim\bigg(t\big\|\partial_{t}\alpha_H\big\|^2_{L^2}+\frac{1}{m}\big\|\alpha_H\big\|^2_{H^1}+\frac{1}{m}\big\|\chi_H\big\|^2_{L^2}\bigg)\bigg|_{t=m}+\]\[+\int_{t}^m\frac{1}{t'^2}\big\|\alpha_H\big\|^2_{L^2}dt'+\frac{1}{m^4}\int_{t}^mt'^2\big\|\nabla\alpha_H\big\|^2_{L^2}dt'+\int_{t}^m\frac{t'}{m^2}\big\|\partial_{t}\big(\alpha_H\sqrt{t'}\big)\big\|^2_{L^2}\]
    To deal with the first bulk term, we use the Poincar\'{e} inequality:
    \[\big\|\alpha_H\big\|_{L^2}\lesssim\frac{1}{m}\big\|\nabla\alpha_H\big\|_{L^2},\]
    which follows from the definition of $\alpha_H.$
    Since $1\leq t\leq m,$ we can apply Gronwall to obtain:
    \[\big\|\partial_{t}\big(\alpha_H\sqrt{t}\big)\big\|^2_{L^2}+\frac{1}{t}\big\|\alpha_H\big\|^2_{L^2}+\frac{t}{m^2}\big\|\nabla\alpha_H\big\|^2_{L^2}\lesssim\bigg(t\big\|\partial_{t}\alpha_H\big\|^2_{L^2}+\frac{1}{m}\big\|\alpha_H\big\|^2_{H^1}+\frac{1}{m}\big\|\chi_H\big\|^2_{L^2}\bigg)\bigg|_{t=m}\]
    Finally, we can rewrite this estimate to get that for any $\tau\in[\tau_0,1]$:
    \begin{equation}\label{high frequency backward estimate L2}
        \frac{1}{\tau}\big\|\alpha_H\big\|^2_{L^2}+\tau\big\|\nabla\alpha_H\big\|^2_{L^2}+\tau\big\|\partial_{\tau}\alpha_H\big\|^2_{L^2}\lesssim\bigg(\big\|\alpha_H\big\|^2_{H^1}+\big\|\partial_{\tau}\alpha_H\big\|^2_{L^2}+\big\|\chi_H\big\|^2_{L^2}\bigg)\bigg|_{\tau=1}
    \end{equation}
    If $\tau_0<1/2$, we denote by $l_0\geq1$ the smallest integer such that $2^{l_0+1}>m.$ We set $\alpha^H_0=P_{[m/4,2^{l_0-1}]}\alpha$ and $\chi^H_0=P_{[m/4,2^{l_0-1}]}\chi$. By Appendix \ref{Littlewood-Paley decomposition}, we have the decomposition:
    \[\alpha_H=\alpha^H_0+\sum_{l=l_0}^{\infty}\alpha_l,\ \chi_H=\chi^H_0+\sum_{l=l_0}^{\infty}\chi_l\]
    Moreover, (\ref{high frequency backward estimate L2}) applies for $(\alpha^H_0,\chi^H_0)$ and each $(\alpha_l,\chi_l)$ with $l\geq l_0,$ where the implicit constant is independent of $l$ and $\tau_0.$ We multiply these inequalities by $2^{2(l_0-1)s},$ respectively $2^{2ls}$ with $l\geq l_0$, in order to obtain the corresponding estimates for the $H^s$ norms. Summing all the components in the above decomposition we obtain the desired inequality. 
    
    If $\tau_0\geq1/2$ we have instead the decomposition:
    \[\alpha_H=P_{[m/4,1]}\alpha+\sum_{l=1}^{\infty}\alpha_l,\ \chi_H=P_{[m/4,1]}\chi+\sum_{l=1}^{\infty}\chi_l\]
    and we repeat the same argument as above to complete the proof.
\end{proof}

In the low frequency regime we use different multipliers from Section \ref{forward direction subsection}, in order to obtain favorable bulk terms. As pointed out in the introduction, we anticipate the need to renormalize $h$ to $\mathfrak{h}$ at the level of $\alpha,$ without explicit reference to the asymptotics of the solution. This can be seen in the third term controlled in the energy estimate below.

We prove the following low frequency estimate:

\begin{proposition}\label{low frequency backward estimate}
    Let any $s\geq1.$ For any $\tau_0\in(0,1]$ and for all $\tau\in[\tau_0,2^{-l-1}]\subset(0,1]$ we have:
    \[\|\tau\partial_{\tau}\alpha_l\|^2_{H^s}+\|\tau\alpha_l\|^2_{H^{s+1}}+\|\alpha_l-\log(2^l\tau)\cdot\tau\partial_{\tau}\alpha_l\|^2_{H^s}\lesssim\bigg(\|\alpha_l\|^2_{H^s}+\|\tau\nabla\alpha_l\|^2_{H^s}+\|\tau\partial_{\tau}\alpha_l\|^2_{H^s}+\big\|\tau\chi_l\big\|^2_{H^s}\bigg)\bigg|_{\tau=2^{-l-1}}\]
    where the implicit constant is independent of $\tau_0$ and $l$.
\end{proposition}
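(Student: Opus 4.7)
The strategy is to propagate backward in $\tau$ from $\tau=2^{-l-1}$ to $\tau=\tau_0$ an energy functional adapted to the three norms on the left-hand side, exploiting in an essential way the frequency localization of $\alpha_l$ at scale $2^l$. The three norms correspond, at the level of the formal expansion at $\tau=0$, to the datum $\mathcal{O}_l$, a subleading frequency-dependent piece, and the renormalized quantity $\mathfrak{h}_l$, respectively, and each must be handled with an adapted multiplier.

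To control $\|\tau\partial_\tau\alpha_l\|^2_{H^s}$ and $\|\tau\alpha_l\|^2_{H^{s+1}}$, I would commute (\ref{main equation}) with $|\nabla|^s$ (which preserves its form, since $\Delta$ commutes with itself and the coefficients are spatially constant), multiply by $\tau^2\partial_\tau\alpha_l$, and integrate by parts in both variables. The first two terms on the left collapse into a perfect time derivative, and the Laplacian term yields a time derivative plus a bulk contribution, producing
\[
\partial_\tau\mathcal{E}_1^{(s)}(\tau) = 4q'\tau\|\alpha_l\|_{H^s}^2 + \frac{4\tau(1-\tau^2)}{(\tau^2+1)^3}\|\nabla\alpha_l\|_{H^s}^2 + (\text{RHS contribution}),
\]
where $\mathcal{E}_1^{(s)}(\tau):=\frac{\tau^2}{2}\|\partial_\tau\alpha_l\|_{H^s}^2+2q'\tau^2\|\alpha_l\|_{H^s}^2+\frac{2\tau^2}{(\tau^2+1)^2}\|\nabla\alpha_l\|_{H^s}^2$. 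Both bulk terms are non-negative on $(0,1]$, so $\mathcal{E}_1^{(s)}$ is monotone increasing (modulo the RHS error), and propagating backward from $\tau=2^{-l-1}$ gives $\mathcal{E}_1^{(s)}(\tau)\lesssim\mathcal{E}_1^{(s)}(2^{-l-1})$. The RHS terms are absorbed via Gronwall, with the $f_3\chi_l$ contribution handled using the integral form $\chi_l(\tau)=\chi_l(2^{-l-1})-\int_\tau^{2^{-l-1}}2\tau'\alpha_l\,d\tau'$ of (\ref{definition of chi}). Frequency localization $\|\nabla\alpha_l\|_{H^s}\sim 2^l\|\alpha_l\|_{H^s}$ then identifies $\|\tau\nabla\alpha_l\|^2_{H^s}$ with $\|\tau\alpha_l\|^2_{H^{s+1}}$.

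For the renormalized quantity $\gamma_l:=\alpha_l-\log(2^l\tau)\,\tau\partial_\tau\alpha_l$, a direct differentiation gives $\partial_\tau\gamma_l=-\log(2^l\tau)\,\partial_\tau(\tau\partial_\tau\alpha_l)$, and multiplying (\ref{main equation}) by $\tau$ yields $\partial_\tau(\tau\partial_\tau\alpha_l)=-4q'\tau\alpha_l+\frac{4\tau}{(\tau^2+1)^2}\Delta\alpha_l+O(\tau^2\partial_\tau\alpha_l+\tau^3\alpha_l+\tau\chi_l)$. I would then integrate $\partial_{\tau'}\gamma_l$ in $H^s$ from $\tau$ to $2^{-l-1}$. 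The dominant term is the Laplacian contribution: using frequency localization, $\|\tau'\Delta\alpha_l\|_{H^s}\lesssim 2^{2l}\tau'\|\alpha_l\|_{H^s}\sim 2^l\|\tau'\alpha_l\|_{H^{s+1}}$, which by the previous step is bounded by $2^l\sqrt{\mathcal{E}_1^{(s)}(2^{-l-1})}$ uniformly in $\tau'\in[\tau_0,2^{-l-1}]$. Its time integral is therefore controlled by $2^l\sqrt{\mathcal{E}_1^{(s)}(2^{-l-1})}\int_\tau^{2^{-l-1}}|\log(2^l\tau')|\,d\tau'$, and the rescaling $u=2^l\tau'$ rewrites this last integral as $2^{-l}\int_{2^l\tau}^{1/2}|\log u|\,du=O(2^{-l})$, exactly cancelling the $2^l$ prefactor. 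The remaining lower-order terms are handled in the same way, yielding $\|\gamma_l(\tau)\|_{H^s}\lesssim\|\gamma_l(2^{-l-1})\|_{H^s}+\sqrt{\mathcal{E}_1^{(s)}(2^{-l-1})}$ uniformly in $l$ and $\tau_0$.

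The main obstacle is precisely this cancellation in the second step: a naive estimate for $\|\partial_\tau\gamma_l\|_{H^s}$ is not absolutely integrable as $\tau_0\to 0$ because $|\log(2^l\tau)|$ blows up, and applying $\Delta$ to the frequency-localized $\alpha_l$ naively contributes a prohibitive $2^l$ loss. Only the combination of the $\tau'$ weight inside $\partial_{\tau'}\gamma_l$ with the $2^{-l}$ Jacobian from the rescaling $u=2^l\tau'$ closes the estimate uniformly in $l$ and $\tau_0$. This is precisely why $\gamma_l$, equivalently the renormalization of $h$ by $(\log\nabla)\mathcal{O}$ introduced in Theorem \ref{improved regularity theorem}, is the correct quantity to propagate across the low frequency regime.
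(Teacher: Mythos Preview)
Your proposal is correct and follows essentially the same route as the paper. For the first two quantities, your multiplier $\tau^2\partial_\tau\alpha_l$ is exactly the paper's multiplier $t\partial_t\alpha$ after the rescaling $t=2^l\tau$, yielding the same energy with non-negative bulk terms and the same Gronwall closure; commuting with $|\nabla|^s$ versus working in $L^2$ and multiplying by $2^{2ls}$ at the end are equivalent here since the coefficients are spatially constant and $\alpha_l$ is frequency-localized.

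For the third quantity the paper defines the same object $\overline{\alpha}_l=\alpha_l-t\log t\,\partial_t\alpha_l=\gamma_l$, but closes the estimate differently: it multiplies the transport equation $\partial_t\overline{\alpha}_l=\ldots$ by $\overline{\alpha}_l$ itself, obtaining a differential inequality for $\|\overline{\alpha}_l\|_{L^2}^2$ that is then handled by Gronwall, whereas you integrate $\partial_\tau\gamma_l$ directly in $H^s$ and bound the time integral using the already-established first-step estimate. Both arguments hinge on the same mechanism, namely that frequency localization turns the dominant Laplacian term into $2^l\cdot O(\sqrt{\mathcal{E}_1^{(s)}})$ and the rescaling $u=2^l\tau'$ supplies the compensating $2^{-l}$ Jacobian, exactly as you identify. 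Your variant is marginally more direct (it avoids the second Gronwall step), while the paper's multiplier formulation is closer in spirit to the other energy estimates in the section; neither buys anything essential over the other.
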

\begin{proof}
    We set $m=2^l$ and we introduce the new time variable $t=m\tau\in[m\tau_0,1/2]$. Equation (\ref{main equation}) can be written as:
    \[\partial_{t}\big(t\partial_t\alpha\big)+4q't\frac{\alpha}{m^2}-\frac{4t}{(t^2/m^2+1)^2}\cdot\Delta\frac{\alpha}{m^2}=\frac{f_1'(t)}{m^2}t^2\partial_{t}\alpha+\frac{f_2'(t)}{m^4}t^3\alpha+\frac{f_3'(t)}{m^2}t\chi\]
    where $f'_1,f'_2,f'_3$ are bounded functions of $t.$ We also have from (\ref{definition of chi}):
    \[\chi(t)=\chi(1/2)-\frac{1}{m^2}\int^{1/2}_t2t'\alpha(t')dt'\]
    which implies the bound for $t\in[m\tau_0,1/2]$:
    \begin{equation}\label{preliminary bound for chi in low freq backwards direction}
        \big\|\chi\big\|^2_{L^2}\lesssim\big\|\chi(1/2)\big\|^2_{L^2}+\frac{1}{m^4}\int^{1/2}_t\big\|t'\alpha\big\|^2_{L^2}dt'
    \end{equation}
    We multiply the equation by $t\partial_t\alpha$ and integrate by parts. Since for $t\in[m\tau_0,1/2]$ we have:
    \[\frac{d}{dt}\bigg(\frac{4q't^2}{m^2}\bigg)\geq0,\ \frac{d}{dt}\bigg(\frac{4t^2/m^2}{(t^2/m^2+1)^2}\bigg)\geq0,\]
    we obtain that the bulk terms will have favorable signs in the energy estimate:
    \[\big\|t\partial_{t}\alpha\big\|^2_{L^2}+\frac{t^2}{m^2}\big\|\alpha\big\|^2_{L^2}+\int^{1/2}_t\frac{t'}{m^2}\big\|\alpha\big\|^2_{L^2}dt'+\frac{t^2}{m^2}\big\|\nabla\alpha\big\|^2_{L^2}+\int^{1/2}_t\frac{t'}{m^2}\big\|\nabla\alpha\big\|^2_{L^2}dt'\lesssim\]\[\lesssim\bigg(\big\|t\partial_{t}\alpha\big\|^2_{L^2}+\frac{1}{m^2}\big\|\alpha\big\|^2_{H^1}\bigg)\bigg|_{t=\frac{1}{2}}+\int^{1/2}_t\frac{t'}{m^2}\big\|t'\partial_{t}\alpha\big\|^2_{L^2}+\int^{1/2}_t\int_S\bigg(\frac{t'}{m^2}|\chi|+\frac{t'^3}{m^4}|\alpha|\bigg)\cdot\big|t'\partial_{t}\alpha\big|\]
    We remark that we can bound:
    \[\int^{1/2}_t\int_S\bigg(\frac{t'}{m^2}|\chi|+\frac{t'^3}{m^4}|\alpha|\bigg)\cdot\big|t'\partial_{t}\alpha\big|\lesssim\int^{1/2}_t\frac{t'}{m^2}\big\|t'\partial_{t}\alpha\big\|^2_{L^2}+\frac{1}{m^2}\int^{1/2}_t\frac{1}{m^2}\big\|t'\alpha\big\|^2_{L^2}+\big\|\chi\big\|^2_{L^2}\lesssim\]\[\lesssim\frac{1}{m^2}\big\|\chi(1/2)\big\|^2_{L^2}+\int^{1/2}_t\frac{t'}{m^2}\big\|t'\partial_{t}\alpha\big\|^2_{L^2}+\frac{1}{m^4}\int^{1/2}_t\big\|t'\alpha\big\|^2_{L^2}dt'\]
    Combining the previous two inequalities we have:
    \[\big\|t\partial_{t}\alpha\big\|^2_{L^2}+\frac{t^2}{m^2}\big\|\alpha\big\|^2_{H^1}\lesssim\bigg(\big\|t\partial_{t}\alpha\big\|^2_{L^2}+\frac{1}{m^2}\big\|\alpha\big\|^2_{H^1}+\frac{1}{m^2}\big\|\chi\big\|^2_{L^2}\bigg)\bigg|_{t=\frac{1}{2}}+\int^{1/2}_t\frac{t'}{m^2}\big\|t'\partial_{t}\alpha\big\|^2_{L^2}+\frac{1}{m^4}\big\|t'\alpha\big\|^2_{L^2}\]
    By Gronwall we obtain that for all $t\in[m\tau_0,1/2]$:
    \begin{equation}\label{first half of low frequency backwards estimate}
        \big\|t\partial_{t}\alpha\big\|^2_{L^2}+\frac{t^2}{m^2}\big\|\alpha\big\|^2_{H^1}\lesssim\bigg(\big\|t\partial_{t}\alpha\big\|^2_{L^2}+\frac{1}{m^2}\big\|\alpha\big\|^2_{H^1}+\frac{1}{m^2}\big\|\chi\big\|^2_{L^2}\bigg)\bigg|_{t=\frac{1}{2}}
    \end{equation}
    In terms of the $\tau$ variable we get that for all $\tau\in[\tau_0,2^{-l-1}]:$
    \[\|\tau\partial_{\tau}\alpha\|^2_{L^2}+\|\tau\alpha\|^2_{H^1}\lesssim\bigg(\|\tau\alpha\|^2_{H^1}+\|\tau\partial_{\tau}\alpha\|^2_{L^2}+\big\|\tau\chi\big\|^2_{L^2}\bigg)\bigg|_{\tau=2^{-l-1}}\]
    The implicit constant in this inequality is independent of $l$ and $\tau.$ We can apply this result for $(\alpha_l,\chi_l)$ and multiply by $2^{2sl}$ to obtain the corresponding inequality for $H^s$ norms:
    \[\|\tau\partial_{\tau}\alpha_l\|^2_{H^s}+\|\tau\alpha_l\|^2_{H^{s+1}}\lesssim\bigg(\|\tau\alpha_l\|^2_{H^{s+1}}+\|\tau\partial_{\tau}\alpha_l\|^2_{H^s}+\big\|\tau\chi_l\big\|^2_{H^s}\bigg)\bigg|_{\tau=2^{-l-1}}\]

    For the second part of the proof, we write (\ref{main equation}) as:
    \[\partial_{t}\big(\alpha_l-t\log t\partial_t\alpha_l\big)-4q't\log t\frac{\alpha_l}{m^2}+\frac{4t\log t}{(t^2/m^2+1)^2}\cdot\Delta\frac{\alpha_l}{m^2}=\frac{f_1'(t)}{m^2}t^2\log t\partial_{t}\alpha_l+\frac{f_2'(t)}{m^4}t^3\log t\alpha_l+\frac{f_3'(t)}{m^2}t\log t\chi_l\]
    where $f'_1,f'_2,f'_3$ are bounded functions of $t.$ It is convenient to introduce the notation:
    \[\overline{\alpha}_l=\alpha_l-t\log t\partial_t\alpha_l\]
    We multiply the equation by $\overline{\alpha}_l$ and we obtain:
    \[\big\|\overline{\alpha}_l\big\|_{L^2}^2(t)\lesssim\big\|\overline{\alpha}_l\big\|_{L^2}^2(1/2)+\int_t^{1/2}\int_St'|\log t'|\cdot|\overline{\alpha}_l|\cdot\bigg(|\alpha_l|+\frac{|\Delta\alpha_l|}{m^2}+\frac{|t'\partial_t\alpha_l|}{m^2}+\frac{|\chi_l|}{m^2}\bigg)dt'\lesssim\]\[\lesssim\big\|\overline{\alpha}_l\big\|_{L^2}^2(1/2)+\int_t^{1/2}t'|\log t'|\cdot\|\overline{\alpha}_l\|_{L^2}\cdot\bigg(\|\overline{\alpha}_l\|_{L^2}+\|t'\log t'\cdot\partial_t\alpha_l\|_{L^2}+\frac{\|\chi_l\|_{L^2}}{m^2}\bigg)dt'\lesssim\]\[\lesssim\big\|\overline{\alpha}_l\big\|_{L^2}^2(1/2)+\int_t^{1/2}t'|\log t'|\cdot\big\|\overline{\alpha}_l\big\|^2_{L^2}dt'+\int_t^{1/2}t'|\log t'|^3\cdot\big\|t'\partial_t\alpha_l\|^2_{L^2}+\frac{t'|\log t'|}{m^4}\cdot\big\|\chi_l\big\|_{L^2}^2dt'\]
    We now use (\ref{preliminary bound for chi in low freq backwards direction}) and (\ref{first half of low frequency backwards estimate}) to obtain:
    \[\big\|\overline{\alpha}_l\big\|_{L^2}^2(t)\lesssim\bigg(\big\|\overline{\alpha}_l\big\|_{L^2}^2+\big\|t\partial_{t}\alpha_l\big\|^2_{L^2}+\frac{1}{m^2}\big\|\alpha_l\big\|^2_{H^1}+\frac{1}{m^2}\big\|\chi_l\big\|^2_{L^2}\bigg)\bigg|_{t=\frac{1}{2}}+\int_t^{1/2}t'|\log t'|\cdot\big\|\overline{\alpha}_l\big\|^2_{L^2}dt'\]
    We apply Gronwall, and convert to the $\tau$ variable as before. Finally, we multiply the inequality by $2^{2sl}$ to obtain the desired inequality for $H^s$ norms.
\end{proof}

We combine the low frequency and high frequency estimates for our solution in order to obtain:

\begin{corollary}\label{combined backwards direction estimate}
    Let any $s\geq1.$ For any $\tau_0\in(0,1/2],$ we have:
    \[\sum_{l=0}^{\infty}\textbf{1}_{\{2^{l+1}\leq\tau_0^{-1}\}}\cdot\bigg(\|\tau\partial_{\tau}\alpha_l\|^2_{H^s}+\|\alpha_l-\log(2^l\tau)\cdot\tau\partial_{\tau}\alpha_l\|^2_{H^s}\bigg)\bigg|_{\tau=\tau_0}\lesssim\bigg(\big\|\alpha\big\|^2_{H^{s+1/2}}+\big\|\partial_{\tau}\alpha\big\|^2_{{H^{s-1/2}}}+\big\|\chi\big\|^2_{{H^{s-1/2}}}\bigg)\bigg|_{\tau=1}\]
    where the implicit constant is independent of $\tau_0$.
\end{corollary}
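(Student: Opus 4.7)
My plan is to combine the two dyadic energy estimates of this subsection frequency-strip by frequency-strip: for each $l$ with $2^{l+1}\le\tau_0^{-1}$, I apply Proposition \ref{low frequency backward estimate} to $(\alpha_l,\chi_l)$ on the interval $[\tau_0,2^{-l-1}]$ and chain it with Proposition \ref{high frequency backward estimate} on $[2^{-l-1},1]$, taking $\tau_0 = 2^{-l-1}$ there. This is legitimate because the system (\ref{main equation})--(\ref{definition of chi}) commutes with the spatial Littlewood--Paley projections (the only spatial operator is $\Delta$), so $(\alpha_l,\chi_l)$ is itself a smooth solution. Moreover, since $\alpha_l$ is supported in frequencies $(2^{l-1},2^l]$, the Poincar\'e-type bound $\|\alpha_l\|_{L^2}\lesssim 2^{-l}\|\nabla\alpha_l\|_{L^2}$ holds, so the proof of Proposition \ref{high frequency backward estimate} applies to $\alpha_l$ directly with threshold $1/(4\tau_0) = 2^{l-1}$.

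The first step gives
\[ \text{LHS}_l \lesssim \bigl(\|\alpha_l\|^2_{H^s} + \|\tau\nabla\alpha_l\|^2_{H^s} + \|\tau\partial_\tau\alpha_l\|^2_{H^s} + \|\tau\chi_l\|^2_{H^s}\bigr)\big|_{\tau=2^{-l-1}}. \]
At $\tau=2^{-l-1}$, the frequency localization of $\alpha_l$ yields $\|\tau\nabla\alpha_l\|_{H^s}\sim\|\alpha_l\|_{H^s}$, while the $\partial_\tau\alpha_l$ and $\chi_l$ terms acquire explicit factors of $2^{-2l}$. The high frequency estimate applied to $\alpha_l$ and evaluated at $\tau = 2^{-l-1}$ then gives, with $\mathcal{R}_l:= (\|\alpha_l\|^2_{H^{s+1}}+\|\partial_\tau\alpha_l\|^2_{H^s}+\|\chi_l\|^2_{H^s})|_{\tau=1}$,
\[ \|\alpha_l\|^2_{H^s}(2^{-l-1}) \lesssim 2^{-l}\mathcal{R}_l, \qquad \|\partial_\tau\alpha_l\|^2_{H^s}(2^{-l-1}) \lesssim 2^{l}\mathcal{R}_l. \]
For $\|\chi_l\|^2_{H^s}(2^{-l-1})$, I integrate the defining relation (\ref{definition of chi}) to write $\chi_l(2^{-l-1}) = \chi_l(1) - \int_{2^{-l-1}}^{1} 2\tau'\alpha_l(\tau')\,d\tau'$, controlling the integral via Cauchy--Schwarz together with the high frequency $L^2$ bound $\|\alpha_l\|^2_{H^s}(\tau) \lesssim \tau\mathcal{R}_l$. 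Combining these yields
\[ \text{LHS}_l \lesssim 2^{-l}\mathcal{R}_l + 2^{-2l}\cdot 2^l\mathcal{R}_l + 2^{-2l}\mathcal{R}_l \lesssim 2^{-l}\mathcal{R}_l. \]

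The last step is to sum in $l$. By frequency localization one has $2^{-l}\|\alpha_l\|^2_{H^{s+1}}\sim \|\alpha_l\|^2_{H^{s+1/2}}$, and similarly $2^{-l}\|\partial_\tau\alpha_l\|^2_{H^s}\sim\|\partial_\tau\alpha_l\|^2_{H^{s-1/2}}$ and $2^{-l}\|\chi_l\|^2_{H^s}\sim\|\chi_l\|^2_{H^{s-1/2}}$, so Littlewood--Paley orthogonality (Appendix \ref{Littlewood-Paley decomposition}) produces exactly $\|\alpha\|^2_{H^{s+1/2}}+\|\partial_\tau\alpha\|^2_{H^{s-1/2}}+\|\chi\|^2_{H^{s-1/2}}$ at $\tau=1$. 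The only mild caveat is the $l=0$ term, for which the high frequency estimate does not directly cover $P_{\le 1}\alpha$; there a standard energy estimate on the fixed interval $[1/2,1]$ (where the equation is regular) suffices.

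The main obstacle is the careful bookkeeping of frequency weights: one must track precisely how the factor $2^{-l}$ emerges from the high frequency estimate (reflecting the $1/\sqrt{t}$ dispersive behavior, which makes $\|\alpha_l\|^2_{L^2}$ at the transition time $\tau\sim 2^{-l}$ smaller by $2^{-l}$ than at $\tau=1$) and combines with the frequency-strip renormalizations $H^{s+1}\to H^{s+1/2}$ to yield the half-derivative gain after Littlewood--Paley summation.
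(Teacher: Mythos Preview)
Your proof is correct and follows essentially the same approach as the paper: apply Proposition \ref{low frequency backward estimate} on $[\tau_0,2^{-l-1}]$ and chain with Proposition \ref{high frequency backward estimate} on $[2^{-l-1},1]$ (using that $(\alpha_l)_H=\alpha_l$), then sum in $l$ using the Littlewood--Paley characterization of $H^{s\pm 1/2}$. Your bookkeeping is in fact more explicit than the paper's --- you spell out the integration of (\ref{definition of chi}) to bound $\|\chi_l\|_{H^s}(2^{-l-1})$ and flag the $l=0$ mode, both of which the paper passes over silently.
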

\begin{proof}
    For each $l\geq0$ such that $2^{l+1}\leq\tau_0^{-1},$ we apply Proposition \ref{low frequency backward estimate} to obtain:
    \[\bigg(\|\tau\partial_{\tau}\alpha_l\|^2_{H^s}+\|\alpha_l-\log(2^l\tau)\cdot\tau\partial_{\tau}\alpha_l\|^2_{H^s}\bigg)\bigg|_{\tau=\tau_0}\lesssim\bigg(\|\alpha_l\|^2_{H^s}+\|\tau\nabla\alpha_l\|^2_{H^s}+\|\tau\partial_{\tau}\alpha_l\|^2_{H^s}+\big\|\tau\chi_l\big\|^2_{H^s}\bigg)\bigg|_{\tau=2^{-l-1}}\]
    Next, we remark that on $[2^{-l-1},1]$ we have $(\alpha_l)_H=P_{\geq2^{l-1}}\alpha_l=\alpha_l,$ so we get by Proposition \ref{high frequency backward estimate}:
    \[\bigg(\|\alpha_l\|^2_{H^s}+\|\tau\nabla\alpha_l\|^2_{H^s}+\|\tau\partial_{\tau}\alpha_l\|^2_{H^s}+\big\|\tau\chi_l\big\|^2_{H^s}\bigg)\bigg|_{\tau=2^{-l-1}}\lesssim 2^{-l}\cdot\bigg(\big\|\alpha_l\big\|^2_{H^{s+1}}+\big\|\partial_{\tau}\alpha_l\big\|^2_{H^s}+\big\|\chi_l\big\|^2_{H^s}\bigg)\bigg|_{\tau=1}\]
    The constants are independent of $l$ and $\tau_0,$ so we can sum the above inequalities.
\end{proof}

Putting together the estimates proved so far, we complete the proof of (\ref{backward estimate intro}):

\begin{theorem}\label{main theorem backwards direction}
    Let $\big(\alpha,\chi\big)$ be a smooth solution of (\ref{main equation})-(\ref{definition of chi}) with asymptotic data given by $\chi(0),\mathcal{O},$ and $h$. Then we have the estimates:
    \[\big\|\mathfrak{h}\big\|_{{H^{s}}}+\big\|\mathcal{O}\big\|_{{H^{s}}}\lesssim \bigg(\big\|\alpha\big\|_{H^{s+1/2}}+\big\|\partial_{\tau}\alpha\big\|_{{H^{s-1/2}}}+\big\|\chi\big\|_{{H^{s-1/2}}}\bigg)\bigg|_{\tau=1}\]
    \[\big\|\chi(0)\big\|_{{H^{s+1/2}}}\lesssim \bigg(\big\|\alpha\big\|_{H^{s+1/2}}+\big\|\partial_{\tau}\alpha\big\|_{{H^{s-1/2}}}+\big\|\chi\big\|_{{H^{s+1/2}}}\bigg)\bigg|_{\tau=1}\]
\end{theorem}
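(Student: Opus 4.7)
The plan splits into two steps: first bound $\mathcal{O}$ and $\mathfrak{h}$ by passing to the limit $\tau_0\to 0^+$ in Corollary \ref{combined backwards direction estimate}, then deduce the $\chi(0)$ bound from the identity $\chi_l(0)=\chi_l(1)-\int_0^1 2\tau\alpha_l\, d\tau$ by splitting the integral at $\tau=2^{-l-1}$.

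For the first inequality, the asymptotic expansion (\ref{main expansion}) gives, for each $l\ge 0$ and as $\tau\to 0^+$, that $\tau\partial_\tau\alpha_l(\tau)\to\mathcal{O}_l$ and
\[
\bigl(\alpha_l - \log(2^l\tau)\cdot\tau\partial_\tau\alpha_l\bigr)(\tau) \;\longrightarrow\; h_l - l(\log 2)\,\mathcal{O}_l \;=\; \mathfrak{h}_l
\]
in every $H^s(S^n)$-norm, since the $\log\tau$-contributions cancel exactly by the definition of $\mathfrak{h}$. The indicator $\mathbf{1}_{\{2^{l+1}\le\tau_0^{-1}\}}$ in Corollary \ref{combined backwards direction estimate} eventually equals $1$ for each fixed $l$, so Fatou's lemma combined with the Littlewood-Paley characterization of Sobolev norms yields the first stated inequality.

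For the second inequality, the short-time piece $\int_0^{2^{-l-1}} 2\tau\alpha_l\, d\tau$ is bounded by combining Propositions \ref{low frequency backward estimate} and \ref{high frequency backward estimate}: the first gives $\|\tau\alpha_l\|_{H^{s+1}}(\tau)\lesssim \sqrt{R_l}$ on $[0,2^{-l-1}]$ (with $R_l$ the right-hand side of that proposition at $\tau=2^{-l-1}$), which leads via Minkowski to $\|\int_0^{2^{-l-1}}2\tau\alpha_l\, d\tau\|_{H^{s+1/2}}\lesssim 2^{-3l/2}\sqrt{R_l}$ on the frequency strip; the second propagates $R_l$ back to $\tau=1$ with an extra $2^{-l}$ factor, so the squared total summed with weight $2^{l(2s+1)}$ falls under $\sum_l 2^{-4l}B_l$ with $B_l=(\|\alpha_l\|_{H^{s+1}}^2+\|\partial_\tau\alpha_l\|_{H^s}^2+\|\chi_l\|_{H^s}^2)|_{\tau=1}$, which is dominated by the right-hand side of the theorem by direct frequency-strip comparison. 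For the long-time piece $\int_{2^{-l-1}}^{1}2\tau\alpha_l\, d\tau$, first integrate by parts to get $\alpha_l(1) - 2^{-2(l+1)}\alpha_l(2^{-l-1}) - \int_{2^{-l-1}}^1\tau^2\partial_\tau\alpha_l\, d\tau$: the $\alpha_l(1)$ term sits directly on the right-hand side of the theorem, the boundary term at $\tau=2^{-l-1}$ decays as $2^{-5l/2}$ times the high-frequency data and is absorbed by summation, and the remaining integral is controlled by the high-frequency bound on $\partial_\tau\alpha_l$ together with the favorable $\tau^2$ weight.

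The main obstacle is the sharp control of $\int_{2^{-l-1}}^{1}\tau^2\partial_\tau\alpha_l\, d\tau$ in $H^{s+1/2}$ without loss of derivatives: a naive combination of Minkowski with Proposition \ref{high frequency backward estimate} would require $\|\alpha\|_{H^{s+3/2}}$ on the right, one derivative higher than what the theorem allows. Resolving this requires combining the integration-by-parts step above with the frequency-strip multiplier estimates, possibly substituting via the main equation (\ref{main equation}) to redistribute derivatives between $\alpha$ and $\chi$ (where the latter has higher Sobolev regularity on the right-hand side of the theorem), and absorbing borderline terms using the first-step bounds on $\|\mathcal{O}\|_{H^s}$ and $\|\mathfrak{h}\|_{H^s}$ together with the decomposition $h=\mathfrak{h}+(\log\nabla)\mathcal{O}$.
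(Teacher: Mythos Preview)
Your proof of the first inequality is correct and matches the paper: take the limit $\tau_0\to 0$ in Corollary~\ref{combined backwards direction estimate}, identify the limits via the asymptotic expansion, and apply Fatou.

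For the second inequality your overall structure is right (split at $\tau=2^{-l-1}$), and the short-time piece is handled correctly. The genuine gap is in the long-time piece. Integrating by parts to produce $\int_{2^{-l-1}}^1 \tau^2\partial_\tau\alpha_l\,d\tau$ is counterproductive: Proposition~\ref{high frequency backward estimate} only gives $\tau\|\partial_\tau\alpha_l\|_{H^{s-1/2}}^2\lesssim B_l'$, and converting to $H^{s+1/2}$ costs a factor $2^{2l}$ that the $\tau^2$ weight cannot absorb. You recognize this obstacle, but the proposed fixes (substituting the equation, redistributing derivatives to $\chi$) are vague and do not obviously close.

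The paper avoids integration by parts entirely. After Cauchy--Schwarz one needs only
\[
\int_{2^{-l-1}}^1 \tau^2\|\alpha_l\|_{H^{s+1/2}}^2\,d\tau,
\]
and the point is that Proposition~\ref{high frequency backward estimate} already controls this: applied at index $s-\tfrac12$, the gradient term reads $\tau\|\nabla\alpha_l\|_{H^{s-1/2}}^2\sim \tau\|\alpha_l\|_{H^{s+1/2}}^2$ on the frequency strip, bounded by $\bigl(\|\alpha_l\|_{H^{s+1/2}}^2+\|\partial_\tau\alpha_l\|_{H^{s-1/2}}^2+\|\chi_l\|_{H^{s-1/2}}^2\bigr)\big|_{\tau=1}$. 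Hence the integrand is $\lesssim \tau$ times the data, and $\int_{2^{-l-1}}^1\tau\,d\tau\le\tfrac12$ closes the estimate with no derivative loss. The key object you overlooked is the $\tau\|\nabla\alpha_H\|^2$ term on the left of Proposition~\ref{high frequency backward estimate}, which is precisely what makes the direct bound on $\alpha_l$ (rather than $\partial_\tau\alpha_l$) sharp.
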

\begin{proof}
    The asymptotic expansion of the solution implies that for each $l\geq0$:
    \[\lim_{\tau_0\rightarrow0}\textbf{1}_{\{2^{l+1}\leq\tau_0^{-1}\}}\cdot\bigg(\|\tau\partial_{\tau}\alpha_l\|^2_{H^s}+\|\alpha_l-\log(2^l\tau)\cdot\tau\partial_{\tau}\alpha_l\|^2_{H^s}\bigg)\bigg|_{\tau=\tau_0}=\big\|\mathcal{O}_l\big\|_{{H^{s}}}^2+\big\|\mathfrak{h}_l\big\|_{{H^{s}}}^2\]
    We use Corollary \ref{combined backwards direction estimate} and Fatou's lemma to obtain the first estimate.

    In order to prove the second estimate, we first notice that by the proof of Corollary \ref{combined backwards direction estimate}, we have that for any $l\geq0$ such that $2^{l+1}\leq\tau^{-1}:$
    \[\|\tau\partial_{\tau}\alpha_l\|^2_{H^s}+\|\alpha_l-\log(2^l\tau)\cdot\tau\partial_{\tau}\alpha_l\|^2_{H^s}\lesssim2^{-l}\cdot\bigg(\big\|\alpha_l\big\|^2_{H^{s+1}}+\big\|\partial_{\tau}\alpha_l\big\|^2_{H^s}+\big\|\chi_l\big\|^2_{H^s}\bigg)\bigg|_{1}\]
    In particular, this implies that for any $l\geq0$ with $2^{l+1}\leq\tau^{-1}:$
    \begin{equation}\label{alpha low freq growth estimate}
        \big\|\alpha_l\big\|^2_{H^s}\lesssim\big(1+|\log\tau|^2\big)\bigg(\big\|\alpha_l\big\|^2_{H^{s+1/2}}+\big\|\partial_{\tau}\alpha_l\big\|^2_{{H^{s-1/2}}}+\big\|\chi_l\big\|^2_{{H^{s-1/2}}}\bigg)\bigg|_{1}
    \end{equation}
    For any fixed $l\geq0$ we use this relation and Proposition \ref{high frequency backward estimate} to get:
    \[\big\|\chi_l(0)\big\|_{{H^{s+1/2}}}^2\lesssim\big\|\chi_l(1)\big\|^2_{{H^{s+1/2}}}+\int_0^{2^{-l-1}}\tau^2\big\|\alpha_l\big\|_{H^{s+1/2}}^2d\tau+\int^1_{2^{-l-1}}\tau^2\big\|\alpha_l\big\|_{H^{s+1/2}}^2d\tau\lesssim\]\[\lesssim\big\|\chi_l(1)\big\|^2_{{H^{s+1/2}}}+\bigg(\big\|\alpha_l\big\|^2_{H^{s+1}}+\big\|\partial_{\tau}\alpha_l\big\|^2_{{H^{s}}}+\big\|\chi_l\big\|^2_{{H^{s}}}\bigg)\bigg|_{\tau=1}\cdot\int_0^{2^{-l-1}}\tau^2\big(1+|\log\tau|^2\big)d\tau+\]\[+\bigg(\big\|\alpha_l\big\|^2_{H^{s+1/2}}+\big\|\partial_{\tau}\alpha_l\big\|^2_{H^{s-1/2}}+\big\|\chi_l\big\|^2_{H^{s-1/2}}\bigg)\bigg|_{1}\cdot\int^1_{2^{-l-1}}\tau d\tau\lesssim\bigg(\big\|\alpha_l\big\|^2_{H^{s+1/2}}+\big\|\partial_{\tau}\alpha_l\big\|^2_{{H^{s-1/2}}}+\big\|\chi_l\big\|_{{H^{s+1/2}}}^2\bigg)\bigg|_{1}\]
    This completes the proof of the second estimate.
\end{proof}

Finally, we also prove some estimates which are not sharp in terms of spatial regularity, but will nevertheless be useful in the next section:
\begin{proposition}\label{not sharp estimates}
    Let $\big(\alpha,\chi\big)$ be the smooth solution of the system (\ref{main equation})-(\ref{definition of chi}) with initial data at $\tau=1$ given by $\chi(1),\alpha(1),\partial_{\tau}\alpha(1)\in C^{\infty}(S^n).$ Then we have the estimates for $\tau\in(0,1]$:
    \[\big\|\alpha\big\|^2_{H^s}(\tau)\lesssim\big(1+|\log\tau|^2\big)\cdot\bigg(\big\|\alpha\big\|^2_{H^{s+1}}+\big\|\partial_{\tau}\alpha\big\|^2_{{H^{s}}}+\big\|\chi\big\|^2_{{H^{s}}}\bigg)\bigg|_{\tau=1}\]
    \[\big\|\tau\partial_{\tau}\alpha\big\|^2_{H^s}(\tau)\lesssim\bigg(\big\|\alpha\big\|^2_{H^{s+1}}+\big\|\partial_{\tau}\alpha\big\|^2_{{H^{s}}}+\big\|\chi\big\|^2_{{H^{s}}}\bigg)\bigg|_{\tau=1}\]
\end{proposition}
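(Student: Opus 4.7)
The plan is to bound each Littlewood--Paley component $\alpha_l$, $\chi_l$ separately on the two subintervals of $(0,1]$ that were already singled out in the proof of Corollary \ref{combined backwards direction estimate} and Theorem \ref{main theorem backwards direction} --- the high frequency regime $[2^{-l-1},1]$ and the low frequency regime $(0,2^{-l-1}]$ --- and then sum in $l$ for each fixed $\tau$. This is a variant of what is already carried out for the bound on $\|\chi_l(0)\|_{H^{s+1/2}}$ in the proof of Theorem \ref{main theorem backwards direction}, except that instead of passing to a trace at $\tau=0$ I read off estimates pointwise at intermediate times $\tau \in (0,1]$.

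First I would handle the high frequency regime. For $\tau \in [2^{-l-1},1]$, setting $\tau_0 = 2^{-l-1}$ in Proposition \ref{high frequency backward estimate} ensures $(\alpha_l)_H = P_{\geq 2^{l-1}} \alpha_l = \alpha_l$, yielding directly
\[
\|\alpha_l\|_{H^s}^2(\tau) + \tau^2\|\partial_\tau \alpha_l\|_{H^s}^2(\tau) \lesssim \tau \cdot \big(\|\alpha_l\|_{H^{s+1}}^2 + \|\partial_\tau \alpha_l\|_{H^s}^2 + \|\chi_l\|_{H^s}^2\big)\bigg|_{\tau=1}.
\]
For $\tau \in (0, 2^{-l-1}]$ I would apply Proposition \ref{low frequency backward estimate} with $\tau_0 = \tau$ and then bound the RHS at $\tau = 2^{-l-1}$ via the high frequency estimate, exactly as in Corollary \ref{combined backwards direction estimate}, to obtain
\[
\|\tau\partial_\tau \alpha_l\|_{H^s}^2 + \|\alpha_l - \log(2^l \tau)\cdot \tau\partial_\tau \alpha_l\|_{H^s}^2 \lesssim 2^{-l}\big(\|\alpha_l\|_{H^{s+1}}^2 + \|\partial_\tau \alpha_l\|_{H^s}^2 + \|\chi_l\|_{H^s}^2\big)\bigg|_{\tau=1}.
\]
Combining these two via the triangle inequality and the elementary bound $|\log(2^l \tau)|^2 \leq 2(l\log 2)^2 + 2|\log\tau|^2$ yields the pointwise per--component control
\[
\|\alpha_l\|_{H^s}^2(\tau) \lesssim (1 + l^2 + |\log\tau|^2)\cdot 2^{-l}\big(\|\alpha_l\|_{H^{s+1}}^2 + \|\partial_\tau \alpha_l\|_{H^s}^2 + \|\chi_l\|_{H^s}^2\big)\bigg|_{\tau=1}
\]
in the low frequency regime, with an analogous (log--free) $2^{-l}$ bound for $\|\tau\partial_\tau\alpha_l\|_{H^s}^2$ in both regimes.

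The final step is summation. At fixed $\tau$, the indices $l$ with $2^{l+1} > \tau^{-1}$ sit in the high frequency regime and contribute at most $\tau \cdot (\|\alpha\|_{H^{s+1}}^2 + \|\partial_\tau\alpha\|_{H^s}^2 + \|\chi\|_{H^s}^2)|_{\tau=1}$, which is absorbed in the RHS since $\tau \leq 1$. The indices with $2^{l+1}\leq \tau^{-1}$ sit in the low frequency regime, and the only bookkeeping is that $(1+l^2)\cdot 2^{-l}$ is summable in $l$ while $|\log\tau|^2 \cdot 2^{-l} \leq |\log\tau|^2$ uniformly in $l$, so the loss $|\log(2^l\tau)|^2$ is absorbed into the advertised $(1 + |\log\tau|^2)$ factor. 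This yields the first estimate; the second follows by the same summation but without a log factor, since the low frequency bound on $\|\tau\partial_\tau \alpha_l\|_{H^s}^2$ carries the clean factor $2^{-l}$ and in the high frequency regime $\|\tau\partial_\tau \alpha_l\|_{H^s}^2 = \tau^2 \|\partial_\tau \alpha_l\|_{H^s}^2 \lesssim \tau \cdot (\cdots)|_{\tau=1}$. No real obstacle arises, since once the propositions of the preceding subsections are in hand the proof reduces to the frequency bookkeeping described above; the marginal $l=0$ case in the high frequency regime, where the identity $P_{\geq 2^{l-1}}\alpha_l = \alpha_l$ is somewhat degenerate, is handled by the alternative decomposition already used at the end of the proof of Proposition \ref{high frequency backward estimate}.
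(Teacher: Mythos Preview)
Your proposal is correct and follows essentially the same approach as the paper: a per--frequency decomposition, applying Proposition~\ref{high frequency backward estimate} on $[2^{-l-1},1]$ and Proposition~\ref{low frequency backward estimate} on $(0,2^{-l-1}]$ (combined as in Corollary~\ref{combined backwards direction estimate}), then summing. The only cosmetic difference is that the paper uses the constraint $2^{l+1}\leq\tau^{-1}$ in the low frequency regime to bound $l\lesssim|\log\tau|$ directly, absorbing the $l^2$ term into $|\log\tau|^2$ without invoking summability of $(1+l^2)2^{-l}$; both arguments are equivalent here.
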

\begin{proof}
    We fix $\tau_0\in(0,1].$ Using the notation $\alpha_H=P_{\geq1/(4\tau_0)}\alpha$, we have the decomposition for some $l_0\geq1$:
    \[\alpha=\alpha_0+\ldots+\alpha_{l_0-1}+P_{(2^{l_0-1},1/4\tau_0)}\alpha+\alpha_H\]
    where $2^{-l_0-2}<\tau_0\leq2^{-l_0-1}.$ By the proof of Theorem \ref{main theorem backwards direction} and (\ref{alpha low freq growth estimate}), we have that for any $l\leq l_0:$
    \[\big\|\alpha_l\big\|^2_{H^s}(\tau_0)\lesssim\big(1+|\log\tau_0|^2\big)\bigg(\big\|\alpha_l\big\|^2_{H^{s+1/2}}+\big\|\partial_{\tau}\alpha_l\big\|^2_{{H^{s-1/2}}}+\big\|\chi_l\big\|^2_{{H^{s-1/2}}}\bigg)\bigg|_{\tau=1}\]
    \[\|\tau\partial_{\tau}\alpha_l\|^2_{H^s}(\tau_0)\lesssim\bigg(\big\|\alpha_l\big\|^2_{H^{s+1/2}}+\big\|\partial_{\tau}\alpha_l\big\|^2_{{H^{s-1/2}}}+\big\|\chi_l\big\|^2_{{H^{s-1/2}}}\bigg)\bigg|_{\tau=1}\]
    We note that $P_{l_0}P_{(2^{l_0-1},1/4\tau_0)}\alpha=P_{(2^{l_0-1},1/4\tau_0)}\alpha,$ so these estimates also apply to $P_{(2^{l_0-1},1/4\tau_0)}\alpha.$ To deal with the high frequencies, we recall that $\alpha_H$ satisfies the estimate in Proposition \ref{high frequency backward estimate}:
    \[\frac{1}{\tau}\big\|\alpha_H\big\|^2_{H^s}+\tau\big\|\partial_{\tau}\alpha_H\big\|^2_{H^s}\lesssim \bigg(\big\|\alpha_H\big\|^2_{H^{s+1}}+\big\|\partial_{\tau}\alpha_H\big\|^2_{H^s}+\big\|\chi_H\big\|^2_{H^s}\bigg)\bigg|_{\tau=1}\]
    Summing the low frequency estimates for all $l\leq l_0,$ and using the high frequency estimate, we obtain the conclusion.
\end{proof}

\subsection{Asymptotic Completeness}\label{asymptotic expansion subsection}

In this section, we prove that smooth solutions of the system (\ref{main equation})-(\ref{definition of chi}) defined on $(0,\infty)$ induce asymptotic data at $\tau=0$. Together with the estimates proved in Theorem \ref{main theorem backwards direction}, this completes the proof of the second statement in Theorem \ref{scattering theorem for main equation intro}, establishing \textit{asymptotic completeness}.

\begin{proposition}[Asymptotic completeness]\label{existence of asymptotic expansion smooth}
     Let $\big(\alpha,\chi\big)$ be the smooth solution of the system (\ref{main equation})-(\ref{definition of chi}) with initial data at $\tau=1$ given by $\chi(1),\alpha(1),\partial_{\tau}\alpha(1)\in C^{\infty}(S^n).$ Then there exist smooth $\chi(0),\mathcal{O},$ and $h$, which determine the asymptotic data of the solution at $\tau=0$.
\end{proposition}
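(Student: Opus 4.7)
The plan is to exhibit $\mathcal{O}$, $h$, and $\chi(0)$ explicitly as $\tau \to 0^+$ limits of quantities built from $\alpha$ and $\chi$, and then verify the expansion (\ref{main expansion}) by quantifying the rate of convergence using the bounds already established in Proposition \ref{not sharp estimates}. That proposition prevents $\alpha$ from blowing up worse than logarithmically and keeps $\tau\partial_\tau\alpha$ bounded, which is essentially all the analytic input we need.

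Concretely, I would set
\[\mathcal{O}(\tau) := \tau\partial_\tau\alpha(\tau), \qquad H(\tau) := \alpha(\tau) - \log(\tau)\cdot\tau\partial_\tau\alpha(\tau),\]
and define $\mathcal{O} := \lim_{\tau\to 0^+}\mathcal{O}(\tau)$, $h := \lim_{\tau\to 0^+} H(\tau)$, and $\chi(0) := \chi(1) - \int_0^1 2\tau\,\alpha(\tau)\,d\tau$. To justify the first two limits, multiply (\ref{main equation}) by $\tau$ and rearrange to obtain the identity
\[\partial_\tau\bigl(\tau\partial_\tau\alpha\bigr) = -4q'\tau\alpha + \frac{4\tau}{(\tau^2+1)^2}\Delta\alpha + f_1\tau^2\partial_\tau\alpha + f_2\tau^3\alpha + f_3\tau\chi.\]
Applying Proposition \ref{not sharp estimates} at level $s+2$ to control the Laplacian term, at level $s$ for the other $\alpha$- and $\partial_\tau\alpha$-terms, and integrating (\ref{definition of chi}) to obtain $\|\chi\|_{H^s}(\tau) \lesssim 1$, one sees that $\|\partial_\tau(\tau\partial_\tau\alpha)\|_{H^s}(\tau) \lesssim \tau(1+|\log\tau|)$, which is integrable on $(0,1]$. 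Hence $\mathcal{O}(\tau)$ is Cauchy in every $H^s$ as $\tau \to 0^+$, so its limit $\mathcal{O}$ lies in $C^\infty(S^n)$. A direct computation gives $\partial_\tau H = -\log\tau \cdot \partial_\tau(\tau\partial_\tau\alpha)$, whose $H^s$-norm is bounded by $\tau|\log\tau|(1+|\log\tau|)$, also integrable; so $H(\tau) \to h$ in $C^\infty(S^n)$ as well.

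To verify (\ref{main expansion}), integrate the two $\partial_\tau$-bounds from $0$ to $\tau$: this gives $\|\mathcal{O}(\tau) - \mathcal{O}\|_{H^s} = O(\tau^2(1+|\log\tau|))$ and $\|H(\tau) - h\|_{H^s} = O(\tau^2|\log\tau|^2)$ in every $H^s$. The algebraic identity
\[\alpha(\tau) - \mathcal{O}\log\tau - h = \bigl(H(\tau) - h\bigr) + \log\tau\cdot\bigl(\mathcal{O}(\tau) - \mathcal{O}\bigr)\]
then immediately yields $\alpha(\tau) - \mathcal{O}\log\tau - h = O(\tau^2|\log\tau|^2)$, and dividing the bound on $\mathcal{O}(\tau) - \mathcal{O}$ by $\tau$ gives $\partial_\tau\alpha - \mathcal{O}/\tau = O(\tau(1+|\log\tau|)) \subset O(\tau|\log\tau|^2)$, in every $H^s$. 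Combined with the smoothness of $\alpha$ on $\{\tau > 0\}$, this puts $\alpha - \mathcal{O}\log\tau - h$ in $C^1_\tau([0,\infty))C^\infty(S^n)$. For $\chi$, the bound $\|\alpha\|_{H^s}(\tau) \lesssim 1+|\log\tau|$ makes $\tau\alpha(\tau)$ integrable in $C^\infty$ on $(0,1]$, so the defining integral for $\chi(0)$ converges in $C^\infty(S^n)$ and $\chi \in C^1_\tau([0,\infty))C^\infty(S^n)$ with $\partial_\tau\chi(0) = 0$, as required.

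I do not foresee a conceptual obstacle: the hard analytic work has already been done in Proposition \ref{not sharp estimates}, and the remaining task is essentially a careful computation of the $\tau$-derivatives of $\tau\partial_\tau\alpha$ and of $\alpha - \log\tau\cdot\tau\partial_\tau\alpha$ directly from the equation. The only mild subtlety is the choice of $H(\tau)$: the coefficient $\log\tau$ is the unique combination producing the cancellation $\partial_\tau H = -\log\tau\cdot\partial_\tau(\tau\partial_\tau\alpha)$, which is precisely what delivers the $O(\tau^2|\log\tau|^2)$ error required by the definition of asymptotic data.
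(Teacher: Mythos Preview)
Your proof is correct and is essentially the same as the paper's, just phrased differently: the paper writes down the variation-of-parameters formula $\alpha(\tau)=\alpha(1)+\partial_{\tau}\alpha(1)\log\tau-\log\tau\int_{\tau}^1\tau'F\,d\tau'+\int_{\tau}^1\tau'\log\tau'F\,d\tau'$ and reads off $\mathcal{O}$ and $h$ as the values at $\tau=0$ of the two $\tau$-dependent combinations, whereas you work directly with those combinations $\mathcal{O}(\tau)=\tau\partial_\tau\alpha$ and $H(\tau)=\alpha-\log\tau\cdot\tau\partial_\tau\alpha$ and show they are Cauchy; these are identical since the Duhamel formula is exactly the integrated form of your identities $\partial_\tau\mathcal{O}(\tau)=\tau F$ and $\partial_\tau H(\tau)=-\tau\log\tau\,F$, and both arguments invoke Proposition~\ref{not sharp estimates} for the same bound $\|F\|_{H^s}\lesssim 1+|\log\tau|$.
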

\begin{proof}
    We write equation (\ref{main equation}) as:
    \[\partial_{\tau}^2\alpha+\frac{1}{{\tau}}\partial_{\tau}\alpha=-4q'\alpha+\frac{4}{({\tau}^2+1)^2}\cdot\Delta\alpha+f_1(\tau)\tau\partial_{\tau}\alpha+f_2(\tau)\tau^2\alpha+f_3(\tau)\cdot\bigg(\chi(1)-\int^1_{\tau}2\tau'\alpha(\tau')d\tau'\bigg)\]
    As before, we treat the RHS as an inhomogeneous term, so we can write the equation as:
    \begin{equation}\label{main equation alpha like on ODE and inhom term}
        \partial_{\tau}^2\alpha+\frac{1}{\tau}\partial_{\tau}\alpha=F\bigg(\alpha,\partial_{\tau}\alpha,\Delta\alpha,\int^1_{\tau}\tau'\alpha(\tau')d\tau',\chi(1)\bigg)
    \end{equation}
    We can solve this equation as an inhomogeneous ODE to obtain:
    \[\alpha(\tau)=\alpha(1)+\partial_{\tau}\alpha(1)\cdot\log\tau-\log\tau\cdot\int_{\tau}^1\tau'\cdot Fd\tau'+\int_{\tau}^1\tau'\log\tau'\cdot Fd\tau'\]
    We notice that using the estimates in Proposition \ref{not sharp estimates}, we get for any $s\geq1$:
    \[\bigg\|F\bigg(\alpha,\partial_{\tau}\alpha,\Delta\alpha,\int^1_{\tau}\tau'\alpha(\tau')d\tau',\chi(1)\bigg)\bigg\|_{H^s}\lesssim\big(1+|\log\tau|\big)\cdot\bigg(\big\|\alpha\big\|_{H^{s+3}}+\big\|\partial_{\tau}\alpha\big\|_{{H^{s+2}}}+\big\|\chi\big\|_{{H^{s+2}}}\bigg)\bigg|_{\tau=1}\]
    In particular, we can define the smooth functions:
    \[\chi(0)=\chi(1)-\int_0^12\tau'\alpha(\tau')d\tau',\  \mathcal{O}:=\partial_{\tau}\alpha(1)-\int_{0}^1\tau'\cdot Fd\tau',\ h:=\alpha(1)+\int_{0}^1\tau'\log\tau'\cdot Fd\tau'\]
    As a result, we have:
    \[\alpha(\tau)=\mathcal{O}\cdot\log\tau+h+\log\tau\cdot\int^{\tau}_0\tau'\cdot Fd\tau'-\int^{\tau}_0\tau'\log\tau'\cdot Fd\tau'\]
    Using our previous bound on $F,$ we see that $\alpha-\mathcal{O}\log({\tau})-h\in C^1_{\tau}([0,\infty))C^{\infty}(S^n)$ and:
    \[\alpha({\tau})-\mathcal{O}\log({\tau})-h=O\big({\tau}^2|\log({\tau})|^2\big),\ \partial_{\tau}\alpha({\tau})-\frac{\mathcal{O}}{\tau}=O\big({\tau}|\log({\tau})|^2\big)\text{ in }C^{\infty}(S^n).\]
    We conclude that $\big(\chi(0),\mathcal{O},h\big)$ determine the asymptotic data of $\big(\alpha,\chi\big)$ at $\tau=0.$
\end{proof}

In conclusion, the results proved in this section imply the existence and uniqueness of scattering states, and the asymptotic completeness statements in Theorem \ref{scattering theorem for main equation intro}. As a result, we can define the scattering map $\big(\chi(0),\mathcal{O},\mathfrak{h}\big)\mapsto\big(\chi(1),\alpha(1),\partial_{\tau}\alpha(1)\big)$. The estimates (\ref{forward estimate intro}) and (\ref{backward estimate intro}) imply that the scattering map extends by density as a Banach space isomorphism from $H^{s+\frac{1}{2}}(S^n)\times H^s(S^n)\times H^s(S^n)$ to $H^{s+\frac{1}{2}}(S^n)\times H^{s+\frac{1}{2}}(S^n)\times H^{s-\frac{1}{2}}(S^n)$ for any $s\geq1$. Thus, we complete the proof of Theorem \ref{scattering theorem for main equation intro}.

\section{Scattering for Self-similar Solutions of the Wave Equation in Minkowski Space}\label{minkowski space scattering section}

The main result of this section is the proof of Theorem \ref{scattering map in Minkowski theorem}, which establishes a scattering theory for smooth self-similar solutions of the wave equation in the $\{u<0,\ v>0\}$ region of Minkowski space $\mathbb{R}^{n+2}.$

The proof relies on the scattering result in Theorem \ref{scattering theorem for main equation intro} proved in Section \ref{main equation estimates section}, together with a series of compatibility relations which we briefly explain now. We recall that the system (\ref{main equation intro})-(\ref{definition of chi intro}) models the equation for $\partial_v^{\frac{n}{2}}\phi$ along $u=-1$ for $v\in [0,1]$, and similarly the equation for $\partial_u^{\frac{n}{2}}\phi$  along $v=1$ for $u\in [-1,0]$. Integrating these functions, we obtain a self-similar solution $\phi$ of the wave equation (\ref{wave equation phi}) up to the choice of $\frac{n}{2}$ integration constants. Determining the constants such that $\phi$ solves (\ref{wave equation phi}) at $\{v=0\}$ and $\{u=0\}$ implies a series of compatibility conditions.

In Section \ref{compatibility relations subsection} we derive the compatibility relations mentioned above. In Section \ref{minkowksi existence to ivp subsection} we prove the \textit{existence and uniqueness of scattering states}. In Section \ref{minkowksi asymptotic expansion subsection} we prove \textit{asymptotic completeness}. Finally, in Section \ref{minkowski scattering subsection} we complete the proof of Theorem \ref{scattering map in Minkowski theorem}, by constructing the \textit{scattering isomorphism}.

\subsection{Compatibility Relations}\label{compatibility relations subsection}
We assume that $\phi$ is a self-similar solution of the wave equation (\ref{wave equation phi}) in the region $\{u<0,\ v>0\}\subset\mathbb{R}^{n+2}.$ In this section we prove that under some regularity conditions, $\phi$ satisfies the compatibility relations referred to above. Based on this, we provide a description of the asymptotic behavior of the solution near $v=0$ and near $u=0.$

We assume for now that $\phi$ satisfies the smoothness condition:
\[\phi(-1,v)\in W^{\frac{n}{2},1}_{loc}\big([0,\infty);C^{\infty}(S^n)\big)\cap C^{\infty}((0,\infty)\times S^n)\]
By self-similarity, we have that $\phi$ solves (\ref{wave equation phi u=-1}), the wave equation along $u=-1.$ In particular, the computations in Section \ref{deriving model problem} imply that $\chi=\partial_v^{\frac{n}{2}-1}\phi,\ \alpha=\partial_v^{\frac{n}{2}}\phi$ are smooth solutions of the model problem from Section \ref{main equation estimates section}. By Proposition \ref{existence of asymptotic expansion smooth}, we know that there exist $\chi(0),\mathcal{O},h\in C^{\infty}(S)$ which determine the asymptotic data of $(\alpha,\chi)$ at $v=0.$ We note that we changed coordinates from $\tau$ to $v=\tau^2,$ and unless otherwise stated for the rest of the paper $(\alpha,\chi)$ will always be expressed in the $v$ coordinate.

We notice that the regularity assumptions on $\phi$ imply that along $u=-1$:
\begin{equation}\label{Taylor expansion for phi}
    \phi(-1,v)=\sum_{k=0}^{n/2-1}\frac{1}{k!}\phi_kv^k+\int_0^v\int_0^{v_{n/2}}\cdots\int_0^{v_2}\alpha(v_1)dv_1dv_2\ldots dv_{n/2}
\end{equation}
where we use the notation $\phi_k=\partial_v^k\phi(0).$ The above equation implies that for any $1\leq a\leq\frac{n}{2}-1:$
\begin{equation}\label{Taylor expansion for phi a}
    \partial_v^a\phi(-1,v)=\sum_{k=a}^{n/2-1}\frac{1}{(k-a)!}\phi_kv^{k-a}+\int_0^v\int_0^{v_{n/2-a}}\cdots\int_0^{v_2}\alpha(v_1)dv_1dv_2\ldots dv_{n/2-a}
\end{equation}

We use the above relations in order to evaluate equation (\ref{wave equation phi u=-1}) at $v=0,$ and we obtain the compatibility relation:
\[\bigg(1-\frac{n}{2}\bigg)\phi_1=\Delta\phi_0\]
Similarly, we evaluate equation (\ref{wave equation dv3 phi}) at $v=0,$ and we obtain the compatibility relation:
\[\bigg(2-\frac{n}{2}\bigg)\phi_2+2\phi_1=\Delta\phi_1\]
We repeat the same process in equation (\ref{wave equation dva phi}) for each $2\leq a\leq\frac{n}{2}-2,$  and we obtain the compatibility relations:
\[\bigg(a+1-\frac{n}{2}\bigg)\phi_{a+1}+q_a'\phi_a+q_a''\phi_{a-1}=\Delta\phi_a\]
We notice that to obtain the above relation in the case of $a=\frac{n}{2}-2,$ we use the asymptotic expansion of $\alpha=\partial_v^{\frac{n}{2}}\phi$, which implies $\lim_{v\rightarrow0}v\partial_v^{\frac{n}{2}}\phi(-1,v)=0.$

In summary, for each $1\leq a\leq\frac{n}{2}-1,$ we have the compatibility relation:
\begin{equation}\label{compatibility a}
    \phi_a=\Phi_a\big(\Delta^a\phi_0,\ldots,\Delta\phi_0\big),
\end{equation}
where each $\Phi_a$ is a first order multi-linear function such that the coefficient of $\Delta^a\phi_0$ is nonzero. In particular, we point out that $\chi(0)$ is determined by $\phi_0$ as well:
\begin{equation}\label{compatibility chi}
    \chi(0)=\phi_{\frac{n}{2}-1}=\Phi_{\frac{n}{2}-1}\big(\Delta^{\frac{n}{2}-1}\phi_0,\ldots,\Delta\phi_0\big)
\end{equation}

We consider now equation (\ref{wave equation dva phi}) for $a=\frac{n}{2}-1$:
\[v(v+1)^2\partial_v\alpha+\big[n(v+1)+p_av+q_a\big]v\alpha+(p'_av+q'_a)\chi+q_a''\partial_v^{a-1}\phi-\Delta\chi=0\]
Using the expansion (\ref{main expansion}), we obtain the compatibility relation:
\begin{equation}\label{compatibility O}
    \mathcal{O}=\Phi_{\frac{n}{2}}\big(\Delta^{\frac{n}{2}}\phi_0,\ldots,\Delta\phi_0\big)=2\Delta\chi(0)-2q_{\frac{n}{2}-1}'\chi(0)-2q_{\frac{n}{2}-1}''\phi_{\frac{n}{2}-2},
\end{equation}
where $\Phi_{\frac{n}{2}}$ is a first order multi-linear function such that the coefficient of $\Delta^{\frac{n}{2}}\phi_0$ is nonzero.

We prove the following result:
\begin{proposition}\label{asymptotic expansion phi proposition}
    For any $\phi$ self-similar solution of (\ref{wave equation phi}) in the region $\{u<0,\ v>0\},$ we have:
    \begin{itemize}
        \item If $\phi(-1,v)\in W^{\frac{n}{2},1}_{loc}\big([0,\infty);C^{\infty}(S^n)\big)\cap C^{\infty}((0,\infty)\times S^n)$, there exist unique $\phi_0,h\in C^{\infty}(S^n)$ such that for $u<0$:
        \begin{equation}\label{expansion for phi at v=0}
            \phi(u,v)=\sum_{k=0}^{n/2-1}\frac{1}{k!}\phi_k\bigg(\frac{v}{-u}\bigg)^k+\int_0^{\frac{v}{-u}}\int_0^{v_{n/2}}\cdots\int_0^{v_2}\alpha(v_1)dv_1dv_2\ldots dv_{n/2},
        \end{equation}
        where $\phi_1,\ldots,\phi_{\frac{n}{2}-1}\in C^{\infty}(S^n)$ are defined by the compatibility relation (\ref{compatibility a}), $\chi(0),\mathcal{O}\in C^{\infty}(S^n)$ are defined by the compatibility relations (\ref{compatibility chi})-(\ref{compatibility O}), and $\big(\alpha,\chi\big)$ is the smooth solution of (\ref{main equation})-(\ref{definition of chi}) with asymptotic initial data $\big(\chi(0),\mathcal{O},h\big).$
        \item If $\phi(u,1)\in W^{\frac{n}{2},1}_{loc}\big((-\infty,0];C^{\infty}(S^n)\big)\cap C^{\infty}((-\infty,0)\times S^n)$, there exist unique $\underline{\phi_0},\underline{h}\in C^{\infty}(S^n)$ such that for $v>0$:
        \begin{equation}\label{expansion for phi at u=0}
            \phi(u,v)=\sum_{k=0}^{n/2-1}\frac{1}{k!}\underline{\phi_k}\bigg(\frac{-u}{v}\bigg)^k+\int_0^{\frac{-u}{v}}\int_0^{u_{n/2}}\cdots\int_0^{u_2}\underline{\alpha}(u_1)du_1du_2\ldots du_{n/2},
        \end{equation}
        where $\underline{\phi_1},\ldots,\underline{\phi_{\frac{n}{2}-1}}\in C^{\infty}(S^n)$ are defined by the compatibility relation (\ref{compatibility a}), $\underline{\chi}(0),\underline{\mathcal{O}}\in C^{\infty}(S^n)$ are defined by the compatibility relations (\ref{compatibility chi})-(\ref{compatibility O}), and $\big(\underline{\alpha},\underline{\chi}\big)$ is the smooth solution of (\ref{main equation})-(\ref{definition of chi}) with asymptotic initial data $\big(\underline{\chi}(0),\underline{\mathcal{O}},\underline{h}\big).$
    \end{itemize}
\end{proposition}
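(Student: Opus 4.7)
The plan is to reduce the proposition to a statement along $\{u=-1\}$ via self-similarity, and then combine the classical Taylor expansion in $v$ with the asymptotic data extraction for the model system that is already available from Proposition \ref{existence of asymptotic expansion smooth}. First I would use $S\phi=0$ with $S=u\partial_u+v\partial_v$ to write $\phi(u,v)=\phi(-1,v/(-u))$ for $u<0$, which reduces (\ref{expansion for phi at v=0}) to an expansion of $\phi(-1,v)$ on $v\in[0,\infty)$; the general formula then follows by substituting $v\mapsto v/(-u)$. The hypothesis $\phi(-1,\cdot)\in W^{n/2,1}_{loc}([0,\infty);C^{\infty}(S^n))$ allows Taylor expansion to order $n/2-1$ in $v$ with integral remainder, producing exactly (\ref{Taylor expansion for phi}) with $\phi_k:=\partial_v^k\phi(-1,0)\in C^{\infty}(S^n)$ for $0\leq k\leq n/2-1$ and remainder equal to the iterated integral of $\alpha:=\partial_v^{n/2}\phi$.

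Next I would invoke the derivation in Section \ref{deriving model problem} to see that the pair $(\alpha,\chi):=(\partial_v^{n/2}\phi,\partial_v^{n/2-1}\phi)$, restricted to $\{u=-1\}$, is a smooth solution of the main model system (\ref{main equation})-(\ref{definition of chi}) on $(0,\infty)\times S^n$. Proposition \ref{existence of asymptotic expansion smooth} then supplies asymptotic data $(\chi(0),\mathcal{O},h)\in C^{\infty}(S^n)^3$ at $v=0$. It remains to check that the $\chi(0)$ and $\mathcal{O}$ produced this way agree with those defined by the compatibility relations (\ref{compatibility chi})-(\ref{compatibility O}); this is precisely the content of the computations already carried out in the text preceding the proposition. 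Concretely, evaluating (\ref{wave equation phi u=-1}) and the iterates (\ref{wave equation dva phi}) at $v=0$ for $1\leq a\leq n/2-2$ gives the compatibility relations (\ref{compatibility a}), while the $a=n/2-1$ equation together with the expansion (\ref{main expansion}) of $\alpha$ yields (\ref{compatibility O}). Reinserting the Taylor polynomial into (\ref{Taylor expansion for phi}) and rescaling via $v\mapsto v/(-u)$ then gives (\ref{expansion for phi at v=0}) on all of $\{u<0\}$. Uniqueness of $(\phi_0,h)$ is immediate: $\phi_0$ is recovered as $\phi(-1,0)$, the compatibility relations determine $\phi_1,\ldots,\phi_{n/2-1},\chi(0),\mathcal{O}$ from $\phi_0$, and $h$ is uniquely specified as the non-logarithmic constant term in (\ref{main expansion}). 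The second bullet follows by the mirror-image argument, interchanging the roles of $u$ and $v$ and working on $\{v=1\}$ with $\underline{\alpha}:=\partial_u^{n/2}\phi$ and $\underline{\chi}:=\partial_u^{n/2-1}\phi$.

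The main obstacle I anticipate is the careful justification of the $v\to 0$ limits in the high-order equation (\ref{wave equation dva phi}) with $a=n/2-1$: the Sobolev assumption controls only derivatives of order $\leq n/2-1$ pointwise at $v=0$, so the $v\partial_v\alpha$ and $v\alpha$ terms must be handled using (\ref{main expansion}), in particular the fact that $\lim_{v\to 0}v\partial_v^{n/2}\phi(-1,v)=0$ despite the logarithmic singularity of $\alpha$. This is the only place where the Sobolev regularity of $\phi$ and the asymptotic expansion of $(\alpha,\chi)$ must be combined, and it is precisely what forces the renormalization that defines $\mathcal{O}$; everything else reduces to bookkeeping on the chain of compatibility relations already written out above the statement.
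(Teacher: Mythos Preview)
Your proposal is correct and follows essentially the same approach as the paper: the paper's proof explicitly notes that the first bullet was already established in the discussion preceding the proposition (self-similarity reduction to $u=-1$, Taylor expansion with integral remainder, the compatibility relations, and Proposition~\ref{existence of asymptotic expansion smooth} for the asymptotic data of $(\alpha,\chi)$), and then handles the second bullet by setting $\psi(u,v)=\phi(-v,-u)$ and applying the first part to $\psi$, which is exactly your mirror-image argument. Your anticipated obstacle concerning the $v\to 0$ limits in the $a=n/2-1$ equation is also the point the paper singles out.
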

\begin{proof}
    In the discussion preceding the proposition we already proved the first part of the statement. Equation (\ref{expansion for phi at v=0}) follows from (\ref{Taylor expansion for phi}) using the fact that:
    \[\phi(u,v)=\phi\bigg(-1,\frac{v}{-u}\bigg).\]
    
    We define $\psi(u,v)=\phi(-v,-u).$ Then $\psi(-1,v)\in W^{\frac{n}{2},1}_{loc}\big([0,\infty);C^{\infty}(S^n)\big)\cap C^{\infty}((0,\infty)\times S^n)$. By the first part, we know that there exist $\underline{\phi_0}:=\psi_0=\phi(0,1),\ \underline{h}\in C^{\infty}(S^n)$ which determine the asymptotic data of $\psi$ at $v=0.$ Let $\underline{\phi_1}:=\psi_1,\ldots,\underline{\phi_{\frac{n}{2}-1}}:=\psi_{\frac{n}{2}-1},\underline{\chi}(0),\underline{\mathcal{O}}\in C^{\infty}(S^n)$ satisfy the compatibility relations (\ref{compatibility a})-(\ref{compatibility O}). We denote by $\big(\underline{\alpha},\underline{\chi}\big)$ the smooth solution of (\ref{main equation})-(\ref{definition of chi}) with asymptotic initial data $\big(\underline{\chi}(0),\underline{\mathcal{O}},\underline{h}\big).$ In particular, we have $\underline{\alpha}=\partial_v^{\frac{n}{2}}\psi,\ \underline{\chi}=\partial_v^{\frac{n}{2}-1}\psi.$ Then we have that for $u<0:$
    \[\phi(-v,-u)=\psi(u,v)=\sum_{k=0}^{n/2-1}\frac{1}{k!}\underline{\phi_k}\bigg(\frac{v}{-u}\bigg)^k+\int_0^{\frac{v}{-u}}\int_0^{v_{n/2}}\cdots\int_0^{v_2}\underline{\alpha}(v_1)dv_1dv_2\ldots dv_{n/2}\]
\end{proof}

Based on this result, we can give a precise definition of asymptotic data at $\{v=0\}$ and $\{u=0\}$ for self-similar solutions of (\ref{wave equation phi}):
\begin{definition}
    Let $\phi\in C^{\infty}(S^n)$ be a smooth self-similar solution of (\ref{wave equation phi}) in the region $\{u<0,v>0\}.$ We say that $\big(\phi_0,h\big)$ determine the asymptotic data of $\phi$ at $v=0$ if the following conditions hold:
    \begin{itemize}
        \item $\phi(-1,v)\in W^{\frac{n}{2},1}_{loc}\big([0,\infty);C^{\infty}(S^n)\big)\cap C^{\infty}((0,\infty)\times S^n)$,
        \item for each $1\leq a\leq \frac{n}{2}-1,$ we have that $\partial_v^a\phi(-1,v)=\phi_a$ satisfies the compatibility relation (\ref{compatibility a}),
        \item the functions $\alpha=\partial_v^{\frac{n}{2}}\phi,\ \chi=\partial_v^{\frac{n}{2}-1}\phi$, which solve the system (\ref{main equation})-(\ref{definition of chi}), have asymptotic initial data $\big(\chi(0),\mathcal{O},h\big),$ where $\chi(0),\mathcal{O}$ satisfy the compatibility relations (\ref{compatibility chi})-(\ref{compatibility O}),
        \item $\phi$ satisfies (\ref{expansion for phi at v=0}).
    \end{itemize}
    We have an analogous definition for $\big(\underline{\phi_0},\underline{h}\big)$ to determine the asymptotic data of $\phi$ at $u=0,$ according to the second part of Proposition \ref{asymptotic expansion phi proposition}. Equivalently, we can define $\big(\underline{\phi_0},\underline{h}\big)$ to determine the asymptotic data of $\phi$ at $u=0,$ if $\big(\underline{\phi_0},\underline{h}\big)$ determine the asymptotic data of $\psi(u,v)=\phi(-v,-u)$ at $v=0.$
\end{definition}

\subsection{Existence and Uniqueness of Scattering States}\label{minkowksi existence to ivp subsection}

We prove the existence and uniqueness of solutions with given asymptotic initial data. We remark that our previous result regarding the asymptotic behavior of smooth solutions provides us a with guideline for constructing the solutions. The key ingredient is the existence and uniqueness of scattering states for the model problem (\ref{main equation})-(\ref{definition of chi}).

\begin{proposition}[Existence and uniqueness of scattering states]\label{existence of smooth phi}
    For any $\phi_0,h\in C^{\infty}(S^n)$, there exists a unique $\phi$ self-similar solution of (\ref{wave equation phi}), which satisfies the smoothness condition \[\phi(-1,v)\in W^{\frac{n}{2},1}_{loc}\big([0,\infty);C^{\infty}(S^n)\big)\cap C^{\infty}((0,\infty)\times S^n),\] such that $\big(\phi_0,h\big)$ determine the asymptotic data of $\phi$ at $v=0$. Moreover, for any $s\geq1$ we have:
    \begin{equation}\label{main estimate for phi forward direction}
        \bigg(\sum_{a=0}^{\frac{n}{2}}\big\|\partial_v^a\phi\big\|_{H^{s+1/2}}+\big\|\partial_{v}^{\frac{n}{2}+1}\phi\big\|_{{H^{s-1/2}}}\bigg)\bigg|_{(u,v)=(-1,1)}\lesssim \big\|\mathfrak{h}\big\|_{{H^{s}}}+\big\|\phi_0\big\|_{{H^{s+n}}}
    \end{equation}
    where $\mathfrak{h}:=h-(\log\nabla)\mathcal{O}:=h-\sum_{l=1}^{\infty}l\log2\cdot\mathcal{O}_l$ and $\mathcal{O}$ is defined in terms of $\phi_0$ by (\ref{compatibility O}).
\end{proposition}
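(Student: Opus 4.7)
The plan is to reduce the problem to Theorem \ref{existence to the initial value problem theorem} via the compatibility analysis of Section \ref{compatibility relations subsection}, run in reverse. Starting from $\phi_0 \in C^\infty(S^n)$, use the compatibility relations (\ref{compatibility a}) recursively to define $\phi_1, \ldots, \phi_{n/2-1} \in C^\infty(S^n)$, and use (\ref{compatibility chi})-(\ref{compatibility O}) to define $\chi(0) := \phi_{n/2-1}$ and $\mathcal{O}$. Apply Theorem \ref{existence to the initial value problem theorem} to the asymptotic data $\bigl(\chi(0), \mathcal{O}, h\bigr)$ to obtain the unique smooth solution $(\alpha, \chi)$ of the model system (\ref{main equation})-(\ref{definition of chi}). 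Define $\phi$ along $\{u=-1\}$ by the iterated-integral formula (\ref{Taylor expansion for phi}) and extend by self-similarity, $\phi(u, v) := \phi(-1, v/(-u))$, to the full region $\{u<0, v>0\}$. The regularity of $\alpha$ and $\chi$ supplied by Theorem \ref{existence to the initial value problem theorem} guarantees $\phi(-1, v) \in W^{n/2,1}_{\mathrm{loc}}([0,\infty); C^\infty(S^n)) \cap C^\infty((0,\infty) \times S^n)$, and by construction $\partial_v^{n/2-1}\phi = \chi$, $\partial_v^{n/2}\phi = \alpha$.

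The principal technical obstacle is verifying that the so-constructed $\phi$ actually satisfies the wave equation (\ref{wave equation phi}). Introduce
\[
\widetilde{W}(v,\omega) := v(v+1)^2 \partial_v^2 \phi + \bigl(1 - \tfrac{n}{2}\bigr)(v+1)^2 \partial_v \phi + nv(v+1) \partial_v \phi - \Delta \phi
\]
along $\{u=-1\}$, so that $\widetilde{W} = (v+1)^2$ times the LHS of the reduced wave equation (\ref{wave equation phi u=-1}). The derivation in Section \ref{deriving model problem} shows that the model equation (\ref{main equation}) is equivalent to the identity $\partial_v^{n/2} \widetilde{W} \equiv 0$ on $(0, \infty)$, which therefore holds by construction of $(\alpha, \chi)$. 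On the other hand, the very derivation of the compatibility relations in Section \ref{compatibility relations subsection} produces them as the vanishing of the boundary values $\partial_v^a \widetilde{W}\bigl|_{v=0^+}$ for $0 \leq a \leq n/2 - 1$ (with the limit at $a = n/2 - 1$ absorbing the $\log v$ contribution of $\alpha$ via the factor $v(v+1)^2 \partial_v \alpha$). Thus $v \mapsto \widetilde{W}(v, \omega)$, continuous on $[0, \infty)$, is a polynomial in $v$ of degree strictly less than $n/2$ whose first $n/2$ Taylor coefficients at the origin vanish, forcing $\widetilde{W} \equiv 0$. Dividing by $(v+1)^2 \neq 0$ recovers (\ref{wave equation phi u=-1}) along $\{u=-1\}$, and self-similarity propagates the wave equation to the full region $\{u<0,\ v>0\}$.

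Uniqueness is immediate: the compatibility relations and Theorem \ref{existence to the initial value problem theorem} uniquely determine $(\alpha, \chi)$ from $(\phi_0, h)$, and $\phi$ is then reconstructed uniquely via (\ref{Taylor expansion for phi}). For the estimate (\ref{main estimate for phi forward direction}), the compatibility relations (\ref{compatibility chi})-(\ref{compatibility O}) express $\mathcal{O}$ and $\chi(0)$ as linear combinations of $\Delta^j \phi_0$ with $1 \leq j \leq n/2$ and $1 \leq j \leq n/2-1$ respectively, giving
\[
\bigl\|\mathcal{O}\bigr\|_{H^s} \lesssim \bigl\|\phi_0\bigr\|_{H^{s+n}}, \qquad \bigl\|\chi(0)\bigr\|_{H^{s+1/2}} \lesssim \bigl\|\phi_0\bigr\|_{H^{s+n-3/2}} \leq \bigl\|\phi_0\bigr\|_{H^{s+n}}.
\]
The top-order quantities $\|\partial_v^{n/2}\phi\|_{H^{s+1/2}}\bigl|_{v=1} = \|\alpha\|_{H^{s+1/2}}\bigl|_{\tau=1}$ and $\|\partial_v^{n/2+1}\phi\|_{H^{s-1/2}}\bigl|_{v=1} = \tfrac{1}{2}\|\partial_\tau \alpha\|_{H^{s-1/2}}\bigl|_{\tau=1}$ are then controlled directly by the forward estimate (\ref{forward estimate intro}) of Theorem \ref{scattering theorem for main equation intro}. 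For $0 \leq a \leq n/2 - 1$, the formula (\ref{Taylor expansion for phi a}) at $v = 1$ bounds $\|\partial_v^a \phi\|_{H^{s+1/2}}$ by the $H^{s+1/2}$ norms of $\phi_a, \ldots, \phi_{n/2-1}$ (all controlled by $\|\phi_0\|_{H^{s+n}}$) plus $\int_0^1 \|\alpha(v)\|_{H^{s+1/2}}\, dv$, which is finite and controlled by the same right-hand side via the low/high-frequency estimates of Propositions \ref{low frequency alpha J proposition}, \ref{low frequency alpha Y proposition}, and \ref{high frequency forward estimate}.
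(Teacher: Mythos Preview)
Your proposal is correct and follows essentially the same route as the paper. The construction via the compatibility relations and Theorem~\ref{existence to the initial value problem theorem}, the verification of the wave equation by showing $\partial_v^{n/2}\widetilde{W}\equiv 0$ together with $\partial_v^a\widetilde{W}|_{v=0^+}=0$ for $0\le a\le \tfrac{n}{2}-1$, and the estimate via Theorem~\ref{improved regularity theorem} plus Propositions~\ref{low frequency alpha J proposition}--\ref{high frequency forward estimate} all match the paper's proof; your ``polynomial of degree $<n/2$ with vanishing Taylor coefficients'' phrasing is a clean repackaging of the paper's repeated fundamental-theorem-of-calculus step. One small point: in the final estimate for $\int_0^1\|\alpha(v)\|_{H^{s+1/2}}\,dv$, the propositions you cite carry a free parameter $m$ and only yield the correct $H^{s+1/2}\to H^s$ gain after Littlewood--Paley localisation with $m=2^l$ (as in the proof of Theorem~\ref{improved regularity theorem}); the paper makes this explicit by working per frequency strip and summing, and that is what your reference to ``low/high-frequency estimates'' must unpack.
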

\begin{proof}
    We consider the functions $\phi_1,\ldots,\phi_{\frac{n}{2}-1}\in C^{\infty}(S^n)$ defined by the compatibility relation (\ref{compatibility a}), and we consider $\chi(0),\mathcal{O}\in C^{\infty}(S^n)$ given by the compatibility relations (\ref{compatibility chi})-(\ref{compatibility O}). Let $\big(\alpha,\chi\big)$ be the smooth solution of (\ref{main equation})-(\ref{definition of chi}) with asymptotic initial data $\big(\chi(0),\mathcal{O},h\big).$ We define $\phi$ in the region $\{u<0,\ v\geq0\}$ according to (\ref{expansion for phi at v=0}). By this formula, we already know that $\phi$ is self-similar and $\phi(-1,v)\in W^{\frac{n}{2},1}_{loc}\big([0,\infty);C^{\infty}(S^n)\big)\cap C^{\infty}((0,\infty)\times S^n).$ Moreover, we compute using (\ref{definition of chi}):
    \[\partial_v^{\frac{n}{2}-1}\phi(-1,v)=\phi_{\frac{n}{2}-1}+\int_0^v\alpha(v_1)dv_1=\chi(0)+\int_0^v\alpha(v_1)dv_1=\chi(v)\]
    
    We check that $\phi$ is indeed a solution of (\ref{wave equation phi}). Since $\phi$ is self-similar, we know that the wave equation (\ref{wave equation phi}) in $\{u<0,\ v\geq0\}$ is equivalent to (\ref{wave equation phi u=-1}) on $\{u=-1,\ v\geq0\}.$ Using our computation in Section \ref{deriving model problem}, we have:
    \[\partial_v^{\frac{n}{2}}\bigg(v(v+1)^2\partial_v^2\phi+\bigg(1-\frac{n}{2}\bigg)(v+1)^2\partial_v\phi+nv(v+1)\partial_v\phi-\Delta\phi\bigg)=\]\[=v(v+1)^2\partial_v^2\alpha+(v+1)^2\partial_v\alpha+(pv+q)v\partial_v\alpha+(p'v+q')\alpha+q''\chi-\Delta\alpha=0\]
    Using the compatibility relations from the previous section we also have that for any $0\leq a\leq\frac{n}{2}-1:$
    \[\bigg[\partial_v^{a}\bigg(v(v+1)^2\partial_v^2\phi+\bigg(1-\frac{n}{2}\bigg)(v+1)^2\partial_v\phi+nv(v+1)^2\partial_v\phi-\Delta\phi\bigg)\bigg]\bigg|_{v=0}=0\]
    Moreover, we recall that:
    \[\partial_v^{\frac{n}{2}-1}\bigg(v(v+1)^2\partial_v^2\phi+\bigg(1-\frac{n}{2}\bigg)(v+1)^2\partial_v\phi+nv(v+1)\partial_v\phi-\Delta\phi\bigg)=\]\[=v(v+1)^2\partial_v\alpha+\big[n(v+1)+p_av+q_a\big]v\alpha+(p'_av+q'_a)\chi+q_a''\partial_v^{a-1}\phi-\Delta\chi\]
    The above is in $W^{1,1}_{loc}\big([0,\infty);C^{\infty}(S^n)\big)\cap C^{\infty}((0,\infty)\times S^n),$ since $\partial_v\big(v\partial_v\alpha\big)\in L^{1}_{loc}\big([0,\infty);C^{\infty}(S^n)\big)$ by (\ref{alpha}) and the asymptotic expansion of $\alpha$. Thus, we can apply the fundamental theorem of calculus repeatedly and obtain that $\phi$ defined by (\ref{expansion for phi at v=0}) solves (\ref{wave equation phi}).

    We notice that the definition of $\phi$ in the beginning of the proof implies that $\big(\phi_0,h\big)$ determine the asymptotic data of $\phi$ at $v=0$. To prove uniqueness we notice that given a solution with $\phi_0=h=0,$ then for all $1\leq a\leq\frac{n}{2}-1$ we have $\phi_a=0,$ and also $\alpha\equiv\chi\equiv0.$ Thus, by Proposition \ref{asymptotic expansion phi proposition} we obtain that the solution satisfies (\ref{expansion for phi at v=0}), so it vanishes identically.

    We notice that by Theorem \ref{improved regularity theorem} and the compatibility relations we have that:
    \[\bigg(\big\|\partial_v^{\frac{n}{2}}\phi\big\|_{H^{s+1/2}}+\big\|\partial_{v}^{\frac{n}{2}+1}\phi\big\|_{{H^{s-1/2}}}\bigg)\bigg|_{(u,v)=(-1,1)}\lesssim \big\|\mathfrak{h}\big\|_{{H^{s}}}+\big\|\mathcal{O}\big\|_{{H^{s}}}+\big\|\phi_{\frac{n}{2}-1}\big\|_{{H^{s}}}\lesssim \big\|\mathfrak{h}\big\|_{{H^{s}}}+\big\|\phi_0\big\|_{{H^{s+n}}}\]
    By equation (\ref{Taylor expansion for phi a}) we obtain that for any $0\leq a\leq \frac{n}{2}-1$ and any $l\geq0:$
    \[\big\|\partial_v^a\phi_l\big\|^2_{H^{s+1/2}}(-1,1)\lesssim \sum_{k=a}^{n/2-1}\big\|\big(\phi_k\big)_l\big\|^2_{H^{s+1/2}}+\bigg(\int_0^1\int_0^{v_{n/2-a}}\cdots\int_0^{v_2}\big\|\alpha_l(v_1)\big\|_{H^{s+1/2}}dv_1dv_2\ldots dv_{n/2-a}\bigg)^2\lesssim\]\[\lesssim\big\|\big(\phi_0\big)_l\big\|^2_{H^{s+n}}+\bigg(\int_0^1\big\|\alpha_l(v)\big\|_{H^{s+1/2}}dv\bigg)^2\lesssim\big\|\big(\phi_0\big)_l\big\|^2_{H^{s+n}}+\bigg(\int_0^1\tau\big\|\alpha_l(\tau)\big\|_{H^{s+1/2}}d\tau\bigg)^2\]
    where we denote $\tau^2=v$ as in Section \ref{main equation estimates section}, and we also use the compatibility relations. Finally, this implies:
    \[\big\|\partial_v^a\phi_l\big\|^2_{H^{s+1/2}}(-1,1)\lesssim\big\|\big(\phi_0\big)_l\big\|^2_{H^{s+n}}+\int_0^1\tau^2\big\|\alpha_l(\tau)\big\|^2_{H^{s+1/2}}d\tau\lesssim \big\|\mathfrak{h}_l\big\|_{{H^{s}}}^2+\big\|\big(\phi_0\big)_l\big\|_{{H^{s+n}}}^2\]
    where the last inequality follows by the proof of Theorem \ref{improved regularity theorem} and the compatibility relation for $\mathcal{O}$. Summing over all $l\geq0$ we obtain (\ref{main estimate for phi forward direction}).
\end{proof}

\subsection{Asymptotic Completeness}\label{minkowksi asymptotic expansion subsection}
In this section, we prove that smooth self-similar solutions of the wave equation induce asymptotic data at $v=0$, which establishes \textit{asymptotic completeness}. Moreover, we also prove estimates on the asymptotic data in terms of the Cauchy initial data.

\begin{proposition}[Asymptotic completeness]\label{existence of asymptotic expansion smooth phi}
    Let $\phi\in C^{\infty}(S^n)$ be a smooth self-similar solution of (\ref{wave equation phi}) in the region $\{u<0,v>0\}.$ Then $\phi(-1,v)\in W^{\frac{n}{2},1}_{loc}\big([0,\infty);C^{\infty}(S^n)\big)\cap C^{\infty}((0,\infty)\times S^n)$, and there exist $\phi_0,h\in C^{\infty}(S^n)$ that determine the asymptotic data of $\phi$ at $v=0.$ Moreover, for any $s\geq1$ we have the estimate:
    \begin{equation}\label{main estimate for phi backward direction}
    \big\|\mathfrak{h}\big\|_{{H^{s}}}+\big\|\phi_0\big\|_{{H^{s+n}}}\lesssim \bigg(\sum_{a=0}^{\frac{n}{2}}\big\|\partial_v^a\phi\big\|_{H^{s+1/2}}+\big\|\partial_{v}^{\frac{n}{2}+1}\phi\big\|_{{H^{s-1/2}}}\bigg)\bigg|_{(u,v)=(-1,1)}
    \end{equation}
\end{proposition}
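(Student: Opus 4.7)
The plan is to apply Proposition \ref{existence of asymptotic expansion smooth} to $(\alpha, \chi) = (\partial_v^{n/2}\phi, \partial_v^{n/2-1}\phi)$ along $\{u = -1\}$, extract asymptotic data at $v = 0$, recover the lower-order limits $\phi_0, \ldots, \phi_{n/2-2}$ by integration, and finally convert Theorem \ref{main theorem backwards direction} into (\ref{main estimate for phi backward direction}) by elliptic inversion of the compatibility relation (\ref{compatibility O}).

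First, by the reduction in Section \ref{deriving model problem}, the pair $(\alpha, \chi)$ is a smooth solution of (\ref{main equation})-(\ref{definition of chi}) in the time variable $\tau = \sqrt{v} > 0$. Proposition \ref{existence of asymptotic expansion smooth} produces smooth $\chi(0), \mathcal{O}, h$ determining its asymptotic data, and the expansion (\ref{main expansion}) gives $\alpha \in L^1_{\text{loc}}([0,\infty); C^\infty(S^n))$, so that $\phi(-1,\cdot) \in W^{n/2,1}_{\text{loc}}([0,\infty); C^\infty(S^n))$. The lower-order limits are then built by descending induction on $a$: since $\chi$ extends continuously to $v = 0$, the identity $\partial_v^a\phi(-1,v) = \partial_v^a\phi(-1,1) - \int_v^1 \partial_v^{a+1}\phi(-1,v')\,dv'$ shows $\partial_v^a\phi(-1, \cdot)$ extends continuously to $[0,\infty)$ with smooth limit $\phi_a \in C^\infty(S^n)$, for every $0 \leq a \leq n/2-2$. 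The compatibility relations (\ref{compatibility a}) are recovered by evaluating (\ref{wave equation dva phi}) at $v = 0$ for $a = 0, \ldots, n/2-2$, and (\ref{compatibility O}) follows from the $a = n/2-1$ case together with $\lim_{v\to 0}v\partial_v\alpha = \mathcal{O}/2$ (immediate from (\ref{main expansion}) after translating $\tau$ back to $v$). Iterated integration of $\alpha$ then gives (\ref{expansion for phi at v=0}), confirming that $(\phi_0, h)$ determine the asymptotic data at $v = 0$.

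For the quantitative estimate, I would apply Theorem \ref{main theorem backwards direction} to $(\alpha, \chi)$ at $\tau = 1$, which (via $\partial_\tau\alpha|_{\tau=1} = 2\partial_v^{n/2+1}\phi|_{v=1}$ and $\chi|_{\tau=1} = \partial_v^{n/2-1}\phi|_{v=1}$) bounds $\|\mathfrak{h}\|_{H^s} + \|\mathcal{O}\|_{H^s} + \|\chi(0)\|_{H^{s+1/2}}$ by the right-hand side of (\ref{main estimate for phi backward direction}). Recursively substituting (\ref{compatibility a}) into (\ref{compatibility O}) expresses $\mathcal{O} = L\phi_0$ for an elliptic polynomial $L = c\Delta^{n/2} + Q(\Delta)$ of order $n$ with $c \neq 0$, so standard elliptic estimates on $S^n$ give $\|\phi_0\|_{H^{s+n}} \lesssim \|\mathcal{O}\|_{H^s} + \|\phi_0\|_{L^2}$. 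The residual norm $\|\phi_0\|_{L^2}$ is bounded by the Cauchy data by repeatedly applying (\ref{Taylor expansion for phi a}) at $v = 1$ to solve for $\phi_a$ in descending order, at each step trading $\phi_a$ for $\partial_v^a\phi(-1,1)$ and an iterated integral of $\alpha$ over $[0,1]$; the integrals of $\alpha$ are controlled by Proposition \ref{not sharp estimates}.

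The main obstacle is the elliptic inversion step. The operator $L$ acts on the spherical harmonic of eigenvalue $\lambda_k = k(k+n-1)$ as multiplication by a polynomial $P(\lambda_k)$ of degree $n/2$, which may vanish at finitely many $\lambda_k$, so $L$ need not be injective on $C^\infty(S^n)$. This low-frequency obstruction must be absorbed into $\|\phi_0\|_{L^2}$ through the standard elliptic a priori estimate, and then handled by the independent integral-representation bound described above; this is precisely why Proposition \ref{not sharp estimates} cannot be bypassed. A secondary bookkeeping point is that the logarithmic factor in Proposition \ref{not sharp estimates} is harmless since it appears inside an integral against the polynomial weight $\tau$ on $[0,1]$.
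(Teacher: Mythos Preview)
Your proposal is correct and follows essentially the same route as the paper: reduce to the model system for $(\alpha,\chi)$, invoke Proposition~\ref{existence of asymptotic expansion smooth} and Theorem~\ref{main theorem backwards direction}, recover the $\phi_a$ by descending integration of (\ref{Taylor expansion for phi a}), and close with the elliptic estimate coming from (\ref{compatibility O}). The only cosmetic differences are that the paper packages the recovery of the compatibility relations by citing Proposition~\ref{asymptotic expansion phi proposition} directly, and in the elliptic step it keeps the residual at the $H^{s+1/2}$ level (already available from the induction (\ref{induction hyp})) rather than dropping all the way to $L^2$; both choices work, and your observation about a possible finite-dimensional kernel of $L$ is exactly why a lower-order residual term is needed in either version.
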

\begin{proof}
    Along $\{u=-1\}$ we define $\alpha=\partial_v^{\frac{n}{2}}\phi,\chi=\partial_v^{\frac{n}{2}-1}\phi$. Then $(\alpha,\chi)$ is a smooth solution of (\ref{main equation})-(\ref{definition of chi}), so by Proposition \ref{existence of asymptotic expansion smooth} there exist $\chi(0),\mathcal{O},h\in C^{\infty}(S)$ which determine the asymptotic data of the solution at $v=0.$ Using the expansion of $\alpha$ at $v=0,$ we have that $\alpha\in L^1_{loc}\big([0,\infty);C^{\infty}(S)\big)\cap C^{\infty}((0,\infty)\times S),$ which implies the desired regularity statement for $\phi(-1,v).$ As a result, Proposition \ref{asymptotic expansion phi proposition} implies that $\big(\phi_0,h\big)$ determine the asymptotic expansion of $\phi$ at $v=0.$ We recall that by Theorem \ref{main theorem backwards direction} we have the estimate:
    \[\big\|\mathfrak{h}\big\|_{{H^{s}}}+\big\|\mathcal{O}\big\|_{{H^{s}}}+\big\|\chi(0)\big\|_{{H^{s+1/2}}}\lesssim \bigg(\big\|\alpha\big\|_{H^{s+1/2}}+\big\|\partial_{v}\alpha\big\|_{{H^{s-1/2}}}+\big\|\chi\big\|_{{H^{s+1/2}}}\bigg)\bigg|_{v=1}\]
    
    We prove by induction that for any $0\leq a\leq\frac{n}{2}-1$ we have:
    \begin{equation}\label{induction hyp}
        \big\|\phi_a\big\|_{{H^{s+1/2}}}\lesssim\sum_{k=a}^{n/2-1}\big\|\partial_v^k\phi\big\|_{{H^{s+1/2}}}\big|_{(-1,1)}+\big\|\alpha\big\|_{H^{s+1/2}}\big|_{v=1}+\big\|\partial_{v}\alpha\big\|_{{H^{s-1/2}}}\big|_{v=1}
    \end{equation}
    The case $a=\frac{n}{2}-1$ follows from Theorem \ref{main theorem backwards direction}. We assume now that (\ref{induction hyp}) holds for all $k\in[a+1,n/2-1]$. Using (\ref{Taylor expansion for phi a}), we have that for any $l\geq0:$
    \[\big\|(\phi_a)_l\big\|^2_{H^{s+1/2}}\lesssim\]\[\lesssim\big\|\partial_v^a\phi_l\big\|^2_{H^{s+1/2}}\big|_{(-1,1)}+\sum_{k=a+1}^{n/2-1}\big\|\big(\phi_k\big)_l\big\|^2_{H^{s+1/2}}+\bigg(\int_0^1\int_0^{v_{n/2-a}}\cdots\int_0^{v_2}\big\|\alpha_l(v_1)\big\|_{H^{s+1/2}}dv_1\ldots dv_{n/2-a}\bigg)^2\lesssim\]
    \[\lesssim\sum_{k=a}^{n/2-1}\big\|\partial_v^k\phi_l\big\|^2_{{H^{s+1/2}}}\big|_{(-1,1)}+\big\|\alpha_l\big\|^2_{H^{s+1/2}}\big|_{v=1}+\big\|\partial_{v}\alpha_l\big\|^2_{{H^{s-1/2}}}\big|_{v=1}+\bigg(\int_0^1\big\|\alpha_l(v)\big\|_{H^{s+1/2}}dv\bigg)^2\lesssim\]
    \[\lesssim\sum_{k=a}^{n/2-1}\big\|\partial_v^k\phi_l\big\|^2_{{H^{s+1/2}}}\big|_{(-1,1)}+\big\|\alpha_l\big\|_{H^{s+1/2}}^2\big|_{v=1}+\big\|\partial_{v}\alpha_l\big\|^2_{{H^{s-1/2}}}\big|_{v=1},\]
    where the last inequality follows by the proof of Theorem \ref{main theorem backwards direction} once we change variables in the integral to $\tau^2=v$. As a result, we proved (\ref{induction hyp}).

    Finally, we remark that the compatibility relation (\ref{compatibility O}) implies that:
    \[\big\|\phi_0\big\|_{{H^{s+n}}}\lesssim\big\|\phi_0\big\|_{H^{s+1/2}}+\big\|\mathcal{O}\big\|_{H^{s}}\]
    This is the essential elliptic estimate which allows us to recover the top order estimate for $\phi_0.$ Thus, Theorem \ref{main theorem backwards direction} and the estimate (\ref{induction hyp}) for $a=0$ imply (\ref{main estimate for phi backward direction}).
\end{proof}

\subsection{The Scattering Isomorphism}\label{minkowski scattering subsection}

In this section we complete the proof of the scattering theory for self-similar solutions of the wave equation in the $\{u<0,\ v>0\}$ region of Minkowski space $\mathbb{R}^{n+2},$ by constructing the scattering isomorphism between asymptotic data at $\{v=0\}$ and asymptotic data at $\{u=0\}$.

\begin{theorem}\label{scattering map in Minkowski theorem}
    For any $n\geq4$ even integer, we have a complete scattering theory for smooth self-similar solutions of the wave equation in the $\{u<0,\ v>0\}$ region of Minkowski space $\mathbb{R}^{n+2}$:

    \begin{enumerate}
    \item Existence and uniqueness of scattering states: for any $\phi_0,h\in C^{\infty}(S^n)$, there exists a unique smooth self-similar solution of (\ref{wave equation phi}) with asymptotic data at $v=0$ given by $\big(\phi_0,h\big)$, such that: \[\phi(-1,v)\in W^{\frac{n}{2},1}_{loc}\big([0,\infty);C^{\infty}(S^n)\big)\cap C^{\infty}((0,\infty)\times S^n),\]
    and the estimate (\ref{main estimate for phi forward direction}) holds.
    \item Asymptotic completeness: any smooth self-similar solution of (\ref{wave equation phi}) satisfies: \[\phi(u,1)\in W^{\frac{n}{2},1}_{loc}\big((-\infty,0];C^{\infty}(S^n)\big)\cap C^{\infty}((-\infty,0)\times S^n),\]
    induces asymptotic data at $u=0$ given by $\underline{\phi_0},\underline{h}\in C^{\infty}(S^n)$, and satisfies the estimate (\ref{main estimate for phi backward direction}).
    \item The scattering isomorphism: for the above solution, we define the scattering map $\big(\phi_0,\mathfrak{h}\big)\mapsto\big(\underline{\phi_0},\underline{\mathfrak{h}}\big)$, where $\mathfrak{h}:=h-(\log\nabla)\mathcal{O},\ \underline{\mathfrak{h}}:=\underline{h}-(\log\nabla)\underline{\mathcal{O}}$ and $\mathcal{O},\ \underline{\mathcal{O}}$ are defined in terms of $\phi_0,\underline{\phi_0}$ by (\ref{compatibility O}). Then, for any $s\geq1$ we have the estimate:
    \begin{equation}\label{scattering estimate Minkowski}        \big\|\underline{\mathfrak{h}}\big\|_{{H^{s}}}+\big\|\underline{\phi_0}\big\|_{{H^{s+n}}}\lesssim\big\|\mathfrak{h}\big\|_{{H^{s}}}+\big\|\phi_0\big\|_{{H^{s+n}}},
    \end{equation}
    and the scattering map extends as a Banach space isomorphism on $H^{s+n}(S^n)\times H^{s}(S^n)$ for any $s\geq1$.
\end{enumerate}
\end{theorem}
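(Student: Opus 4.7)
The plan is to assemble Theorem \ref{scattering map in Minkowski theorem} from the two principal results already established in this section. Part (1) is exactly Proposition \ref{existence of smooth phi}. For part (2), I would work with $\psi(u,v):=\phi(-v,-u)$, which is again a smooth self-similar solution of (\ref{wave equation phi}) on $\{u<0,v>0\}$: the smoothness assumption $\phi(u,1)\in W^{\frac{n}{2},1}_{loc}((-\infty,0];C^\infty(S^n))$ translates under this change of coordinates into the corresponding assumption for $\psi(-1,v)$ at $\{v=0\}$, so applying Proposition \ref{existence of asymptotic expansion smooth phi} to $\psi$ yields smooth $\underline{\phi_0},\underline{h}\in C^\infty(S^n)$ that determine the asymptotic data of $\phi$ at $\{u=0\}$, together with the corresponding analogue of (\ref{main estimate for phi backward direction}).

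For part (3), given smooth $(\phi_0,\mathfrak{h})$ I would recover $\mathcal{O}$ from $\phi_0$ via (\ref{compatibility O}), set $h:=\mathfrak{h}+(\log\nabla)\mathcal{O}$, produce the unique smooth solution $\phi$ with asymptotic data $(\phi_0,h)$ at $\{v=0\}$ via Proposition \ref{existence of smooth phi}, then apply the $\{u=0\}$-analogue of Proposition \ref{existence of asymptotic expansion smooth phi} to obtain $(\underline{\phi_0},\underline{h})$, and finally set $\underline{\mathfrak{h}}:=\underline{h}-(\log\nabla)\underline{\mathcal{O}}$ with $\underline{\mathcal{O}}$ given by applying (\ref{compatibility O}) to $\underline{\phi_0}$. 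The scattering estimate (\ref{scattering estimate Minkowski}) would then be obtained by chaining (\ref{main estimate for phi forward direction}) for $\phi$ with (\ref{main estimate for phi backward direction}) applied to $\psi$: the former controls $\sum_{a=0}^{\frac{n}{2}}\|\partial_v^a\phi\|_{H^{s+1/2}}+\|\partial_v^{\frac{n}{2}+1}\phi\|_{H^{s-1/2}}$ at $(-1,1)$ by $\|\mathfrak{h}\|_{H^s}+\|\phi_0\|_{H^{s+n}}$, while the latter bounds $\|\underline{\mathfrak{h}}\|_{H^s}+\|\underline{\phi_0}\|_{H^{s+n}}$ by the analogous sum of norms of $\partial_v^a\psi$ at $(-1,1)$.

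The principal obstacle I foresee is matching these two intermediate expressions, since a direct computation gives $\partial_v^a\psi|_{(-1,1)}=(-1)^a\partial_u^a\phi|_{(-1,1)}$, whereas (\ref{main estimate for phi forward direction}) only controls $\partial_v$-derivatives of $\phi$ at $(-1,1)$. This I would resolve using the self-similarity identity $u\partial_u\phi+v\partial_v\phi=0$ together with (\ref{wave equation phi}): iterating these identities expresses $\partial_u^a\phi|_{(-1,1)}$ as an explicit constant-coefficient combination of $\partial_v^k\phi|_{(-1,1)}$ for $0\le k\le a$, so the required $H^{s+1/2}$ and $H^{s-1/2}$ norms transfer without any loss in spatial regularity. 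Composing everything then yields (\ref{scattering estimate Minkowski}) for smooth data.

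To finish part (3), I would extend the scattering map by density from $C^\infty(S^n)\times C^\infty(S^n)$ to $H^{s+n}(S^n)\times H^s(S^n)$ using (\ref{scattering estimate Minkowski}). The inverse map is constructed by running the same procedure with the roles of $\{v=0\}$ and $\{u=0\}$ interchanged, which is equivalent to applying the forward construction to $\psi$ with input $(\underline{\phi_0},\underline{\mathfrak{h}})$. The uniqueness of smooth scattering states from Proposition \ref{existence of smooth phi} ensures that on $C^\infty\times C^\infty$ data the two maps are mutual inverses, and this identity extends by continuity to the full Banach spaces, yielding the desired Banach space isomorphism on $H^{s+n}(S^n)\times H^s(S^n)$.
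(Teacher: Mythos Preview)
Your proposal is correct and follows essentially the same route as the paper: Part (1) is Proposition \ref{existence of smooth phi}, Part (2) is Proposition \ref{existence of asymptotic expansion smooth phi} applied to $\psi(u,v)=\phi(-v,-u)$, and Part (3) chains (\ref{main estimate for phi forward direction}) with (\ref{main estimate for phi backward direction}) for $\psi$, converting $\partial_v^a\psi|_{(-1,1)}$ into $\partial_v^k\phi|_{(-1,1)}$ via self-similarity. Two small remarks: in Part (2) the $W^{\frac{n}{2},1}_{loc}$ regularity is a \emph{conclusion} of Proposition \ref{existence of asymptotic expansion smooth phi}, not an assumption you feed in; and for the derivative conversion you do not need (\ref{wave equation phi}) at all, since $\phi(u,v)=\phi(-1,-v/u)$ already gives $\partial_u^a\phi|_{(-1,1)}$ as a constant-coefficient combination of $\partial_v^k\phi|_{(-1,1)}$, $0\le k\le a$ --- invoking the wave equation would introduce $\Delta\phi$ terms and spoil the ``no loss of spatial regularity'' claim.
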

\begin{proof}
    The existence and uniqueness of scattering states follows by Proposition \ref{existence of smooth phi}. Moreover, for any $s\geq1$ we have the estimate (\ref{main estimate for phi forward direction}) holds.

    We define $\psi\in C^{\infty}$ in the region $\{u<0,\ v>0\}$ by $\psi(u,v)=\phi(-v,-u),$ so $\psi$ is also a smooth self-similar solution of (\ref{wave equation phi}). By Proposition \ref{existence of asymptotic expansion smooth phi}, we have that $\psi(-1,v)\in W^{\frac{n}{2},1}_{loc}\big([0,\infty);C^{\infty}(S^n)\big)\cap C^{\infty}((0,\infty)\times S^n)$, and there exist $\underline{\phi_0},\underline{h}\in C^{\infty}(S^n)$ that determine the asymptotic data of $\psi$ at $v=0.$ Equivalently, we have that $\phi(u,1)\in W^{\frac{n}{2},1}_{loc}\big((-\infty,0];C^{\infty}(S^n)\big)\cap C^{\infty}((-\infty,0)\times S^n)$, and $\big(\underline{\phi_0},\underline{h}\big)$ determine the asymptotic data of $\phi$ at $u=0$. Thus, we proved asymptotic completeness.

    According to Proposition \ref{existence of asymptotic expansion smooth phi}, we also have the the estimate:
    \begin{equation}
    \big\|\underline{\mathfrak{h}}\big\|_{{H^{s}}}+ \big\|\underline{\phi_0}\big\|_{{H^{s+n}}}\lesssim \bigg(\sum_{a=0}^{\frac{n}{2}}\big\|\partial_v^a\psi\big\|_{H^{s+1/2}}+\big\|\partial_{v}^{\frac{n}{2}+1}\psi\big\|_{{H^{s-1/2}}}\bigg)\bigg|_{(u,v)=(-1,1)}
    \end{equation}
    where $\underline{\mathfrak{h}}:=\underline{h}-(\log\nabla)\underline{\mathcal{O}}$ and $\underline{\mathcal{O}}$ is defined in terms of $\underline{\phi_0}$ by (\ref{compatibility O}). We use the self-similarity of $\phi$ to get:
    \[\bigg(\sum_{a=0}^{\frac{n}{2}}\big\|\partial_v^a\psi\big\|_{H^{s+1/2}}+\big\|\partial_{v}^{\frac{n}{2}+1}\psi\big\|_{{H^{s-1/2}}}\bigg)\bigg|_{(u,v)=(-1,1)}\lesssim\bigg(\sum_{a=0}^{\frac{n}{2}}\big\|\partial_u^a\phi\big\|_{H^{s+1/2}}+\big\|\partial_{u}^{\frac{n}{2}+1}\phi\big\|_{{H^{s-1/2}}}\bigg)\bigg|_{(u,v)=(-1,1)}\lesssim\]\[\lesssim\bigg(\sum_{a=0}^{\frac{n}{2}}\big\|\partial_v^a\phi\big\|_{H^{s+1/2}}+\big\|\partial_{v}^{\frac{n}{2}+1}\phi\big\|_{{H^{s-1/2}}}\bigg)\bigg|_{(u,v)=(-1,1)}\]
    which proves the estimate (\ref{main estimate for phi backward direction}).
    
    Combining inequalities (\ref{main estimate for phi forward direction}) and (\ref{main estimate for phi backward direction}), we obtain the estimate (\ref{scattering estimate Minkowski}). This holds for smooth asymptotic data, but by density we can extend the scattering map to asymptotic data in $H^{s+n}(S^n)\times H^{s}(S^n)$. Since the reverse inequality also holds, we conclude that the scattering map is an isomorphism.
\end{proof}

\section{Scattering for Solutions of the Wave Equation on de Sitter Space}\label{scattering in de sitter section}

In this section we prove Theorem \ref{main theorem intro}, by completing the construction of the scattering isomorphism between asymptotic data at $\mathcal{I}^-$ and asymptotic data at $\mathcal{I}^+$ for the wave equation on de Sitter space.

We first need to define a precise notion of asymptotic data for (\ref{wave equation de sitter intro}). This is based on the definitions from Section \ref{minkowski space scattering section} and the correspondence from Section \ref{set up section}. Let $\Tilde{\phi}$ be a solution of (\ref{wave equation de sitter intro}). We recall that according to Lemma \ref{correspondence lemma}, the function $\phi:\{u<0,v>0\}\subset\mathbb{R}^{n+2}\rightarrow\mathbb{R}$ defined by $\phi=\Tilde{\phi}\circ\pi$ is the corresponding self-similar solution of (\ref{wave equation phi}). In particular, we have:
\[\Tilde{\phi}\circ\frac{1}{2}\log(v)=\phi(-1,v),\ \Tilde{\phi}\circ\bigg(-\frac{1}{2}\log\bigg)(-u)=\phi(u,1)\]
Moreover, asymptotic data for $\Tilde{\phi}$ at $\mathcal{I}^-$ corresponds to asymptotic data for $\phi$ at $v=0,$ and similarly asymptotic data for $\Tilde{\phi}$ at $\mathcal{I}^+$ corresponds to asymptotic data for $\phi$ at $u=0.$ Based on this correspondence, we define:

\begin{definition}Let $\phi_0,\mathfrak{h}\in C^{\infty}(S^n)$. We say that $\Tilde{\phi}:\mathbb{R}\times S^n\rightarrow\mathbb{R}$ solves (\ref{wave equation de sitter intro}) with asymptotic data at $\mathcal{I}^-$ given by $\big(\phi_0,\mathfrak{h}\big)$ if $\Tilde{\phi}\circ\frac{1}{2}\log\in W^{\frac{n}{2},1}_{loc}\big([0,\infty);C^{\infty}(S^n)\big)\cap C^{\infty}((0,\infty)\times S^n)$ and:
    \begin{equation}\label{expansion for tilde phi at I-}
            \Tilde{\phi}(T)=\sum_{k=0}^{n/2-1}\frac{1}{k!}\phi_k\cdot e^{2kT}+\int_0^{e^{2T}}\int_0^{v_{n/2}}\cdots\int_0^{v_2}\alpha(v_1)dv_1dv_2\ldots dv_{n/2},
    \end{equation}
    where $\phi_1,\ldots,\phi_{\frac{n}{2}-1}\in C^{\infty}(S^n)$ are defined by the compatibility relation (\ref{compatibility a}), $\chi(0),\mathcal{O}\in C^{\infty}(S^n)$ are defined by the compatibility relations (\ref{compatibility chi})-(\ref{compatibility O}), and $\big(\alpha,\chi\big)$ is the smooth solution of (\ref{main equation})-(\ref{definition of chi}) with asymptotic initial data $\big(\chi(0),\mathcal{O},h\big),$ for $h:=\mathfrak{h}+(\log\nabla)\mathcal{O}.$ 
\end{definition}
Using the expansion of $\alpha$ at $v=0,$ we get that (\ref{expansion for tilde phi at I-}) implies the expansion of $\Tilde{\phi}$ at $\mathcal{I}^-$:
\[\Tilde{\phi}(T)=\sum_{k=0}^{n/2-1}\frac{1}{k!}\phi_k\cdot e^{2kT}+\frac{1}{(n/2)!}\mathcal{O}\cdot Te^{nT}+\frac{1}{(n/2)!}\Tilde{h}\cdot e^{nT}+O\big(T^2e^{(n+2)T}\big)\]
where $\Tilde{h}$ is obtained by renormalizing $h$ with a linear factor of $\mathcal{O}.$ This provides a rigorous proof of the formal expansion (\ref{expansion for phi intro}) in the introduction. 

We have a similar definition for asymptotic data at $\mathcal{I}^+:$

\begin{definition}Let $\underline{\phi_0},\underline{\mathfrak{h}}\in C^{\infty}(S^n)$. We say that $\Tilde{\phi}:\mathbb{R}\times S^n\rightarrow\mathbb{R}$ solves (\ref{wave equation de sitter intro}) with asymptotic data at $\mathcal{I}^+$ given by $\big(\underline{\phi_0},\underline{\mathfrak{h}}\big)$ if $\Tilde{\phi}\circ\big(-\frac{1}{2}\log\big)\in W^{\frac{n}{2},1}_{loc}\big([0,\infty);C^{\infty}(S^n)\big)\cap C^{\infty}((0,\infty)\times S^n)$ and:
    \begin{equation}\label{expansion for tilde phi at I+}
            \Tilde{\phi}(T)=\sum_{k=0}^{n/2-1}\frac{1}{k!}\underline{\phi_k}\cdot e^{-2kT}+\int_0^{e^{-2T}}\int_0^{u_{n/2}}\cdots\int_0^{u_2}\underline{\alpha}(u_1)du_1du_2\ldots du_{n/2},
    \end{equation}
    where $\underline{\phi_1},\ldots,\underline{\phi_{\frac{n}{2}-1}}\in C^{\infty}(S^n)$ are defined by the compatibility relation (\ref{compatibility a}), $\underline{\chi}(0),\underline{\mathcal{O}}\in C^{\infty}(S^n)$ are defined by the compatibility relations (\ref{compatibility chi})-(\ref{compatibility O}), and $\big(\underline{\alpha},\underline{\chi}\big)$ is the smooth solution of (\ref{main equation})-(\ref{definition of chi}) with asymptotic initial data $\big(\underline{\chi}(0),\underline{\mathcal{O}},\underline{h}\big),$ for $\underline{h}:=\underline{\mathfrak{h}}+(\log\nabla)\underline{\mathcal{O}}.$  
\end{definition}

Using the correspondence between $\Tilde{\phi}$ and $\phi$ that we explained above, the proof of Theorem \ref{main theorem intro} is a direct consequence of Theorem \ref{scattering map in Minkowski theorem}:

\textit{Proof of Theorem \ref{main theorem intro}.}
    We begin by proving the existence and uniqueness of scattering states. For any $\phi_0,\mathfrak{h}\in C^{\infty}(S^n)$, we set $h:=\mathfrak{h}+(\log\nabla)\mathcal{O},$ where $\mathcal{O}$ is defined in terms of $\phi_0$ by (\ref{compatibility O}). By Theorem \ref{scattering map in Minkowski theorem}, there exists a unique $\phi$ self-similar solution of (\ref{wave equation phi}) with asymptotic data at $v=0$ given by $\big(\phi_0,h\big)$. We define $\Tilde{\phi}:\mathbb{R}\times S^n\rightarrow\mathbb{R}$ to be $\Tilde{\phi}=\phi\circ\iota,$ so $\Tilde{\phi}$ solves (\ref{wave equation de sitter intro}) by Section \ref{change of coordinates subsection}. Using the self-similarity of $\phi$ we have:
    \[\phi(-1,v)=\phi\bigg(x=2\sqrt{v},T=\frac{1}{2}\log v\bigg)=\phi\bigg(x=1,T=\frac{1}{2}\log v\bigg)=\Tilde{\phi}\bigg(T=\frac{1}{2}\log v\bigg)\]
    As a result, all the requirements in the definition of a smooth solution of (\ref{wave equation de sitter intro}) with asymptotic data at $\mathcal{I}^-$ are satisfied as a consequence of Theorem \ref{scattering map in Minkowski theorem} and (\ref{expansion for phi at v=0}). In order to prove uniqueness, we consider a smooth solution of (\ref{wave equation de sitter intro}) with vanishing asymptotic data at $\mathcal{I}^-$. We associate $\phi=\Tilde{\phi}\circ\pi$, the corresponding self-similar solution of (\ref{wave equation phi}), which also has vanishing asymptotic data at $v=0$. Theorem \ref{scattering map in Minkowski theorem} implies that $\phi\equiv0,$ so we also obtain that $\Tilde{\phi}\equiv0.$

    Next, we prove asymptotic completeness. Let $\Tilde{\phi}:\mathbb{R}\times S^n\rightarrow\mathbb{R}$ be a smooth solution of (\ref{wave equation de sitter intro}), and consider $\phi=\Tilde{\phi}\circ\pi$ the corresponding self-similar solution of (\ref{wave equation phi}). Note that by the self-similarity of $\phi$ we have:
    \[\phi(u,1)=\phi\bigg(x=1,T=-\frac{1}{2}\log(-u)\bigg)=\Tilde{\phi}\bigg(T=-\frac{1}{2}\log(-u)\bigg)\]
    Moreover, we have from Theorem \ref{scattering map in Minkowski theorem} that $\phi(u,1)\in W^{\frac{n}{2},1}_{loc}\big((-\infty,0];C^{\infty}(S^n)\big)\cap C^{\infty}((-\infty,0)\times S^n)$ satisfies (\ref{expansion for phi at u=0}), and there exist $\underline{\phi_0},\underline{h}\in C^{\infty}(S^n)$ which determine the asymptotic data of $\phi$ at $u=0.$ Setting $\underline{\mathfrak{h}}:=\underline{h}-(\log\nabla)\underline{\mathcal{O}},$ where $\underline{\mathcal{O}}$ is defined in terms of $\underline{\phi_0}$ by (\ref{compatibility O}), we obtain that $\Tilde{\phi}$ solves (\ref{wave equation de sitter intro}) with asymptotic data at $\mathcal{I}^+$ given by $\big(\underline{\phi_0},\underline{\mathfrak{h}}\big)$.

    Finally, we define the scattering map $\big(\phi_0,\mathfrak{h}\big)\mapsto\big(\underline{\phi_0},\underline{\mathfrak{h}}\big)$, which is invertible since we can repeat the above arguments starting with data at $\mathcal{I}^+.$ The scattering map is bounded by  (\ref{scattering estimate Minkowski}), and similarly its inverse is also bounded. These estimates hold for smooth asymptotic data, but by density we can extend the scattering map to asymptotic data in $H^{s+n}(S^n)\times H^{s}(S^n)$. We conclude that the scattering map extends as an isomorphism on $H^{s+n}(S^n)\times H^s(S^n)$.
\qed

\section{Appendix}
\subsection{Littlewood-Paley decomposition}\label{Littlewood-Paley decomposition}

We present some of the elementary properties of the Laplacian operator on the sphere that we use throughout the paper for frequency decomposition arguments. We point out that all these properties also hold in the case of a general compact Riemannian manifold $\big(M^n,g_M\big)$. As remarked in the introduction, this implies that our arguments also apply in the case of the generalized de Sitter space.

In the paper we denote by $\Delta=\Delta_{g_{S^n}}$ the Laplace operator on the standard sphere. This satisfies the following properties, according to \cite{schoenyau}:
\begin{itemize}
    \item Each eigenvalue of $\Delta$ is real and has finite multiplicity.
    \item If we repeat each eigenvalue according to its multiplicity we have $\Sigma=\{\lambda_i\}_{i=0}^{\infty}$. Moreover, we have:
    \[0=\lambda_0<\lambda_1\leq\lambda_2\leq\cdots\]
    \item There exists an orthonormal basis of $L^2(S^n)$ given by $\{\varphi_i\}_{i=0}^{\infty},$ where $\varphi_i$ is an eigenfunction of $\lambda_i.$ Moreover, we have $\varphi_i\in C^{\infty}(S^n).$
\end{itemize}

We denote by $\langle\cdot,\cdot\rangle$ the standard inner product on $L^2(S^n)$. Using the orthonormal basis of $L^2(S^n)$, we can write for any $f\in L^2(S^n):$
\[\big\|f\big\|_{L^2}^2=\sum_{i=0}^{\infty}\big|\langle f,\varphi_i\rangle\big|^2\]
Similarly, for any $s\geq0$ we define the Sobolev space:
\[H^s(S^n)=\bigg\{f\in L^2(S^n):\ \big\|f\big\|_{H^s}^2:=\sum_{i=0}^{\infty}\big(1+\lambda_i^2\big)^{s}\cdot\big|\langle f,\varphi_i\rangle\big|^2<\infty\bigg\}\]

We define the frequency projection operators for any $f\in L^2(S^n):$
\[P_{\leq a}f=\sum_{\lambda_i\leq a}\langle f,\varphi_i\rangle\varphi_i,\ P_{\geq a}f=\sum_{\lambda_i\geq a}\langle f,\varphi_i\rangle\varphi_i,\ P_{(a,b]}f=\sum_{a<\lambda_i\leq b}\langle f,\varphi_i\rangle\varphi_i\]
We also introduce the notation $f_l=P_{(2^{l-1},2^{l}]}f$ for any $l\geq1,$ and $f_0=P_{\leq1}f.$ We obtain the Littlewood-Paley decomposition for any $f\in L^2(S^n):$
\[f=f_0+\sum_{l=1}^{\infty}f_l\]
We point out that the sequence $\{2^l\}$ can be replaced by something more general, without changing any of the arguments. Using this decomposition, we can equivalently write the norms on $H^s(S^n)$ as:
\[\big\|f\big\|_{H^s}^2\sim\sum_{l=0}^{\infty}2^{2ls}\big\|f_l\big\|_{L^2}^2\]

In the paper, we will construct Fourier multipliers at the level of $f_l,$ which is more robust than constructing Fourier multipliers at the level of each individual mode. For example, a key definition is:
\[\mathfrak{h}:=h-(\log\nabla)\mathcal{O}:=h-\sum_{l=1}^{\infty}l\log2\cdot\mathcal{O}_l\]
Alternatively, one could instead define $\mathfrak{h}':=h-\sum_{i=0}^{\infty}\log\langle\lambda_i\rangle\cdot\langle\mathcal{O},\varphi_i\rangle\varphi_i$. All our estimates would still hold in this case, since we have:
\[\mathfrak{h}-\mathfrak{h}'=\sum_{\lambda_i\in[0,1]}\log\langle\lambda_i\rangle\cdot\langle\mathcal{O},\varphi_i\rangle\varphi_i+\sum_{l=1}^{\infty}\sum_{\lambda_i\in(2^{l-1},2^l]}\log\frac{\langle\lambda_i\rangle}{2^l}\cdot\langle\mathcal{O},\varphi_i\rangle\varphi_i\]
which implies $\big\|\mathfrak{h}-\mathfrak{h}'\big\|_{H^s}^2\lesssim\big\|\mathcal{O}\big\|_{H^s}^2.$

\bibliographystyle{alpha}
\bibliography{refs}

\end{document}